  \let\fiverm\fivrm
\def\@picture(#1,#2)(#3,#4){%
  \@picht #2\unitlength
  \setbox\@picbox\hbox to #1\unitlength\bgroup 
  \let\endpicture=\!latexendpicture
  \let\frame=\!latexframe
  \let\linethickness=\!latexlinethickness
  \let\multiput=\!latexmultiput
  \let\put=\!latexput
  \hskip -#3\unitlength \lower #4\unitlength \hbox\bgroup}
\def\PiC{P\kern-.12em\lower.5ex\hbox{I}\kern-.075emC}
\def\PiCTeX{\PiC\kern-.11em\TeX}
\def\!ifnextchar#1#2#3{%
  \let\!testchar=#1%
  \def\!first{#2}%
  \def\!second{#3}%
  \futurelet\!nextchar\!testnext}
\def\!testnext{%
  \ifx \!nextchar \!spacetoken 
    \let\!next=\!skipspacetestagain
  \else
    \ifx \!nextchar \!testchar
      \let\!next=\!first
    \else 
      \let\!next=\!second 
    \fi 
  \fi
  \!next}
\def\\{\!skipspacetestagain} 
\def\\ {\futurelet\!nextchar\!testnext} 
\def\\{\let\!spacetoken= } \\  
\def\!tfor#1:=#2\do#3{%
  \edef\!fortemp{#2}%
  \ifx\!fortemp\!empty 
    \else
    \!tforloop#2\!nil\!nil\!!#1{#3}%
  \fi}
\def\!tforloop#1#2\!!#3#4{%
  \def#3{#1}%
  \ifx #3\!nnil
    \let\!nextwhile=\!fornoop
  \else
    #4\relax
    \let\!nextwhile=\!tforloop
  \fi 
  \!nextwhile#2\!!#3{#4}}
\def\!etfor#1:=#2\do#3{%
  \def\!!tfor{\!tfor#1:=}%
  \edef\!!!tfor{#2}%
  \expandafter\!!tfor\!!!tfor\do{#3}}
\def\!cfor#1:=#2\do#3{%
  \edef\!fortemp{#2}%
  \ifx\!fortemp\!empty 
  \else
    \!cforloop#2,\!nil,\!nil\!!#1{#3}%
  \fi}
\def\!cforloop#1,#2\!!#3#4{%
  \def#3{#1}%
  \ifx #3\!nnil
    \let\!nextwhile=\!fornoop 
  \else
    #4\relax
    \let\!nextwhile=\!cforloop
  \fi
  \!nextwhile#2\!!#3{#4}}
\def\!ecfor#1:=#2\do#3{%
  \def\!!cfor{\!cfor#1:=}%
  \edef\!!!cfor{#2}%
  \expandafter\!!cfor\!!!cfor\do{#3}}
\def\!empty{}
\def\!nnil{\!nil}
\def\!fornoop#1\!!#2#3{}
\def\!ifempty#1#2#3{%
  \edef\!emptyarg{#1}%
  \ifx\!emptyarg\!empty
    #2%
  \else
    #3%
  \fi}
\def\!getnext#1\from#2{%
  \expandafter\!gnext#2\!#1#2}%
\def\!gnext\\#1#2\!#3#4{%
  \def#3{#1}%
  \def#4{#2\\{#1}}%
  \ignorespaces}
\def\!getnextvalueof#1\from#2{%
  \expandafter\!gnextv#2\!#1#2}%
\def\!gnextv\\#1#2\!#3#4{%
  #3=#1%
  \def#4{#2\\{#1}}%
  \ignorespaces}
\def\!copylist#1\to#2{%
  \expandafter\!!copylist#1\!#2}
\def\!!copylist#1\!#2{%
  \def#2{#1}\ignorespaces}
\def\!wlet#1=#2{%
  \let#1=#2 
  \wlog{\string#1=\string#2}}
\def\!listaddon#1#2{%
  \expandafter\!!listaddon#2\!{#1}#2}
\def\!!listaddon#1\!#2#3{%
  \def#3{#1\\#2}}
\def\!rightappend#1\withCS#2\to#3{\expandafter\!!rightappend#3\!#2{#1}#3}
\def\!!rightappend#1\!#2#3#4{\def#4{#1#2{#3}}}
\def\!leftappend#1\withCS#2\to#3{\expandafter\!!leftappend#3\!#2{#1}#3}
\def\!!leftappend#1\!#2#3#4{\def#4{#2{#3}#1}}
\def\!lop#1\to#2{\expandafter\!!lop#1\!#1#2}
\def\!!lop\\#1#2\!#3#4{\def#4{#1}\def#3{#2}}
\def\!loop#1\repeat{\def\!body{#1}\!iterate}
\def\!iterate{\!body\let\!next=\!iterate\else\let\!next=\relax\fi\!next}
\def\!!loop#1\repeat{\def\!!body{#1}\!!iterate}
\def\!!iterate{\!!body\let\!!next=\!!iterate\else\let\!!next=\relax\fi\!!next}
\def\!removept#1#2{\edef#2{\expandafter\!!removePT\the#1}}
{\catcode`p=12 \catcode`t=12 \gdef\!!removePT#1pt{#1}}
\def\placevalueinpts of <#1> in #2 {%
  \!removept{#1}{#2}}
\def\!mlap#1{\hbox to 0pt{\hss#1\hss}}
\def\!vmlap#1{\vbox to 0pt{\vss#1\vss}}
\def\!not#1{%
  #1\relax
    \!switchfalse
  \else
    \!switchtrue
  \fi
  \if!switch
  \ignorespaces}
\def\wlog#1{}    
\newdimen\headingtoplotskip     
\newdimen\linethickness         
\newdimen\longticklength        
\newdimen\plotsymbolspacing     
\newdimen\shortticklength       
\newdimen\stackleading          
\newdimen\tickstovaluesleading  
\newdimen\totalarclength        
\newdimen\valuestolabelleading  
\newbox\!boxA                   
\newbox\!boxB                   
\newbox\!picbox                 
\newbox\!plotsymbol             
\newbox\!putobject              
\newbox\!shadesymbol            
\newdimen\!Xleft                
\newdimen\!Xright               
\newdimen\!Xsave                
\newdimen\!Ybot                 
\newdimen\!Ysave                
\newdimen\!Ytop                 
\newdimen\!angle                
\newdimen\!arclength            
\newdimen\!areabloc             
\newdimen\!arealloc             
\newdimen\!arearloc             
\newdimen\!areatloc             
\newdimen\!bshrinkage           
\newdimen\!checkbot             
\newdimen\!checkleft            
\newdimen\!checkright           
\newdimen\!checktop             
\newdimen\!dimenA               
\newdimen\!dimenB               
\newdimen\!dimenC               
\newdimen\!dimenD               
\newdimen\!dimenE               
\newdimen\!dimenF               
\newdimen\!dimenG               
\newdimen\!dimenH               
\newdimen\!dimenI               
\newdimen\!distacross           
\newdimen\!downlength           
\newdimen\!dp                   
\newdimen\!dshade               
\newdimen\!dxpos                
\newdimen\!dxprime              
\newdimen\!dypos                
\newdimen\!dyprime              
\newdimen\!ht                   
\newdimen\!leaderlength         
\newdimen\!lshrinkage           
\newdimen\!midarclength         
\newdimen\!offset               
\newdimen\!plotheadingoffset    
\newdimen\!plotsymbolxshift     
\newdimen\!plotsymbolyshift     
\newdimen\!plotxorigin          
\newdimen\!plotyorigin          
\newdimen\!rootten              
\newdimen\!rshrinkage           
\newdimen\!shadesymbolxshift    
\newdimen\!shadesymbolyshift    
\newdimen\!tenAa                
\newdimen\!tenAc                
\newdimen\!tenAe                
\newdimen\!tshrinkage           
\newdimen\!uplength             
\newdimen\!wd                   
\newdimen\!wmax                 
\newdimen\!wmin                 
\newdimen\!xB                   
\newdimen\!xC                   
\newdimen\!xE                   
\newdimen\!xM                   
\newdimen\!xS                   
\newdimen\!xaxislength          
\newdimen\!xdiff                
\newdimen\!xleft                
\newdimen\!xloc                 
\newdimen\!xorigin              
\newdimen\!xpivot               
\newdimen\!xpos                 
\newdimen\!xprime               
\newdimen\!xright               
\newdimen\!xshade               
\newdimen\!xshift               
\newdimen\!xtemp                
\newdimen\!xunit                
\newdimen\!xxE                  
\newdimen\!xxM                  
\newdimen\!xxS                  
\newdimen\!xxloc                
\newdimen\!yB                   
\newdimen\!yC                   
\newdimen\!yE                   
\newdimen\!yM                   
\newdimen\!yS                   
\newdimen\!yaxislength          
\newdimen\!ybot                 
\newdimen\!ydiff                
\newdimen\!yloc                 
\newdimen\!yorigin              
\newdimen\!ypivot               
\newdimen\!ypos                 
\newdimen\!yprime               
\newdimen\!yshade               
\newdimen\!yshift               
\newdimen\!ytemp                
\newdimen\!ytop                 
\newdimen\!yunit                
\newdimen\!yyE                  
\newdimen\!yyM                  
\newdimen\!yyS                  
\newdimen\!yyloc                
\newdimen\!zpt                  
\newif\if!axisvisible           
\newif\if!gridlinestoo          
\newif\if!keepPO                
\newif\if!placeaxislabel        
\newif\if!switch                
\newif\if!xswitch               
\newtoks\!axisLaBeL             
\newtoks\!keywordtoks           
\newwrite\!replotfile           
\def\!cosrotationangle{1}      
\def\!sinrotationangle{0}      
\def\!xpivotcoord{0}           
\def\!xref{0}                  
\def\!xshadesave{0}            
\def\!ypivotcoord{0}           
\def\!yref{0}                  
\def\!yshadesave{0}            
\def\!zero{0}                  
\let\wlog=\!!!wlog
\def\normalgraphs{%
  \longticklength=.4\baselineskip
  \shortticklength=.25\baselineskip
  \tickstovaluesleading=.25\baselineskip
  \valuestolabelleading=.8\baselineskip
  \linethickness=.4pt
  \stackleading=.17\baselineskip
  \headingtoplotskip=1.5\baselineskip
  \visibleaxes
  \ticksout
  \nogridlines
  \unloggedticks}
\def\setplotarea x from #1 to #2, y from #3 to #4 {%
  \!arealloc=\!M{#1}\!xunit \advance \!arealloc -\!xorigin
  \!areabloc=\!M{#3}\!yunit \advance \!areabloc -\!yorigin
  \!arearloc=\!M{#2}\!xunit \advance \!arearloc -\!xorigin
  \!areatloc=\!M{#4}\!yunit \advance \!areatloc -\!yorigin
  \!initinboundscheck
  \!xaxislength=\!arearloc  \advance\!xaxislength -\!arealloc
  \!yaxislength=\!areatloc  \advance\!yaxislength -\!areabloc
  \!plotheadingoffset=\!zpt
  \!dimenput {{\setbox0=\hbox{}\wd0=\!xaxislength\ht0=\!yaxislength\box0}}
     [bl] (\!arealloc,\!areabloc)}
\def\visibleaxes{%
  \def\!axisvisibility{\!axisvisibletrue}}
\def\!fixkeyword#1{%
  \errhelp=\!keywordhelp
  \errmessage{Unrecognized keyword `#1': \the\!keywordtoks{NEW KEYWORD}'}}
\def\fixkeyword#1{%
  \!nextkeyword#1 }
\def\axis {%
  \def\!nextkeyword##1 {%
    \expandafter\ifx\csname !axis##1\endcsname \relax
      \def\!next{\!fixkeyword{##1}}%
    \else
      \def\!next{\csname !axis##1\endcsname}%
    \fi
    \!next}%
  \!offset=\!zpt
  \!axisvisibility
  \!placeaxislabelfalse
  \!nextkeyword}
\def\!axisbottom{%
  \!axisylevel=\!areabloc
  \def\!tickxsign{0}%
  \def\!tickysign{-}%
  \def\!axissetup{\!axisxsetup}%
  \def\!axislabeltbrl{t}%
  \!nextkeyword}
\def\!axistop{%
  \!axisylevel=\!areatloc
  \def\!tickxsign{0}%
  \def\!tickysign{+}%
  \def\!axissetup{\!axisxsetup}%
  \def\!axislabeltbrl{b}%
  \!nextkeyword}
\def\!axisleft{%
  \!axisxlevel=\!arealloc
  \def\!tickxsign{-}%
  \def\!tickysign{0}%
  \def\!axissetup{\!axisysetup}%
  \def\!axislabeltbrl{r}%
  \!nextkeyword}
\def\!axisright{%
  \!axisxlevel=\!arearloc
  \def\!tickxsign{+}%
  \def\!tickysign{0}%
  \def\!axissetup{\!axisysetup}%
  \def\!axislabeltbrl{l}%
  \!nextkeyword}
\def\!axisshiftedto#1=#2 {%
  \if 0\!tickxsign
    \!axisylevel=\!M{#2}\!yunit
    \advance\!axisylevel -\!yorigin
  \else
    \!axisxlevel=\!M{#2}\!xunit
    \advance\!axisxlevel -\!xorigin
  \fi
  \!nextkeyword}
\def\!axisvisible{%
  \!axisvisibletrue  
  \!nextkeyword}
\def\!axisinvisible{%
  \!axisvisiblefalse
  \!nextkeyword}
\def\!axislabel#1 {%
  \!axisLaBeL={#1}%
  \!placeaxislabeltrue
  \!nextkeyword}
\def\csname !axis/\endcsname{%
  \!axissetup 
  \if!placeaxislabel
    \!placeaxislabel
  \fi
  \if +\!tickysign 
    \!dimenA=\!axisylevel
    \advance\!dimenA \!offset 
    \advance\!dimenA -\!areatloc 
    \ifdim \!dimenA>\!plotheadingoffset
      \!plotheadingoffset=\!dimenA 
    \fi
  \fi}
\def\grid #1 #2 {%
  \!countA=#1\advance\!countA 1
  \axis bottom invisible ticks length <\!zpt> andacross quantity {\!countA} /
  \!countA=#2\advance\!countA 1
  \axis left   invisible ticks length <\!zpt> andacross quantity {\!countA} / }
\def\plotheading#1 {%
  \advance\!plotheadingoffset \headingtoplotskip
  \!dimenput {#1} [B] <.5\!xaxislength,\!plotheadingoffset>
    (\!arealloc,\!areatloc)}
\def\!axisxsetup{%
  \!axisxlevel=\!arealloc
  \!axisstart=\!arealloc
  \!axisend=\!arearloc
  \!axisLength=\!xaxislength
  \!!origin=\!xorigin
  \!!unit=\!xunit
  \!xswitchtrue
  \if!axisvisible 
    \!makeaxis
  \fi}
\def\!axisysetup{%
  \!axisylevel=\!areabloc
  \!axisstart=\!areabloc
  \!axisend=\!areatloc
  \!axisLength=\!yaxislength
  \!!origin=\!yorigin
  \!!unit=\!yunit
  \!xswitchfalse
  \if!axisvisible
    \!makeaxis
  \fi}
\def\!makeaxis{%
  \setbox\!boxA=\hbox{
    \beginpicture
      \!setdimenmode
      \setcoordinatesystem point at {\!zpt} {\!zpt}   
      \putrule from {\!zpt} {\!zpt} to
        {\!tickysign\!tickysign\!axisLength} 
        {\!tickxsign\!tickxsign\!axisLength}
    \endpicturesave <\!Xsave,\!Ysave>}%
    \wd\!boxA=\!zpt
    \!placetick\!axisstart}
\def\!placeaxislabel{%
  \advance\!offset \valuestolabelleading
  \if!xswitch
    \!dimenput {\the\!axisLaBeL} [\!axislabeltbrl]
      <.5\!axisLength,\!tickysign\!offset> (\!axisxlevel,\!axisylevel)
    \advance\!offset \!dp  
    \advance\!offset \!ht  
  \else
    \!dimenput {\the\!axisLaBeL} [\!axislabeltbrl]
      <\!tickxsign\!offset,.5\!axisLength> (\!axisxlevel,\!axisylevel)
  \fi
  \!axisLaBeL={}}
\def\arrow <#1> [#2,#3]{%
  \!ifnextchar<{\!arrow{#1}{#2}{#3}}{\!arrow{#1}{#2}{#3}<\!zpt,\!zpt> }}
\def\!arrow#1#2#3<#4,#5> from #6 #7 to #8 #9 {%
%
  \!xloc=\!M{#8}\!xunit   
  \!yloc=\!M{#9}\!yunit
  \!dxpos=\!xloc  \!dimenA=\!M{#6}\!xunit  \advance \!dxpos -\!dimenA
  \!dypos=\!yloc  \!dimenA=\!M{#7}\!yunit  \advance \!dypos -\!dimenA
  \let\!MAH=\!M
  \!setdimenmode
  \!xshift=#4\relax  \!yshift=#5\relax
  \!reverserotateonly\!xshift\!yshift
  \advance\!xshift\!xloc  \advance\!yshift\!yloc
%
  \!xS=-\!dxpos  \advance\!xS\!xshift
  \!yS=-\!dypos  \advance\!yS\!yshift
  \!start (\!xS,\!yS)
  \!ljoin (\!xshift,\!yshift)
%
  \!Pythag\!dxpos\!dypos\!arclength
  \!divide\!dxpos\!arclength\!dxpos  
  \!dxpos=32\!dxpos  \!removept\!dxpos\!!cos
  \!divide\!dypos\!arclength\!dypos  
  \!dypos=32\!dypos  \!removept\!dypos\!!sin
%
  \!halfhead{#1}{#2}{#3}
  \!halfhead{#1}{-#2}{-#3}
  \let\!M=\!MAH
  \ignorespaces}
  \def\!halfhead#1#2#3{%
    \!dimenC=-#1%
    \divide \!dimenC 2 
    \!dimenD=#2\!dimenC
    \!rotate(\!dimenC,\!dimenD)by(\!!cos,\!!sin)to(\!xM,\!yM)
    \!dimenC=-#1
    \!dimenD=#3\!dimenC
    \!dimenD=.5\!dimenD
    \!rotate(\!dimenC,\!dimenD)by(\!!cos,\!!sin)to(\!xE,\!yE)
    \!start (\!xshift,\!yshift)
    \advance\!xM\!xshift  \advance\!yM\!yshift
    \advance\!xE\!xshift  \advance\!yE\!yshift
    \!qjoin (\!xM,\!yM) (\!xE,\!yE) 
    \ignorespaces}
\def\betweenarrows #1#2 from #3 #4 to #5 #6 {%
  \!xloc=\!M{#3}\!xunit  \!xxloc=\!M{#5}\!xunit%
  \!yloc=\!M{#4}\!yunit  \!yyloc=\!M{#6}\!yunit%
  \!dxpos=\!xxloc  \advance\!dxpos by -\!xloc
  \!dypos=\!yyloc  \advance\!dypos by -\!yloc
  \advance\!xloc .5\!dxpos
  \advance\!yloc .5\!dypos
  \let\!MBA=\!M
  \!setdimenmode
  \ifdim\!dypos=\!zpt
    \ifdim\!dxpos<\!zpt \!dxpos=-\!dxpos \fi
    \put {\!lrarrows{\!dxpos}{#1}}#2{} at {\!xloc} {\!yloc}
  \else
    \ifdim\!dxpos=\!zpt
      \ifdim\!dypos<\!zpt \!dypos=-\!zpt \fi
      \put {\!udarrows{\!dypos}{#1}}#2{} at {\!xloc} {\!yloc}
    \fi
  \fi
  \let\!M=\!MBA
  \ignorespaces}
\def\!lrarrows#1#2{
  {\setbox\!boxA=\hbox{$\mkern-2mu\mathord-\mkern-2mu$}%
   \setbox\!boxB=\hbox{$\leftarrow$}\!dimenE=\ht\!boxB
   \setbox\!boxB=\hbox{}\ht\!boxB=2\!dimenE
   \hbox to #1{$\mathord\leftarrow\mkern-6mu
     \cleaders\copy\!boxA\hfil
     \mkern-6mu\mathord-$%
     \kern.4em $\vcenter{\box\!boxB}$$\vcenter{\hbox{#2}}$\kern.4em
     $\mathord-\mkern-6mu
     \cleaders\copy\!boxA\hfil
     \mkern-6mu\mathord\rightarrow$}}}
\def\!udarrows#1#2{
  {\setbox\!boxB=\hbox{#2}%
   \setbox\!boxA=\hbox to \wd\!boxB{\hss$\vert$\hss}%
   \!dimenE=\ht\!boxA \advance\!dimenE \dp\!boxA \divide\!dimenE 2
   \vbox to #1{\offinterlineskip
      \vskip .05556\!dimenE
      \hbox to \wd\!boxB{\hss$\mkern.4mu\uparrow$\hss}\vskip-\!dimenE
      \cleaders\copy\!boxA\vfil
      \vskip-\!dimenE\copy\!boxA
      \vskip\!dimenE\copy\!boxB\vskip.4em
      \copy\!boxA\vskip-\!dimenE
      \cleaders\copy\!boxA\vfil
      \vskip-\!dimenE \hbox to \wd\!boxB{\hss$\mkern.4mu\downarrow$\hss}
      \vskip .05556\!dimenE}}}
\def\putbar#1breadth <#2> from #3 #4 to #5 #6 {%
  \!xloc=\!M{#3}\!xunit  \!xxloc=\!M{#5}\!xunit%
  \!yloc=\!M{#4}\!yunit  \!yyloc=\!M{#6}\!yunit%
  \!dypos=\!yyloc  \advance\!dypos by -\!yloc
  \!dimenI=#2  
  \ifdim \!dimenI=\!zpt 
    \putrule#1from {#3} {#4} to {#5} {#6} 
  \else 
    \let\!MBar=\!M
    \!setdimenmode 
    \divide\!dimenI 2
    \ifdim \!dypos=\!zpt             
      \advance \!yloc -\!dimenI 
      \advance \!yyloc \!dimenI
    \else
      \advance \!xloc -\!dimenI 
      \advance \!xxloc \!dimenI
    \fi
    \putrectangle#1corners at {\!xloc} {\!yloc} and {\!xxloc} {\!yyloc}
    \let\!M=\!MBar 
  \fi
  \ignorespaces}
\def\setbars#1breadth <#2> baseline at #3 = #4 {%
  \edef\!barshift{#1}%
  \edef\!barbreadth{#2}%
  \edef\!barorientation{#3}%
  \edef\!barbaseline{#4}%
  \def\!bardobaselabel{\!bardoendlabel}%
  \def\!bardoendlabel{\!barfinish}%
  \let\!drawcurve=\!barcurve
  \!setbars}
\def\!setbars{%
  \futurelet\!nextchar\!!setbars}
\def\!!setbars{%
  \if b\!nextchar
    \def\!!!setbars{\!setbarsbget}%
  \else 
    \if e\!nextchar
      \def\!!!setbars{\!setbarseget}%
    \else
      \def\!!!setbars{\relax}%
    \fi
  \fi
  \!!!setbars}
\def\!setbarsbget baselabels (#1) {%
  \def\!barbaselabelorientation{#1}%
  \def\!bardobaselabel{\!!bardobaselabel}%
  \!setbars}
\def\!setbarseget endlabels (#1) {%
  \edef\!barendlabelorientation{#1}%
  \def\!bardoendlabel{\!!bardoendlabel}%
  \!setbars}
\def\!barcurve #1 #2 {%
  \if y\!barorientation
    \def\!basexarg{#1}%
    \def\!baseyarg{\!barbaseline}%
  \else
    \def\!basexarg{\!barbaseline}%
    \def\!baseyarg{#2}%
  \fi
  \expandafter\putbar\!barshift breadth <\!barbreadth> from {\!basexarg}
    {\!baseyarg} to {#1} {#2}
  \def\!endxarg{#1}%
  \def\!endyarg{#2}%
  \!bardobaselabel}
\def\!!bardobaselabel "#1" {%
  \put {#1}\!barbaselabelorientation{} at {\!basexarg} {\!baseyarg}
  \!bardoendlabel}
\def\!!bardoendlabel "#1" {%
  \put {#1}\!barendlabelorientation{} at {\!endxarg} {\!endyarg}
  \!barfinish}
\def\!barfinish{%
  \!ifnextchar/{\!finish}{\!barcurve}}
\def\putrectangle{%
  \!ifnextchar<{\!putrectangle}{\!putrectangle<\!zpt,\!zpt> }}
\def\!putrectangle<#1,#2> corners at #3 #4 and #5 #6 {%
%
  \!xone=\!M{#3}\!xunit  \!xtwo=\!M{#5}\!xunit%
  \!yone=\!M{#4}\!yunit  \!ytwo=\!M{#6}\!yunit%
  \ifdim \!xtwo<\!xone
    \!dimenI=\!xone  \!xone=\!xtwo  \!xtwo=\!dimenI
  \fi
  \ifdim \!ytwo<\!yone
    \!dimenI=\!yone  \!yone=\!ytwo  \!ytwo=\!dimenI
  \fi
  \!dimenI=#1\relax  \advance\!xone\!dimenI  \advance\!xtwo\!dimenI
  \!dimenI=#2\relax  \advance\!yone\!dimenI  \advance\!ytwo\!dimenI
  \let\!MRect=\!M
  \!setdimenmode
%
  \!shaderectangle
%
  \!dimenI=.5\linethickness
  \advance \!xone  -\!dimenI
  \advance \!xtwo   \!dimenI
  \putrule from {\!xone} {\!yone} to {\!xtwo} {\!yone} 
  \putrule from {\!xone} {\!ytwo} to {\!xtwo} {\!ytwo} 
%
  \advance \!xone   \!dimenI
  \advance \!xtwo  -\!dimenI%
  \advance \!yone  -\!dimenI
  \advance \!ytwo   \!dimenI
  \putrule from {\!xone} {\!yone} to {\!xone} {\!ytwo} 
  \putrule from {\!xtwo} {\!yone} to {\!xtwo} {\!ytwo} 
  \let\!M=\!MRect
  \ignorespaces}
\def\shaderectanglesoff{%
  \def\!shaderectangle{}%
  \ignorespaces}
\def\!!shaderectangle{%
  \!dimenA=\!xtwo  \advance \!dimenA -\!xone
  \!dimenB=\!ytwo  \advance \!dimenB -\!yone
  \ifdim \!dimenA<\!dimenB
    \!startvshade (\!xone,\!yone,\!ytwo)
    \!lshade      (\!xtwo,\!yone,\!ytwo)
  \else
    \!starthshade (\!yone,\!xone,\!xtwo)
    \!lshade      (\!ytwo,\!xone,\!xtwo)
  \fi
  \ignorespaces}
\def\frame{%
  \!ifnextchar<{\!frame}{\!frame<\!zpt> }}
\long\def\!frame<#1> #2{%
  \beginpicture
    \setcoordinatesystem units <1pt,1pt> point at 0 0 
    \put {#2} [Bl] at 0 0 
    \!dimenA=#1\relax
    \!dimenB=\!wd \advance \!dimenB \!dimenA
    \!dimenC=\!ht \advance \!dimenC \!dimenA
    \!dimenD=\!dp \advance \!dimenD \!dimenA
    \let\!MFr=\!M
    \!setdimenmode
    \putrectangle corners at {-\!dimenA} {-\!dimenD} and {\!dimenB} {\!dimenC}
    \!setcoordmode
    \let\!M=\!MFr
  \endpicture
  \ignorespaces}
\def\rectangle <#1> <#2> {%
  \setbox0=\hbox{}\wd0=#1\ht0=#2\frame {\box0}}
\def\plot{%
  \!ifnextchar"{\!plotfromfile}{\!drawcurve}}
\def\!plotfromfile"#1"{%
  \expandafter\!drawcurve \input #1 /}
\def\setquadratic{%
  \let\!drawcurve=\!qcurve
  \let\!!Shade=\!!qShade
  \let\!!!Shade=\!!!qShade}
\def\setlinear{%
  \let\!drawcurve=\!lcurve
  \let\!!Shade=\!!lShade
  \let\!!!Shade=\!!!lShade}
\def\sethistograms{%
  \let\!drawcurve=\!hcurve}
\def\!qcurve #1 #2 {%
  \!start (#1,#2)
  \!Qjoin}
\def\!Qjoin#1 #2 #3 #4 {%
  \!qjoin (#1,#2) (#3,#4)             
  \!ifnextchar/{\!finish}{\!Qjoin}}
\def\!lcurve #1 #2 {%
  \!start (#1,#2)
  \!Ljoin}
\def\!Ljoin#1 #2 {%
  \!ljoin (#1,#2)                    
  \!ifnextchar/{\!finish}{\!Ljoin}}
\def\!finish/{\ignorespaces}
\def\!hcurve #1 #2 {%
  \edef\!hxS{#1}%
  \edef\!hyS{#2}%
  \!hjoin}
\def\!hjoin#1 #2 {%
  \putrectangle corners at {\!hxS} {\!hyS} and {#1} {#2}
  \edef\!hxS{#1}%
  \!ifnextchar/{\!finish}{\!hjoin}}
\def\vshade #1 #2 #3 {%
  \!startvshade (#1,#2,#3)
  \!Shadewhat}
\def\hshade #1 #2 #3 {%
  \!starthshade (#1,#2,#3)
  \!Shadewhat}
\def\!Shadewhat{%
  \futurelet\!nextchar\!Shade}
\def\!Shade{%
  \if <\!nextchar
    \def\!nextShade{\!!Shade}%
  \else
    \if /\!nextchar
      \def\!nextShade{\!finish}%
    \else
      \def\!nextShade{\!!!Shade}%
    \fi
  \fi
  \!nextShade}
\def\!!lShade<#1> #2 #3 #4 {%
  \!lshade <#1> (#2,#3,#4)                 
  \!Shadewhat}
\def\!!!lShade#1 #2 #3 {%
  \!lshade (#1,#2,#3)
  \!Shadewhat} 
\def\!!qShade<#1> #2 #3 #4 #5 #6 #7 {%
  \!qshade <#1> (#2,#3,#4) (#5,#6,#7)      
  \!Shadewhat}
\def\!!!qShade#1 #2 #3 #4 #5 #6 {%
  \!qshade (#1,#2,#3) (#4,#5,#6)
  \!Shadewhat} 
\def\setdashpattern <#1>{%
  \def\!Flist{}\def\!Blist{}\def\!UDlist{}%
  \!countA=0
  \!ecfor\!item:=#1\do{%
    \!dimenA=\!item\relax
    \expandafter\!rightappend\the\!dimenA\withCS{\\}\to\!UDlist%
    \advance\!countA  1
    \ifodd\!countA
      \expandafter\!rightappend\the\!dimenA\withCS{\!Rule}\to\!Flist%
      \expandafter\!leftappend\the\!dimenA\withCS{\!Rule}\to\!Blist%
    \else 
      \expandafter\!rightappend\the\!dimenA\withCS{\!Skip}\to\!Flist%
      \expandafter\!leftappend\the\!dimenA\withCS{\!Skip}\to\!Blist%
    \fi}%
  \!leaderlength=\!zpt
  \def\!Rule##1{\advance\!leaderlength  ##1}%
  \def\!Skip##1{\advance\!leaderlength  ##1}%
  \!Flist%
  \ifdim\!leaderlength>\!zpt 
  \else
    \def\!Flist{\!Skip{24in}}\def\!Blist{\!Skip{24in}}\ignorespaces
    \def\!UDlist{\\{\!zpt}\\{24in}}\ignorespaces
    \!leaderlength=24in
  \fi
  \!dashingon}
\def\!dashingon{%
  \def\!advancedashing{\!!advancedashing}%
  \def\!drawlinearsegment{\!lineardashed}%
  \def\!puthline{\!putdashedhline}%
  \def\!putvline{\!putdashedvline}%
  \ignorespaces}%
\def\!dashingoff{%
  \def\!advancedashing{\relax}%
  \def\!drawlinearsegment{\!linearsolid}%
  \def\!puthline{\!putsolidhline}%
  \def\!putvline{\!putsolidvline}%
  \ignorespaces}
\def\setdots{%
  \!ifnextchar<{\!setdots}{\!setdots<5pt>}}
\def\!setdots<#1>{%
  \!dimenB=#1\advance\!dimenB -\plotsymbolspacing
  \ifdim\!dimenB<\!zpt
    \!dimenB=\!zpt
  \fi
\setdashpattern <\plotsymbolspacing,\!dimenB>}
\def\setdotsnear <#1> for <#2>{%
  \!dimenB=#2\relax  \advance\!dimenB -.05pt  
  \!dimenC=#1\relax  \!countA=\!dimenC 
  \!dimenD=\!dimenB  \advance\!dimenD .5\!dimenC  \!countB=\!dimenD
  \divide \!countB  \!countA
  \ifnum 1>\!countB 
    \!countB=1
  \fi
  \divide\!dimenB  \!countB
  \setdots <\!dimenB>}
\def\setdashes{%
  \!ifnextchar<{\!setdashes}{\!setdashes<5pt>}}
\def\!setdashes<#1>{\setdashpattern <#1,#1>}
\def\setdashesnear <#1> for <#2>{%
  \!dimenB=#2\relax  
  \!dimenC=#1\relax  \!countA=\!dimenC 
  \!dimenD=\!dimenB  \advance\!dimenD .5\!dimenC  \!countB=\!dimenD
  \divide \!countB  \!countA
  \ifodd \!countB 
  \else 
    \advance \!countB  1
  \fi
  \divide\!dimenB  \!countB
  \setdashes <\!dimenB>}
\def\setsolid{%
  \def\!Flist{\!Rule{24in}}\def\!Blist{\!Rule{24in}}%
  \def\!UDlist{\\{24in}\\{\!zpt}}%
  \!dashingoff}  
\def\!divide#1#2#3{%
  \!dimenB=#1
  \!dimenC=#2
  \!dimenD=\!dimenB
  \divide \!dimenD \!dimenC
  \!dimenA=\!dimenD
  \multiply\!dimenD \!dimenC
  \advance\!dimenB -\!dimenD
  \!dimenD=\!dimenC
    \ifdim\!dimenD<\!zpt \!dimenD=-\!dimenD 
  \fi
  \ifdim\!dimenD<64pt
    \!divstep[\!tfs]\!divstep[\!tfs]%
  \else 
    \!!divide
  \fi
  #3=\!dimenA\ignorespaces}
\def\!!divide{%
  \ifdim\!dimenD<256pt
    \!divstep[64]\!divstep[32]\!divstep[32]%
  \else 
    \!divstep[8]\!divstep[8]\!divstep[8]\!divstep[8]\!divstep[8]%
    \!dimenA=2\!dimenA
  \fi}
\def\!divstep[#1]{
  \!dimenB=#1\!dimenB
  \!dimenD=\!dimenB
    \divide \!dimenD by \!dimenC
  \!dimenA=#1\!dimenA
    \advance\!dimenA by \!dimenD%
  \multiply\!dimenD by \!dimenC
    \advance\!dimenB by -\!dimenD}
\def\Divide <#1> by <#2> forming <#3> {%
  \!divide{#1}{#2}{#3}}
\def\circulararc{%
  \ellipticalarc axes ratio 1:1 }
\def\ellipticalarc axes ratio #1:#2 #3 degrees from #4 #5 center at #6 #7 {%
  \!angle=#3pt\relax
  \ifdim\!angle>\!zpt 
    \def\!sign{}
  \else 
    \def\!sign{-}\!angle=-\!angle
  \fi
  \!xxloc=\!M{#6}\!xunit
  \!yyloc=\!M{#7}\!yunit     
  \!xxS=\!M{#4}\!xunit
  \!yyS=\!M{#5}\!yunit
  \advance\!xxS -\!xxloc
  \advance\!yyS -\!yyloc
  \!divide\!xxS{#1pt}\!xxS 
  \!divide\!yyS{#2pt}\!yyS 
  \let\!MC=\!M
  \!setdimenmode
  \!xS=#1\!xxS  \advance\!xS\!xxloc
  \!yS=#2\!yyS  \advance\!yS\!yyloc
  \!start (\!xS,\!yS)%
  \!loop\ifdim\!angle>14.9999pt
    \!rotate(\!xxS,\!yyS)by(\!cos,\!sign\!sin)to(\!xxM,\!yyM) 
    \!rotate(\!xxM,\!yyM)by(\!cos,\!sign\!sin)to(\!xxE,\!yyE)
    \!xM=#1\!xxM  \advance\!xM\!xxloc  \!yM=#2\!yyM  \advance\!yM\!yyloc
    \!xE=#1\!xxE  \advance\!xE\!xxloc  \!yE=#2\!yyE  \advance\!yE\!yyloc
    \!qjoin (\!xM,\!yM) (\!xE,\!yE)
    \!xxS=\!xxE  \!yyS=\!yyE 
    \advance \!angle -15pt
  \repeat
  \ifdim\!angle>\!zpt
    \!angle=100.53096\!angle
    \divide \!angle 360 
    \!sinandcos\!angle\!!sin\!!cos
    \!rotate(\!xxS,\!yyS)by(\!!cos,\!sign\!!sin)to(\!xxM,\!yyM) 
    \!rotate(\!xxM,\!yyM)by(\!!cos,\!sign\!!sin)to(\!xxE,\!yyE)
    \!xM=#1\!xxM  \advance\!xM\!xxloc  \!yM=#2\!yyM  \advance\!yM\!yyloc
    \!xE=#1\!xxE  \advance\!xE\!xxloc  \!yE=#2\!yyE  \advance\!yE\!yyloc
    \!qjoin (\!xM,\!yM) (\!xE,\!yE)
  \fi
  \let\!M=\!MC
  \ignorespaces}
\def\!rotate(#1,#2)by(#3,#4)to(#5,#6){%
  \!dimenA=#3#1\advance \!dimenA -#4#2
  \!dimenB=#3#2\advance \!dimenB  #4#1
  \divide \!dimenA 32  \divide \!dimenB 32 
  #5=\!dimenA  #6=\!dimenB
  \ignorespaces}
\def\!sin{4.17684}
\def\!cos{31.72624}
\def\!sinandcos#1#2#3{%
 \!dimenD=#1
 \!dimenA=\!dimenD
 \!dimenB=32pt
 \!removept\!dimenD\!value
 \!dimenC=\!dimenD
 \!dimenC=\!value\!dimenC \divide\!dimenC by 64 
 \advance\!dimenB by -\!dimenC
 \!dimenC=\!value\!dimenC \divide\!dimenC by 96 
 \advance\!dimenA by -\!dimenC
 \!dimenC=\!value\!dimenC \divide\!dimenC by 128 
 \advance\!dimenB by \!dimenC%
 \!removept\!dimenA#2
 \!removept\!dimenB#3
 \ignorespaces}
\def\putrule#1from #2 #3 to #4 #5 {%
  \!xloc=\!M{#2}\!xunit  \!xxloc=\!M{#4}\!xunit%
  \!yloc=\!M{#3}\!yunit  \!yyloc=\!M{#5}\!yunit%
  \!dxpos=\!xxloc  \advance\!dxpos by -\!xloc
  \!dypos=\!yyloc  \advance\!dypos by -\!yloc
  \ifdim\!dypos=\!zpt
    \def\!!Line{\!puthline{#1}}\ignorespaces
  \else
    \ifdim\!dxpos=\!zpt
      \def\!!Line{\!putvline{#1}}\ignorespaces
    \else 
       \def\!!Line{}
    \fi
  \fi
  \let\!ML=\!M
  \!setdimenmode
  \!!Line%
  \let\!M=\!ML
  \ignorespaces}
\def\!putsolidhline#1{%
  \ifdim\!dxpos>\!zpt 
    \put{\!hline\!dxpos}#1[l] at {\!xloc} {\!yloc}
  \else 
    \put{\!hline{-\!dxpos}}#1[l] at {\!xxloc} {\!yyloc}
  \fi
  \ignorespaces}
\def\!putsolidvline#1{%
  \ifdim\!dypos>\!zpt 
    \put{\!vline\!dypos}#1[b] at {\!xloc} {\!yloc}
  \else 
    \put{\!vline{-\!dypos}}#1[b] at {\!xxloc} {\!yyloc}
  \fi
  \ignorespaces}
\def\!hline#1{\hbox to #1{\leaders \hrule height\linethickness\hfill}}
\def\!vline#1{\vbox to #1{\leaders \vrule width\linethickness\vfill}}
\def\!putdashedhline#1{%
  \ifdim\!dxpos>\!zpt 
    \!DLsetup\!Flist\!dxpos
    \put{\hbox to \!totalleaderlength{\!hleaders}\!hpartialpattern\!Rtrunc}
      #1[l] at {\!xloc} {\!yloc} 
  \else 
    \!DLsetup\!Blist{-\!dxpos}
    \put{\!hpartialpattern\!Ltrunc\hbox to \!totalleaderlength{\!hleaders}}
      #1[r] at {\!xloc} {\!yloc} 
  \fi
  \ignorespaces}
\def\!putdashedvline#1{%
  \!dypos=-\!dypos
  \ifdim\!dypos>\!zpt 
    \!DLsetup\!Flist\!dypos 
    \put{\vbox{\vbox to \!totalleaderlength{\!vleaders}
      \!vpartialpattern\!Rtrunc}}#1[t] at {\!xloc} {\!yloc} 
  \else 
    \!DLsetup\!Blist{-\!dypos}
    \put{\vbox{\!vpartialpattern\!Ltrunc
      \vbox to \!totalleaderlength{\!vleaders}}}#1[b] at {\!xloc} {\!yloc} 
  \fi
  \ignorespaces}
\def\!DLsetup#1#2{
  \let\!RSlist=#1
  \!countB=#2
  \!countA=\!leaderlength
  \divide\!countB by \!countA
  \!totalleaderlength=\!countB\!leaderlength
  \!Rresiduallength=#2%
  \advance \!Rresiduallength by -\!totalleaderlength
  \!Lresiduallength=\!leaderlength
  \advance \!Lresiduallength by -\!Rresiduallength
  \ignorespaces}
\def\!hleaders{%
  \def\!Rule##1{\vrule height\linethickness width##1}%
  \def\!Skip##1{\hskip##1}%
  \leaders\hbox{\!RSlist}\hfill}
\def\!hpartialpattern#1{%
  \!dimenA=\!zpt \!dimenB=\!zpt 
  \def\!Rule##1{#1{##1}\vrule height\linethickness width\!dimenD}%
  \def\!Skip##1{#1{##1}\hskip\!dimenD}%
  \!RSlist}
\def\!vleaders{%
  \def\!Rule##1{\hrule width\linethickness height##1}%
  \def\!Skip##1{\vskip##1}%
  \leaders\vbox{\!RSlist}\vfill}
\def\!vpartialpattern#1{%
  \!dimenA=\!zpt \!dimenB=\!zpt 
  \def\!Rule##1{#1{##1}\hrule width\linethickness height\!dimenD}%
  \def\!Skip##1{#1{##1}\vskip\!dimenD}%
  \!RSlist}
\def\!Rtrunc#1{\!trunc{#1}>\!Rresiduallength}
\def\!Ltrunc#1{\!trunc{#1}<\!Lresiduallength}
\def\!trunc#1#2#3{%
  \!dimenA=\!dimenB         
  \advance\!dimenB by #1%
  \!dimenD=\!dimenB  \ifdim\!dimenD#2#3\!dimenD=#3\fi
  \!dimenC=\!dimenA  \ifdim\!dimenC#2#3\!dimenC=#3\fi
  \advance \!dimenD by -\!dimenC}
\def\!start (#1,#2){%
  \!plotxorigin=\!xorigin  \advance \!plotxorigin by \!plotsymbolxshift
  \!plotyorigin=\!yorigin  \advance \!plotyorigin by \!plotsymbolyshift
  \!xS=\!M{#1}\!xunit \!yS=\!M{#2}\!yunit
  \!rotateaboutpivot\!xS\!yS
  \!copylist\!UDlist\to\!!UDlist
  \!getnextvalueof\!downlength\from\!!UDlist
  \!distacross=\!zpt
  \!intervalno=0 
  \global\totalarclength=\!zpt
  \ignorespaces}
\def\!ljoin (#1,#2){%
  \advance\!intervalno by 1
  \!xE=\!M{#1}\!xunit \!yE=\!M{#2}\!yunit
  \!rotateaboutpivot\!xE\!yE
  \!xdiff=\!xE \advance \!xdiff by -\!xS
  \!ydiff=\!yE \advance \!ydiff by -\!yS
  \!Pythag\!xdiff\!ydiff\!arclength
  \global\advance \totalarclength by \!arclength%
  \!drawlinearsegment
  \!xS=\!xE \!yS=\!yE
  \ignorespaces}
\def\!linearsolid{%
  \!npoints=\!arclength
  \!countA=\plotsymbolspacing
  \divide\!npoints by \!countA
  \ifnum \!npoints<1 
    \!npoints=1 
  \fi
  \divide\!xdiff by \!npoints
  \divide\!ydiff by \!npoints
  \!xpos=\!xS \!ypos=\!yS
  \loop\ifnum\!npoints>-1
    \!plotifinbounds
    \advance \!xpos by \!xdiff
    \advance \!ypos by \!ydiff
    \advance \!npoints by -1
  \repeat
  \ignorespaces}
\def\!lineardashed{%
  \ifdim\!distacross>\!arclength
    \advance \!distacross by -\!arclength  
  \else
    \loop\ifdim\!distacross<\!arclength
      \!divide\!distacross\!arclength\!dimenA
      \!removept\!dimenA\!t
      \!xpos=\!t\!xdiff \advance \!xpos by \!xS
      \!ypos=\!t\!ydiff \advance \!ypos by \!yS
      \!plotifinbounds
      \advance\!distacross by \plotsymbolspacing
      \!advancedashing
    \repeat  
    \advance \!distacross by -\!arclength
  \fi
  \ignorespaces}
\def\!!advancedashing{%
  \advance\!downlength by -\plotsymbolspacing
  \ifdim \!downlength>\!zpt
  \else
    \advance\!distacross by \!downlength
    \!getnextvalueof\!uplength\from\!!UDlist
    \advance\!distacross by \!uplength
    \!getnextvalueof\!downlength\from\!!UDlist
  \fi}
\def\inboundscheckoff{%
  \def\!plotifinbounds{\!plot(\!xpos,\!ypos)}%
  \def\!initinboundscheck{\relax}\ignorespaces}
\def\!!plotifinbounds{%
  \ifdim \!xpos<\!checkleft
  \else
    \ifdim \!xpos>\!checkright
    \else
      \ifdim \!ypos<\!checkbot
      \else
         \ifdim \!ypos>\!checktop
         \else
           \!plot(\!xpos,\!ypos)
         \fi 
      \fi
    \fi
  \fi}
\def\!!initinboundscheck{%
  \!checkleft=\!arealloc     \advance\!checkleft by \!xorigin
  \!checkright=\!arearloc    \advance\!checkright by \!xorigin
  \!checkbot=\!areabloc      \advance\!checkbot by \!yorigin
  \!checktop=\!areatloc      \advance\!checktop by \!yorigin}
\def\!logten#1#2{%
  \expandafter\!!logten#1\!nil
  \!removept\!dimenF#2%
  \ignorespaces}
\def\!!logten#1#2\!nil{%
  \if -#1%
    \!dimenF=\!zpt
    \def\!next{\ignorespaces}%
  \else
    \if +#1%
      \def\!next{\!!logten#2\!nil}%
    \else
      \if .#1%
        \def\!next{\!!logten0.#2\!nil}%
      \else
        \def\!next{\!!!logten#1#2..\!nil}%
      \fi
    \fi
  \fi
  \!next}
\def\!!!logten#1#2.#3.#4\!nil{%
  \!dimenF=1pt 
  \if 0#1%
    \!!logshift#3pt 
  \else 
    \!logshift#2/
    \!dimenE=#1.#2#3pt 
  \fi 
  \ifdim \!dimenE<\!rootten
    \multiply \!dimenE 10 
    \advance  \!dimenF -1pt
  \fi
  \!dimenG=\!dimenE
    \advance\!dimenG 10pt
  \advance\!dimenE -10pt 
  \multiply\!dimenE 10 
  \!divide\!dimenE\!dimenG\!dimenE
  \!removept\!dimenE\!t
  \!dimenG=\!t\!dimenE
  \!removept\!dimenG\!tt
  \!dimenH=\!tt\!tenAe
    \divide\!dimenH 100
  \advance\!dimenH \!tenAc
  \!dimenH=\!tt\!dimenH
    \divide\!dimenH 100   
  \advance\!dimenH \!tenAa
  \!dimenH=\!t\!dimenH
    \divide\!dimenH 100 
  \advance\!dimenF \!dimenH}
\def\!logshift#1{%
  \if #1/%
    \def\!next{\ignorespaces}%
  \else
    \advance\!dimenF 1pt 
    \def\!next{\!logshift}%
  \fi 
  \!next}
 \def\!!logshift#1{%
   \advance\!dimenF -1pt
   \if 0#1%
     \def\!next{\!!logshift}%
   \else
     \if p#1%
       \!dimenF=1pt
       \def\!next{\!dimenE=1p}%
     \else
       \def\!next{\!dimenE=#1.}%
     \fi
   \fi
   \!next}
\def\beginpicture{%
  \setbox\!picbox=\hbox\bgroup%
  \!xleft=\maxdimen  
  \!xright=-\maxdimen
  \!ybot=\maxdimen
  \!ytop=-\maxdimen}
\def\endpicture{%
  \ifdim\!xleft=\maxdimen
    \!xleft=\!zpt \!xright=\!zpt \!ybot=\!zpt \!ytop=\!zpt 
  \fi
  \global\!Xleft=\!xleft \global\!Xright=\!xright
  \global\!Ybot=\!ybot \global\!Ytop=\!ytop
  \egroup%
  \ht\!picbox=\!Ytop  \dp\!picbox=-\!Ybot
  \ifdim\!Ybot>\!zpt
  \else 
    \ifdim\!Ytop<\!zpt
      \!Ybot=\!Ytop
    \else
      \!Ybot=\!zpt
    \fi
  \fi
  \hbox{\kern-\!Xleft\lower\!Ybot\box\!picbox\kern\!Xright}}
\def\endpicturesave <#1,#2>{%
  \endpicture \global #1=\!Xleft \global #2=\!Ybot \ignorespaces}
\def\setcoordinatesystem{%
  \!ifnextchar{u}{\!getlengths }
    {\!getlengths units <\!xunit,\!yunit>}}
\def\!getlengths units <#1,#2>{%
  \!xunit=#1\relax
  \!yunit=#2\relax
  \!ifcoordmode 
    \let\!SCnext=\!SCccheckforRP
  \else
    \let\!SCnext=\!SCdcheckforRP
  \fi
  \!SCnext}
\def\!SCccheckforRP{%
  \!ifnextchar{p}{\!cgetreference }
    {\!cgetreference point at {\!xref} {\!yref} }}
\def\!cgetreference point at #1 #2 {%
  \edef\!xref{#1}\edef\!yref{#2}%
  \!xorigin=\!xref\!xunit  \!yorigin=\!yref\!yunit  
  \!initinboundscheck 
  \ignorespaces}
\def\!SCdcheckforRP{%
  \!ifnextchar{p}{\!dgetreference}%
    {\ignorespaces}}
\def\!dgetreference point at #1 #2 {%
  \!xorigin=#1\relax  \!yorigin=#2\relax
  \ignorespaces}
\long\def\put#1#2 at #3 #4 {%
  \!setputobject{#1}{#2}%
  \!xpos=\!M{#3}\!xunit  \!ypos=\!M{#4}\!yunit  
  \!rotateaboutpivot\!xpos\!ypos%
  \advance\!xpos -\!xorigin  \advance\!xpos -\!xshift
  \advance\!ypos -\!yorigin  \advance\!ypos -\!yshift
  \kern\!xpos\raise\!ypos\box\!putobject\kern-\!xpos%
  \!doaccounting\ignorespaces}
\long\def\multiput #1#2 at {%
  \!setputobject{#1}{#2}%
  \!ifnextchar"{\!putfromfile}{\!multiput}}
\def\!putfromfile"#1"{%
  \expandafter\!multiput \input #1 /}
\def\!multiput{%
  \futurelet\!nextchar\!!multiput}
\def\!!multiput{%
  \if *\!nextchar
    \def\!nextput{\!alsoby}%
  \else
    \if /\!nextchar
      \def\!nextput{\!finishmultiput}%
    \else
      \def\!nextput{\!alsoat}%
    \fi
  \fi
  \!nextput}
\def\!finishmultiput/{%
  \setbox\!putobject=\hbox{}%
  \ignorespaces}
\def\!alsoat#1 #2 {%
  \!xpos=\!M{#1}\!xunit  \!ypos=\!M{#2}\!yunit  
  \!rotateaboutpivot\!xpos\!ypos%
  \advance\!xpos -\!xorigin  \advance\!xpos -\!xshift
  \advance\!ypos -\!yorigin  \advance\!ypos -\!yshift
  \kern\!xpos\raise\!ypos\copy\!putobject\kern-\!xpos%
  \!doaccounting
  \!multiput}
\def\!alsoby*#1 #2 #3 {%
  \!dxpos=\!M{#2}\!xunit \!dypos=\!M{#3}\!yunit 
  \!rotateonly\!dxpos\!dypos
  \!ntemp=#1%
  \!!loop\ifnum\!ntemp>0
    \advance\!xpos by \!dxpos  \advance\!ypos by \!dypos
    \kern\!xpos\raise\!ypos\copy\!putobject\kern-\!xpos%
    \advance\!ntemp by -1
  \repeat
  \!doaccounting 
  \!multiput}
\def\accountingon{\def\!doaccounting{\!!doaccounting}\ignorespaces}
\def\!!doaccounting{%
  \!xtemp=\!xpos  
  \!ytemp=\!ypos
  \ifdim\!xtemp<\!xleft 
     \!xleft=\!xtemp 
  \fi
  \advance\!xtemp by  \!wd 
  \ifdim\!xright<\!xtemp 
    \!xright=\!xtemp
  \fi
  \advance\!ytemp by -\!dp
  \ifdim\!ytemp<\!ybot  
    \!ybot=\!ytemp
  \fi
  \advance\!ytemp by  \!dp
  \advance\!ytemp by  \!ht 
  \ifdim\!ytemp>\!ytop  
    \!ytop=\!ytemp  
  \fi}
\long\def\!setputobject#1#2{%
  \setbox\!putobject=\hbox{#1}%
  \!ht=\ht\!putobject  \!dp=\dp\!putobject  \!wd=\wd\!putobject
  \wd\!putobject=\!zpt
  \!xshift=.5\!wd   \!yshift=.5\!ht   \advance\!yshift by -.5\!dp
  \edef\!putorientation{#2}%
  \expandafter\!SPOreadA\!putorientation[]\!nil%
  \expandafter\!SPOreadB\!putorientation<\!zpt,\!zpt>\!nil\ignorespaces}
\def\!SPOreadA#1[#2]#3\!nil{\!etfor\!orientation:=#2\do\!SPOreviseshift}
\def\!SPOreadB#1<#2,#3>#4\!nil{\advance\!xshift by -#2\advance\!yshift by -#3}
\def\!SPOreviseshift{%
  \if l\!orientation 
    \!xshift=\!zpt
  \else 
    \if r\!orientation 
      \!xshift=\!wd
    \else 
      \if b\!orientation
        \!yshift=-\!dp
      \else 
        \if B\!orientation 
          \!yshift=\!zpt
        \else 
          \if t\!orientation 
            \!yshift=\!ht
          \fi 
        \fi
      \fi
    \fi
  \fi}
\long\def\!dimenput#1#2(#3,#4){%
  \!setputobject{#1}{#2}%
  \!xpos=#3\advance\!xpos by -\!xshift
  \!ypos=#4\advance\!ypos by -\!yshift
  \kern\!xpos\raise\!ypos\box\!putobject\kern-\!xpos%
  \!doaccounting\ignorespaces}
\def\!setdimenmode{%
  \let\!M=\!M!!\ignorespaces}
\def\!setcoordmode{%
  \let\!M=\!M!\ignorespaces}
\def\!ifcoordmode{%
  \ifx \!M \!M!}
\def\!ifdimenmode{%
  \ifx \!M \!M!!}
\def\!M!#1#2{#1#2} 
\def\!M!!#1#2{#1}
\let\setdimensionmode=\!setdimenmode
\let\setcoordinatemode=\!setcoordmode
\def\!stack[#1]{%
  \let\!lglue=\hfill \let\!rglue=\hfill
  \expandafter\let\csname !#1glue\endcsname=\relax
  \!ifnextchar<{\!!stack}{\!!stack<\stackleading>}}
\def\!!stack<#1>#2{%
  \vbox{\def\!valueslist{}\!ecfor\!value:=#2\do{%
    \expandafter\!rightappend\!value\withCS{\\}\to\!valueslist}%
    \!lop\!valueslist\to\!value
    \let\\=\cr\lineskiplimit=\maxdimen\lineskip=#1%
    \baselineskip=-1000pt\halign{\!lglue##\!rglue\cr \!value\!valueslist\cr}}%
  \ignorespaces}
\def\!lines[#1]#2{%
  \let\!lglue=\hfill \let\!rglue=\hfill
  \expandafter\let\csname !#1glue\endcsname=\relax
  \vbox{\halign{\!lglue##\!rglue\cr #2\crcr}}%
  \ignorespaces}
\def\!Lines[#1]#2{%
  \let\!lglue=\hfill \let\!rglue=\hfill
  \expandafter\let\csname !#1glue\endcsname=\relax
  \vtop{\halign{\!lglue##\!rglue\cr #2\crcr}}%
  \ignorespaces}
\def\setplotsymbol(#1#2){%
  \!setputobject{#1}{#2}
  \setbox\!plotsymbol=\box\!putobject%
  \!plotsymbolxshift=\!xshift 
  \!plotsymbolyshift=\!yshift 
  \ignorespaces}
\def\!!plot(#1,#2){%
  \!dimenA=-\!plotxorigin \advance \!dimenA by #1
  \!dimenB=-\!plotyorigin \advance \!dimenB by #2
  \kern\!dimenA\raise\!dimenB\copy\!plotsymbol\kern-\!dimenA%
  \ignorespaces}
\def\!!!plot(#1,#2){%
  \!dimenA=-\!plotxorigin \advance \!dimenA by #1
  \!dimenB=-\!plotyorigin \advance \!dimenB by #2
  \kern\!dimenA\raise\!dimenB\copy\!plotsymbol\kern-\!dimenA%
  \!countE=\!dimenA
  \!countF=\!dimenB
  \immediate\write\!replotfile{\the\!countE,\the\!countF.}%
  \ignorespaces}
\def\savelinesandcurves on "#1" {%
  \immediate\closeout\!replotfile
  \immediate\openout\!replotfile=#1%
  \let\!plot=\!!!plot}
\def\dontsavelinesandcurves {%
  \let\!plot=\!!plot}
\xdef\!Commentsignal{
\def\writesavefile#1 {%
  \immediate\write\!replotfile{\!Commentsignal #1}%
  \ignorespaces}

\def\replot"#1" {%
  \expandafter\!replot\input #1 /}
\def\!replot#1,#2. {%
  \!dimenA=#1sp
  \kern\!dimenA\raise#2sp\copy\!plotsymbol\kern-\!dimenA
  \futurelet\!nextchar\!!replot}
\def\!!replot{%
  \if /\!nextchar 
    \def\!next{\!finish}%
  \else
    \def\!next{\!replot}%
  \fi
  \!next}




\def\!Pythag#1#2#3{%
  \!dimenE=#1\relax                                     
  \ifdim\!dimenE<\!zpt 
    \!dimenE=-\!dimenE 
  \fi
  \!dimenF=#2\relax
  \ifdim\!dimenF<\!zpt 
    \!dimenF=-\!dimenF 
  \fi
  \advance \!dimenF by \!dimenE
  \ifdim\!dimenF=\!zpt 
    \!dimenG=\!zpt
  \else 
    \!divide{8\!dimenE}\!dimenF\!dimenE
    \advance\!dimenE by -4pt
      \!dimenE=2\!dimenE
    \!removept\!dimenE\!!t
    \!dimenE=\!!t\!dimenE
    \advance\!dimenE by 64pt
    \divide \!dimenE by 2
    \!dimenH=7pt
    \!!Pythag\!!Pythag\!!Pythag
    \!removept\!dimenH\!!t
    \!dimenG=\!!t\!dimenF
    \divide\!dimenG by 8
  \fi
  #3=\!dimenG
  \ignorespaces}

\def\!!Pythag{
  \!divide\!dimenE\!dimenH\!dimenI
  \advance\!dimenH by \!dimenI
    \divide\!dimenH by 2}

\def\placehypotenuse for <#1> and <#2> in <#3> {%
  \!Pythag{#1}{#2}{#3}}




\def\!qjoin (#1,#2) (#3,#4){%
  \advance\!intervalno by 1
  \!ifcoordmode
    \edef\!xmidpt{#1}\edef\!ymidpt{#2}%
  \else
    \!dimenA=#1\relax \edef\!xmidpt{\the\!dimenA}%
    \!dimenA=#2\relax \edef\!ymidpt{\the\!dimenA}%
  \fi
  \!xM=\!M{#1}\!xunit  \!yM=\!M{#2}\!yunit   \!rotateaboutpivot\!xM\!yM
  \!xE=\!M{#3}\!xunit  \!yE=\!M{#4}\!yunit   \!rotateaboutpivot\!xE\!yE
%
  \!dimenA=\!xM  \advance \!dimenA by -\!xS
  \!dimenB=\!xE  \advance \!dimenB by -\!xM
  \!xB=3\!dimenA \advance \!xB by -\!dimenB
  \!xC=2\!dimenB \advance \!xC by -2\!dimenA
%
  \!dimenA=\!yM  \advance \!dimenA by -\!yS%
  \!dimenB=\!yE  \advance \!dimenB by -\!yM%
  \!yB=3\!dimenA \advance \!yB by -\!dimenB%
  \!yC=2\!dimenB \advance \!yC by -2\!dimenA%
%
  \!xprime=\!xB  \!yprime=\!yB
  \!dxprime=.5\!xC  \!dyprime=.5\!yC
  \!getf \!midarclength=\!dimenA
  \!getf \advance \!midarclength by 4\!dimenA
  \!getf \advance \!midarclength by \!dimenA
  \divide \!midarclength by 12
%
  \!arclength=\!dimenA
  \!getf \advance \!arclength by 4\!dimenA
  \!getf \advance \!arclength by \!dimenA
  \divide \!arclength by 12
  \advance \!arclength by \!midarclength
  \global\advance \totalarclength by \!arclength
%
%
  \ifdim\!distacross>\!arclength 
    \advance \!distacross by -\!arclength
  \else
    \!initinverseinterp
    \loop\ifdim\!distacross<\!arclength
      \!inverseinterp
      \!xpos=\!t\!xC \advance\!xpos by \!xB
        \!xpos=\!t\!xpos \advance \!xpos by \!xS
      \!ypos=\!t\!yC \advance\!ypos by \!yB
        \!ypos=\!t\!ypos \advance \!ypos by \!yS
      \!plotifinbounds
      \advance\!distacross \plotsymbolspacing
      \!advancedashing
    \repeat  
    \advance \!distacross by -\!arclength
  \fi
  \!xS=\!xE
  \!yS=\!yE
  \ignorespaces}

\def\!getf{\!Pythag\!xprime\!yprime\!dimenA%
  \advance\!xprime by \!dxprime
  \advance\!yprime by \!dyprime}

\def\!initinverseinterp{%
  \ifdim\!arclength>\!zpt
    \!divide{8\!midarclength}\!arclength\!dimenE
    \ifdim\!dimenE<\!wmin \!setinverselinear
    \else 
      \ifdim\!dimenE>\!wmax \!setinverselinear
      \else
        \def\!inverseinterp{\!inversequad}\ignorespaces
%
%
         \!removept\!dimenE\!Ew
         \!dimenF=-\!Ew\!dimenE
         \advance\!dimenF by 32pt
         \!dimenG=8pt 
         \advance\!dimenG by -\!dimenE
         \!dimenG=\!Ew\!dimenG
         \!divide\!dimenF\!dimenG\!beta
         \!gamma=1pt
         \advance \!gamma by -\!beta
      \fi
    \fi
  \fi
  \ignorespaces}

\def\!inversequad{%
  \!divide\!distacross\!arclength\!dimenG
  \!removept\!dimenG\!v
  \!dimenG=\!v\!gamma
  \advance\!dimenG by \!beta
  \!dimenG=\!v\!dimenG
  \!removept\!dimenG\!t}

\def\!setinverselinear{%
  \def\!inverseinterp{\!inverselinear}%
  \divide\!dimenE by 8 \!removept\!dimenE\!t
  \!countC=\!intervalno \multiply \!countC 2
  \!countB=\!countC     \advance \!countB -1
  \!countA=\!countB     \advance \!countA -1
  \wlog{\the\!countB th point (\!xmidpt,\!ymidpt) being plotted 
    doesn't lie in the}%
  \wlog{ middle third of the arc between the \the\!countA th 
    and \the\!countC th points:}%
  \wlog{ [arc length \the\!countA\space to \the\!countB]/[arc length 
    \the \!countA\space to \the\!countC]=\!t.}%
  \ignorespaces}

\def\!inverselinear{%
  \!divide\!distacross\!arclength\!dimenG
  \!removept\!dimenG\!t}



\def\startrotation{%
  \let\!rotateaboutpivot=\!!rotateaboutpivot
  \let\!rotateonly=\!!rotateonly
  \!ifnextchar{b}{\!getsincos }%
    {\!getsincos by {\!cosrotationangle} {\!sinrotationangle} }}
\def\!getsincos by #1 #2 {%
  \edef\!cosrotationangle{#1}%
  \edef\!sinrotationangle{#2}%
  \!ifcoordmode 
    \let\!ROnext=\!ccheckforpivot
  \else
    \let\!ROnext=\!dcheckforpivot
  \fi
  \!ROnext}
\def\!ccheckforpivot{%
  \!ifnextchar{a}{\!cgetpivot}%
    {\!cgetpivot about {\!xpivotcoord} {\!ypivotcoord} }}
\def\!cgetpivot about #1 #2 {%
  \edef\!xpivotcoord{#1}%
  \edef\!ypivotcoord{#2}%
  \!xpivot=#1\!xunit  \!ypivot=#2\!yunit
  \ignorespaces}
\def\!dcheckforpivot{%
  \!ifnextchar{a}{\!dgetpivot}{\ignorespaces}}
\def\!dgetpivot about #1 #2 {%
  \!xpivot=#1\relax  \!ypivot=#2\relax
  \ignorespaces}

\def\stoprotation{%
  \let\!rotateaboutpivot=\!!!rotateaboutpivot
  \let\!rotateonly=\!!!rotateonly
  \ignorespaces}

\def\!!rotateaboutpivot#1#2{%
  \!dimenA=#1\relax  \advance\!dimenA -\!xpivot
  \!dimenB=#2\relax  \advance\!dimenB -\!ypivot
  \!dimenC=\!cosrotationangle\!dimenA
    \advance \!dimenC -\!sinrotationangle\!dimenB
  \!dimenD=\!cosrotationangle\!dimenB
    \advance \!dimenD  \!sinrotationangle\!dimenA
  \advance\!dimenC \!xpivot  \advance\!dimenD \!ypivot
  #1=\!dimenC  #2=\!dimenD
  \ignorespaces}

\def\!!rotateonly#1#2{%
  \!dimenA=#1\relax  \!dimenB=#2\relax 
  \!dimenC=\!cosrotationangle\!dimenA
    \advance \!dimenC -\!rotsign\!sinrotationangle\!dimenB
  \!dimenD=\!cosrotationangle\!dimenB
    \advance \!dimenD  \!rotsign\!sinrotationangle\!dimenA
  #1=\!dimenC  #2=\!dimenD
  \ignorespaces}
\def\!rotsign{}
\def\!!!rotateaboutpivot#1#2{\relax}
\def\!!!rotateonly#1#2{\relax}
\stoprotation

\def\!reverserotateonly#1#2{%
  \def\!rotsign{-}%
  \!rotateonly{#1}{#2}%
  \def\!rotsign{}%
  \ignorespaces}

\def\!getspan span <#1>{%
  \!dshade=#1\relax
  \!ifcoordmode 
    \let\!GRnext=\!GRccheckforAP
  \else
    \let\!GRnext=\!GRdcheckforAP
  \fi
  \!GRnext}
\def\!GRccheckforAP{%
  \!ifnextchar{p}{\!cgetanchor }
    {\!cgetanchor point at {\!xshadesave} {\!yshadesave} }}
\def\!cgetanchor point at #1 #2 {%
  \edef\!xshadesave{#1}\edef\!yshadesave{#2}%
  \!xshade=\!xshadesave\!xunit  \!yshade=\!yshadesave\!yunit
  \ignorespaces}
\def\!GRdcheckforAP{%
  \!ifnextchar{p}{\!dgetanchor}%
    {\ignorespaces}}
\def\!dgetanchor point at #1 #2 {%
  \!xshade=#1\relax  \!yshade=#2\relax
  \ignorespaces}

\def\setshadesymbol{%
  \!ifnextchar<{\!setshadesymbol}{\!setshadesymbol<,,,> }}

\def\!setshadesymbol <#1,#2,#3,#4> (#5#6){%
  \!setputobject{#5}{#6}%
  \setbox\!shadesymbol=\box\!putobject%
  \!shadesymbolxshift=\!xshift \!shadesymbolyshift=\!yshift
%
  \!dimenA=\!xshift \advance\!dimenA \!smidge
  \!override\!dimenA{#1}\!lshrinkage%
  \!dimenA=\!wd \advance \!dimenA -\!xshift
    \advance\!dimenA \!smidge
    \!override\!dimenA{#2}\!rshrinkage
  \!dimenA=\!dp \advance \!dimenA \!yshift
    \advance\!dimenA \!smidge
    \!override\!dimenA{#3}\!bshrinkage
  \!dimenA=\!ht \advance \!dimenA -\!yshift
    \advance\!dimenA \!smidge
    \!override\!dimenA{#4}\!tshrinkage
  \ignorespaces}
\def\!smidge{-.2pt}%

\def\!override#1#2#3{%
  \edef\!!override{#2}%
  \ifx \!!override\empty
    #3=#1\relax
  \else
    \if z\!!override
      #3=\!zpt
    \else
      \ifx \!!override\!blankz
        #3=\!zpt
      \else
        #3=#2\relax
      \fi
    \fi
  \fi
  \ignorespaces}
\def\!blankz{ z}

\setshadesymbol ({\fiverm .})

\def\!startvshade#1(#2,#3,#4){%
  \let\!!xunit=\!xunit%
  \let\!!yunit=\!yunit%
  \let\!!xshade=\!xshade%
  \let\!!yshade=\!yshade%
  \def\!getshrinkages{\!vgetshrinkages}%
  \let\!setshadelocation=\!vsetshadelocation%
  \!xS=\!M{#2}\!!xunit
  \!ybS=\!M{#3}\!!yunit
  \!ytS=\!M{#4}\!!yunit
  \!shadexorigin=\!xorigin  \advance \!shadexorigin \!shadesymbolxshift
  \!shadeyorigin=\!yorigin  \advance \!shadeyorigin \!shadesymbolyshift
  \ignorespaces}

\def\!starthshade#1(#2,#3,#4){%
  \let\!!xunit=\!yunit%
  \let\!!yunit=\!xunit%
  \let\!!xshade=\!yshade%
  \let\!!yshade=\!xshade%
  \def\!getshrinkages{\!hgetshrinkages}%
  \let\!setshadelocation=\!hsetshadelocation%
  \!xS=\!M{#2}\!!xunit
  \!ybS=\!M{#3}\!!yunit
  \!ytS=\!M{#4}\!!yunit
  \!shadexorigin=\!xorigin  \advance \!shadexorigin \!shadesymbolxshift
  \!shadeyorigin=\!yorigin  \advance \!shadeyorigin \!shadesymbolyshift
  \ignorespaces}

\def\!lattice#1#2#3#4#5{%
  \!dimenA=#1
  \!dimenB=#2
  \!countB=\!dimenB
%
  \!dimenC=#3
  \advance\!dimenC -\!dimenA
  \!countA=\!dimenC
  \divide\!countA \!countB
  \ifdim\!dimenC>\!zpt
    \!dimenD=\!countA\!dimenB
    \ifdim\!dimenD<\!dimenC
      \advance\!countA 1 
    \fi
  \fi
  \!dimenC=\!countA\!dimenB
    \advance\!dimenC \!dimenA
  #4=\!countA
  #5=\!dimenC
  \ignorespaces}

\def\!qshade#1(#2,#3,#4)#5(#6,#7,#8){%
  \!xM=\!M{#2}\!!xunit
  \!ybM=\!M{#3}\!!yunit
  \!ytM=\!M{#4}\!!yunit
  \!xE=\!M{#6}\!!xunit
  \!ybE=\!M{#7}\!!yunit
  \!ytE=\!M{#8}\!!yunit
  \!getcoeffs\!xS\!ybS\!xM\!ybM\!xE\!ybE\!ybB\!ybC
  \!getcoeffs\!xS\!ytS\!xM\!ytM\!xE\!ytE\!ytB\!ytC
  \def\!getylimits{\!qgetylimits}%
  \!shade{#1}\ignorespaces}

\def\!lshade#1(#2,#3,#4){%
  \!xE=\!M{#2}\!!xunit
  \!ybE=\!M{#3}\!!yunit
  \!ytE=\!M{#4}\!!yunit
  \!dimenE=\!xE  \advance \!dimenE -\!xS
  \!dimenC=\!ytE \advance \!dimenC -\!ytS
  \!divide\!dimenC\!dimenE\!ytB
  \!dimenC=\!ybE \advance \!dimenC -\!ybS
  \!divide\!dimenC\!dimenE\!ybB
  \def\!getylimits{\!lgetylimits}%
  \!shade{#1}\ignorespaces}

\def\!getcoeffs#1#2#3#4#5#6#7#8{%
  \!dimenC=#4\advance \!dimenC -#2
  \!dimenE=#3\advance \!dimenE -#1
  \!divide\!dimenC\!dimenE\!dimenF
  \!dimenC=#6\advance \!dimenC -#4
  \!dimenH=#5\advance \!dimenH -#3
  \!divide\!dimenC\!dimenH\!dimenG
  \advance\!dimenG -\!dimenF
  \advance \!dimenH \!dimenE
  \!divide\!dimenG\!dimenH#8
  \!removept#8\!t
  #7=-\!t\!dimenE
  \advance #7\!dimenF
  \ignorespaces}

\def\!shade#1{%
  \!getshrinkages#1<,,,>\!nil
  \advance \!dimenE \!xS
  \!lattice\!!xshade\!dshade\!dimenE
    \!parity\!xpos
  \!dimenF=-\!dimenF
    \advance\!dimenF \!xE
  \!loop\!not{\ifdim\!xpos>\!dimenF}
    \!shadecolumn%
    \advance\!xpos \!dshade
    \advance\!parity 1
  \repeat
  \!xS=\!xE
  \!ybS=\!ybE
  \!ytS=\!ytE
  \ignorespaces}

\def\!vgetshrinkages#1<#2,#3,#4,#5>#6\!nil{%
  \!override\!lshrinkage{#2}\!dimenE
  \!override\!rshrinkage{#3}\!dimenF
  \!override\!bshrinkage{#4}\!dimenG
  \!override\!tshrinkage{#5}\!dimenH
  \ignorespaces}
\def\!hgetshrinkages#1<#2,#3,#4,#5>#6\!nil{%
  \!override\!lshrinkage{#2}\!dimenG
  \!override\!rshrinkage{#3}\!dimenH
  \!override\!bshrinkage{#4}\!dimenE
  \!override\!tshrinkage{#5}\!dimenF
  \ignorespaces}

\def\!shadecolumn{%
  \!dxpos=\!xpos
  \advance\!dxpos -\!xS
  \!removept\!dxpos\!dx
  \!getylimits
  \advance\!ytpos -\!dimenH
  \advance\!ybpos \!dimenG
  \!yloc=\!!yshade
  \ifodd\!parity 
     \advance\!yloc \!dshade
  \fi
  \!lattice\!yloc{2\!dshade}\!ybpos%
    \!countA\!ypos
  \!dimenA=-\!shadexorigin \advance \!dimenA \!xpos
  \loop\!not{\ifdim\!ypos>\!ytpos}
    \!setshadelocation
    \!rotateaboutpivot\!xloc\!yloc%
    \!dimenA=-\!shadexorigin \advance \!dimenA \!xloc
    \!dimenB=-\!shadeyorigin \advance \!dimenB \!yloc
    \kern\!dimenA \raise\!dimenB\copy\!shadesymbol \kern-\!dimenA
    \advance\!ypos 2\!dshade
  \repeat
  \ignorespaces}

\def\!qgetylimits{%
  \!dimenA=\!dx\!ytC              
  \advance\!dimenA \!ytB
  \!ytpos=\!dx\!dimenA
  \advance\!ytpos \!ytS
  \!dimenA=\!dx\!ybC              
  \advance\!dimenA \!ybB
  \!ybpos=\!dx\!dimenA
  \advance\!ybpos \!ybS}

\def\!lgetylimits{%
  \!ytpos=\!dx\!ytB
  \advance\!ytpos \!ytS
  \!ybpos=\!dx\!ybB
  \advance\!ybpos \!ybS}

\def\!vsetshadelocation{
  \!xloc=\!xpos
  \!yloc=\!ypos}
\def\!hsetshadelocation{
  \!xloc=\!ypos
  \!yloc=\!xpos}





\def\!axisticks {%
  \def\!nextkeyword##1 {%
    \expandafter\ifx\csname !ticks##1\endcsname \relax
      \def\!next{\!fixkeyword{##1}}%
    \else
      \def\!next{\csname !ticks##1\endcsname}%
    \fi
    \!next}%
  \!axissetup
    \def\!axissetup{\relax}%
  \edef\!ticksinoutsign{\!ticksinoutSign}%
  \!ticklength=\longticklength
  \!tickwidth=\linethickness
  \!gridlinestatus
  \!setticktransform
  \!maketick
  \!tickcase=0
  \def\!LTlist{}%
  \!nextkeyword}

\def\ticksout{%
  \def\!ticksinoutSign{+}}

\ticksout

\def\nogridlines{%
  \def\!gridlinestatus{\!gridlinestoofalse}}
\nogridlines

\def\loggedticks{%
  \def\!setticktransform{\let\!ticktransform=\!logten}}
\def\unloggedticks{%
  \def\!setticktransform{\let\!ticktransform=\!donothing}}
\def\!donothing#1#2{\def#2{#1}}
\unloggedticks

\expandafter\def\csname !ticks/\endcsname{%
  \!not {\ifx \!LTlist\empty}
    \!placetickvalues
  \fi
  \def\!tickvalueslist{}%
  \def\!LTlist{}%
  \expandafter\csname !axis/\endcsname}

\def\!maketick{%
  \setbox\!boxA=\hbox{%
    \beginpicture
      \!setdimenmode
      \setcoordinatesystem point at {\!zpt} {\!zpt}   
      \linethickness=\!tickwidth
      \ifdim\!ticklength>\!zpt
        \putrule from {\!zpt} {\!zpt} to
          {\!ticksinoutsign\!tickxsign\!ticklength}
          {\!ticksinoutsign\!tickysign\!ticklength}
      \fi
      \if!gridlinestoo
        \putrule from {\!zpt} {\!zpt} to
          {-\!tickxsign\!xaxislength} {-\!tickysign\!yaxislength}
      \fi
    \endpicturesave <\!Xsave,\!Ysave>}%
    \wd\!boxA=\!zpt}
  
\def\!ticksin{%
  \def\!ticksinoutsign{-}%
  \!maketick
  \!nextkeyword}

\def\!ticksout{%
  \def\!ticksinoutsign{+}%
  \!maketick
  \!nextkeyword}

\def\!tickslength<#1> {%
  \!ticklength=#1\relax
  \!maketick
  \!nextkeyword}

\def\!tickslong{%
  \!tickslength<\longticklength> }

\def\!ticksshort{%
  \!tickslength<\shortticklength> }

\def\!tickswidth<#1> {%
  \!tickwidth=#1\relax
  \!maketick
  \!nextkeyword}

\def\!ticksandacross{%
  \!gridlinestootrue
  \!maketick
  \!nextkeyword}

\def\!ticksbutnotacross{%
  \!gridlinestoofalse
  \!maketick
  \!nextkeyword}

\def\!tickslogged{%
  \let\!ticktransform=\!logten
  \!nextkeyword}

\def\!ticksunlogged{%
  \let\!ticktransform=\!donothing
  \!nextkeyword}

\def\!ticksunlabeled{%
  \!tickcase=0
  \!nextkeyword}

\def\!ticksnumbered{%
  \!tickcase=1
  \!nextkeyword}

\def\!tickswithvalues#1/ {%
  \edef\!tickvalueslist{#1! /}%
  \!tickcase=2
  \!nextkeyword}

\def\!ticksquantity#1 {%
  \ifnum #1>1
    \!updatetickoffset
    \!countA=#1\relax
    \advance \!countA -1
    \!ticklocationincr=\!axisLength
      \divide \!ticklocationincr \!countA
    \!ticklocation=\!axisstart
    \loop \!not{\ifdim \!ticklocation>\!axisend}
      \!placetick\!ticklocation
      \ifcase\!tickcase
          \relax 
        \or
          \relax 
        \or
          \expandafter\!gettickvaluefrom\!tickvalueslist
          \edef\!tickfield{{\the\!ticklocation}{\!value}}%
          \expandafter\!listaddon\expandafter{\!tickfield}\!LTlist%
      \fi
      \advance \!ticklocation \!ticklocationincr
    \repeat
  \fi
  \!nextkeyword}

\def\!ticksat#1 {%
  \!updatetickoffset
  \edef\!Loc{#1}%
  \if /\!Loc
    \def\next{\!nextkeyword}%
  \else
    \!ticksincommon
    \def\next{\!ticksat}%
  \fi
  \next}    
      
\def\!ticksfrom#1 to #2 by #3 {%
  \!updatetickoffset
  \edef\!arg{#3}%
  \expandafter\!separate\!arg\!nil
  \!scalefactor=1
  \expandafter\!countfigures\!arg/
  \edef\!arg{#1}%
  \!scaleup\!arg by\!scalefactor to\!countE
  \edef\!arg{#2}%
  \!scaleup\!arg by\!scalefactor to\!countF
  \edef\!arg{#3}%
  \!scaleup\!arg by\!scalefactor to\!countG
  \loop \!not{\ifnum\!countE>\!countF}
    \ifnum\!scalefactor=1
      \edef\!Loc{\the\!countE}%
    \else
      \!scaledown\!countE by\!scalefactor to\!Loc
    \fi
    \!ticksincommon
    \advance \!countE \!countG
  \repeat
  \!nextkeyword}

\def\!updatetickoffset{%
  \!dimenA=\!ticksinoutsign\!ticklength
  \ifdim \!dimenA>\!offset
    \!offset=\!dimenA
  \fi}

\def\!placetick#1{%
  \if!xswitch
    \!xpos=#1\relax
    \!ypos=\!axisylevel
  \else
    \!xpos=\!axisxlevel
    \!ypos=#1\relax
  \fi
  \advance\!xpos \!Xsave
  \advance\!ypos \!Ysave
  \kern\!xpos\raise\!ypos\copy\!boxA\kern-\!xpos
  \ignorespaces}

\def\!gettickvaluefrom#1 #2 /{%
  \edef\!value{#1}%
  \edef\!tickvalueslist{#2 /}%
  \ifx \!tickvalueslist\!endtickvaluelist
    \!tickcase=0
  \fi}
\def\!endtickvaluelist{! /}

\def\!ticksincommon{%
  \!ticktransform\!Loc\!t
  \!ticklocation=\!t\!!unit
  \advance\!ticklocation -\!!origin
  \!placetick\!ticklocation
  \ifcase\!tickcase
    \relax 
  \or 
    \ifdim\!ticklocation<-\!!origin
      \edef\!Loc{$\!Loc$}%
    \fi
    \edef\!tickfield{{\the\!ticklocation}{\!Loc}}%
    \expandafter\!listaddon\expandafter{\!tickfield}\!LTlist%
  \or 
    \expandafter\!gettickvaluefrom\!tickvalueslist
    \edef\!tickfield{{\the\!ticklocation}{\!value}}%
    \expandafter\!listaddon\expandafter{\!tickfield}\!LTlist%
  \fi}

\def\!separate#1\!nil{%
  \!ifnextchar{-}{\!!separate}{\!!!separate}#1\!nil}
\def\!!separate-#1\!nil{%
  \def\!sign{-}%
  \!!!!separate#1..\!nil}
\def\!!!separate#1\!nil{%
  \def\!sign{+}%
  \!!!!separate#1..\!nil}
\def\!!!!separate#1.#2.#3\!nil{%
  \def\!arg{#1}%
  \ifx\!arg\!empty
    \!countA=0
  \else
    \!countA=\!arg
  \fi
  \def\!arg{#2}%
  \ifx\!arg\!empty
    \!countB=0
  \else
    \!countB=\!arg
  \fi}

\def\!countfigures#1{%
  \if #1/%
    \def\!next{\ignorespaces}%
  \else
    \multiply\!scalefactor 10
    \def\!next{\!countfigures}%
  \fi
  \!next}

\def\!scaleup#1by#2to#3{%
  \expandafter\!separate#1\!nil
  \multiply\!countA #2\relax
  \advance\!countA \!countB
  \if -\!sign
    \!countA=-\!countA
  \fi
  #3=\!countA
  \ignorespaces}

\def\!scaledown#1by#2to#3{%
  \!countA=#1\relax
  \ifnum \!countA<0 
    \def\!sign{-}
    \!countA=-\!countA
  \else
    \def\!sign{}%
  \fi
  \!countB=\!countA
  \divide\!countB #2\relax
  \!countC=\!countB
    \multiply\!countC #2\relax
  \advance \!countA -\!countC
  \edef#3{\!sign\the\!countB.}
  \!countC=\!countA 
  \ifnum\!countC=0 
    \!countC=1
  \fi
  \multiply\!countC 10
  \!loop \ifnum #2>\!countC
    \edef#3{#3\!zero}%
    \multiply\!countC 10
  \repeat
  \edef#3{#3\the\!countA}
  \ignorespaces}

\def\!placetickvalues{%
  \advance\!offset \tickstovaluesleading
  \if!xswitch
    \setbox\!boxA=\hbox{%
      \def\\##1##2{%
        \!dimenput {##2} [B] (##1,\!axisylevel)}%
      \beginpicture 
        \!LTlist
      \endpicturesave <\!Xsave,\!Ysave>}%
    \!dimenA=\!axisylevel
      \advance\!dimenA -\!Ysave
      \advance\!dimenA \!tickysign\!offset
      \if -\!tickysign
        \advance\!dimenA -\ht\!boxA
      \else
        \advance\!dimenA  \dp\!boxA
      \fi
    \advance\!offset \ht\!boxA 
      \advance\!offset \dp\!boxA
    \!dimenput {\box\!boxA} [Bl] <\!Xsave,\!Ysave> (\!zpt,\!dimenA)
  \else
    \setbox\!boxA=\hbox{%
      \def\\##1##2{%
        \!dimenput {##2} [r] (\!axisxlevel,##1)}%
      \beginpicture 
        \!LTlist
      \endpicturesave <\!Xsave,\!Ysave>}%
    \!dimenA=\!axisxlevel
      \advance\!dimenA -\!Xsave
      \advance\!dimenA \!tickxsign\!offset
      \if -\!tickxsign
        \advance\!dimenA -\wd\!boxA
      \fi
    \advance\!offset \wd\!boxA
    \!dimenput {\box\!boxA} [Bl] <\!Xsave,\!Ysave> (\!dimenA,\!zpt)
  \fi}

\normalgraphs
\catcode`!=12 


 
\catcode`@=11 \catcode`!=11
  
\let\!pictexendpicture=\endpicture 
\let\!pictexframe=\frame
\let\!pictexlinethickness=\linethickness
\let\!pictexmultiput=\multiput
\let\!pictexput=\put

\def\beginpicture{%
  \setbox\!picbox=\hbox\bgroup%
  \let\endpicture=\!pictexendpicture
  \let\frame=\!pictexframe
  \let\linethickness=\!pictexlinethickness
  \let\multiput=\!pictexmultiput
  \let\put=\!pictexput
  \let\input=\@@input   
  \!xleft=\maxdimen  
  \!xright=-\maxdimen
  \!ybot=\maxdimen
  \!ytop=-\maxdimen}

\let\frame=\!latexframe

\let\pictexframe=\!pictexframe

\let\linethickness=\!latexlinethickness
\let\pictexlinethickness=\!pictexlinethickness

\let\\=\@normalcr
\catcode`@=12 \catcode`!=12

\definecolor{Brown}{rgb}{0.7,0.3,0}
\definecolor{Tan}{rgb}{0.823,0.707,0.549}

\newcommand{\Claim}{\\ \noindent{\it Claim:\quad}}
\newcommand{\Claimproof}{\noindent {\it Proof of claim:\quad}}

\newcommand{\q}{\quad}
\newcommand{\qq}{\qquad}

\newcommand{\Ref}[1]{(\ref{#1})}

\newcommand{\NatN}{{\mathbb{N}}}

\newcommand{\IntN}{{\mathbb{Z}}}

\newcommand{\Lattice}{{\mathbb{L}}}
\newcommand{\Edges}{{\mathbb{E}}}

\def\L{\left(}
\def\R{\right)}
\def\LH{\left[}
\def\RH{\right]}
\def\LC{\left\{}
\def\RC{\right\}}

\def\LV{\left|}
\def\RV{\right|}

\def\lcl{\left\lceil}
\def\rcl{\right\rceil}
\def\vert{{\,\hbox{\large$|$}\,}}
\def\Vert{\hbox{\LARGE$|$}}

\def\colM{\color{Maroon}}


\def\O#1{\overline{#1}}
\def\W#1{\widetilde{#1}}

\def\C#1{{\cal #1}}

\def\betas#1{{\beta_#1}}

\def\eps{\epsilon}
\def\omegas#1{{\omega_#1}}

\def\Bi#1#2{ \L {{#1}\atop{#2}} \R }

\def\n#1{{n_#1}}
\def\m#1{{m_#1}}

\def\z#1{{z_#1}}

\def\sstack#1#2{\hbox{\tiny$\begin{array}{l} \hbox{\scriptsize$#1$} \\ \hbox{\scriptsize$#2$} \end{array}$}}

\def\indic{\hbox{\large$\displaystyle\mathbbm{1}$}}

\def\sfrac#1#2{\hbox{\normalsize $\frac{#1}{#2}$}}


%
%
\begin{document}

\title{Phase diagram of inhomogeneous percolation with a defect plane
\thanks{Research supported by grants from NSERC (Canada)}
}


\author{G.K. Iliev$^1$,
E.J. Janse van Rensburg$^2$
and N. Madras$^3$}


\institute{
$^1$ Chemistry, 
University of Toronto, Toronto, M5S~1A1, Canada\\
$^2$ Mathematics and Statistics, 
York University, Toronto, M3J~1P3, Canada \\
$^3$ Mathematics and Statistics, 
York University, Toronto, M3J~1P3, Canada\\
              \email{madras@yorku.ca}           
}

\date{Received: date / Accepted: date}

\maketitle

\begin{abstract}
Let $\Lattice$ be the $d$-dimensional hypercubic lattice
and let $\Lattice_0$ be an $s$-dimensional sublattice,
with $2 \leq s < d$.  We consider a model of inhomogeneous bond
percolation on $\Lattice$ at densities $p$ and $\sigma$,
in which edges in $\Lattice\setminus \Lattice_0$
are open with probability $p$, and edges in $\Lattice_0$ open
with probability $\sigma$.  We generalizee several 
classical results of (homogeneous) bond percolation 
to this inhomogeneous model.
The phase diagram of the model is presented, and it is 
shown that there is a subcritical regime for $\sigma< \sigma^*(p)$ 
and $p<p_c(d)$ (where $p_c(d)$ is
the critical probability for homogeneous percolation in $\Lattice$),
a bulk supercritical regime for $p>p_c(d)$, and a surface 
supercritical regime for $p<p_c(d)$ and  $\sigma>\sigma^*(p)$.  
We show that $\sigma^*(p)$ is a strictly decreasing function 
for $p\in[0,p_c(d)]$, with a jump discontinuity at $p_c(d)$.  
We extend the Aizenman-Barsky differential inequalities for homogeneous
percolation to the inhomogeneous model and use them to 
prove that the susceptibility is finite inside the subcritical phase.   
We prove that the cluster size distribution decays exponentially
in the subcritical phase,
and sub-exponentially in the supercritical phases.  
For a model of lattice animals with a defect plane, the free energy 
is related to functions of the 
inhomogeneous percolation model, and we show how
the percolation transition implies a non-analyticity in the free
energy of the animal model.  Finally, we present simulation estimates
of the critical curve $\sigma^*(p)$.
\keywords{Percolation \and Phase diagram \and Inhomogeneous percolation}
\PACS{64.60.Ak;
    64.80.Gd;
    05.20.-y;
    02.50.+s}
\subclass{82B43 ; 60K35}
\end{abstract}

\section{Introduction}

Percolation \cite{BH57} in $\IntN^d$ is a lattice model of 
polymeric gelation \cite{PWRWO89,R89} and of chemical gelation
due to polymerisation of monomers or comonomers \cite{GRD02}.  In a 
percolation model the phenomenon of gelation is understood as a 
critical phenomenon \cite{CSK79} with characteristic scaling about a 
critical point called the percolation threshold.  Studies of gelation
from a percolation point of view are now classical \cite{S79,SCA82,KD01}, 
and were reviewed in references \cite{S79A,E72,HW83,SA94,E00}.

Surface phenomena in percolation have also received considerable attention
\cite{CCRS81,DB79,DB00,DBE81,DBE00,DBL93}.  This is a model of gelation along 
a defect plane or surface, and was also interpreted as a model of branch polymer 
adsorption in bulk \cite{DBE00} -- results in references \cite{DB79,DBE00} 
suggest that a surface transition is absent in two dimensional models.

Consider a model of inhomogeneous percolation in the hypercubic
 lattice $\IntN^d$ with an $s$-dimensional hyperplane $\IntN^s$
as a defect plane (where $2\leq s < d$).  This model 
has received some attention in the mathematical literature
\cite{NW97} (see reference \cite{MSS94} for a model of inhomogeneous
percolation with defect lines in the bulk lattice). 

Percolation along a defect plane may be considered as a model
of gelation along a surface defect, and it is known that
this phenomenon is associated with a \textit{surface transition} 
in addition to the usual bulk percolation 
phenomenon \cite{DBE81,DB00,DBE00,NW97}. 

There is a significant number of results known for (homogeneous) 
bond percolation \cite{G99,K82}.  Known results include
the location of the critical bond-percolation threshold in the 
square lattice \cite{K80} (see also \cite{H60}), the uniqueness of the critical 
point \cite{AB87,M86} and the decay rate of the clusters 
at the origin in the sub- and supercritical phases \cite{ADS80,AN84}.  
Analogous results for models of inhomogeneous percolation are incomplete,
and in this paper our aim is to provide some mathematical results to 
extend the standard theorems of homogeneous percolation to a 
model of inhomogeneous percolation.  This requires generalisation of
several of the classical results for homogeneous bond percolation.
A secondary goal is examine the connection between lattice models of
branch polymers close to a surface or defect plane and percolation
along a defect plane.

\subsection{Homogeneous percolation}
\label{sec-homog}

In this section we define some terms and notation, and we briefly 
review homogeneous percolation.   

The $d$-dimensional hypercubic lattice $\Lattice$ with vertices in
$\IntN^d$ has unit length edges joining nearest neighbour
vertices (or points) in $\IntN^d$. The set of edges of $\Lattice$ is 
denoted by $\Edges$.  We shall write $x{\sim}y$ to denote the 
edge that joins the vertices $x$ and $y$.

In bond percolation models, each edge $e\in\Edges$ has an 
associated random variable $\omega(e)$ with possible 
values $0$ and $1$. We say that the edge $e$ is \textit{open} if 
$\omega(e)=1$, and that $e$ is \textit{closed} if $\omega(e)=0$.  
In the present paper we always assume that the random variables 
$\omega(e)$ are independent.  

In the \textit{homogeneous} percolation model, the probability 
that $\omega(e)=1$ is the same for every $e$, and we denote 
this common value by $p$.

We call $p$ the \textit{density} of the  model.
We denote by $P^H_p$ the homogeneous (bond)-percolation
measure on $\Lattice$ at density $p$, and by $E^H_p$ the expectation
with respect to $P^H_p$. (The superscript ``H" 
will be used for functions describing homogeneous percolation, 
in contrast to the inhomogeneous model to be introduced below.)

\begin{figure}[t!]
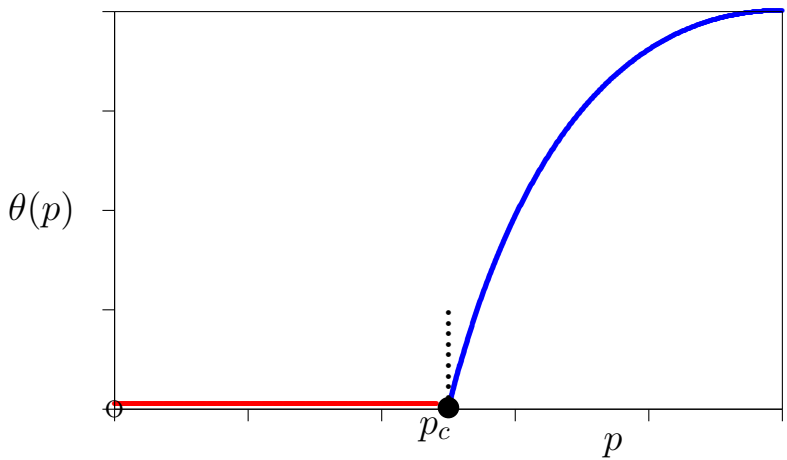

\centering
\input figure3.tex
\caption{A schematic graph of the probability that the cluster 
at the origin is infinite as a function of $p$ in homogeneous 
bond percolation.  This probability is zero for $p<p_c(d)$ 
and positive for $p>p_c(d)$.}
\end{figure}

The union of open edges is a subgraph $G$ of $\Lattice$.  In general, $G$
is not a connected graph, but is the union of a collection
of connected subgraphs of open edges.  For a vertex $x$, let 
$C(x)$ be the connected component of open edges containing $x$.  
We call $C(x)$ the open cluster at $x$.  

When $x$ is the origin, we write $C$ instead of $C(0)$.

The collection of closed edges incident with $C(x)$ is the 
\textit{perimeter} of $C(x)$.   

The \textit{size} of the cluster $C$ is the number of 
vertices in $C$, and is denoted $|C|$.  We shall also work
with the number of edges in $C$, which we denote by $\|C\|$.

The probability that the origin is in a cluster of infinite 
size is given by
\begin{equation}
 \theta_d^H(p) \;\equiv \;\theta^H(p)
  \;=\;\lim_{n\to\infty} P^H_{p}(|C|{\geq} n)\;=\; P^H_p(|C|{=}\infty)
\label{eqn1}   
\end{equation}
(we shall often omit the subscript $d$).

The fundamental property of percolation is that there
exists a critical density $p_c(d)\in(0,1]$ in the 
$d$-dimensional lattice (see reference \cite{G99}, section 1.4) 
such that
\begin{equation}
 \theta_d^H(p)\q 
\cases{
\;=\;0 & \mbox{if $p<p_c(d)$}, \cr
\;> \;0 & \mbox{if $p>p_c(d)$}.
}
\end{equation}
It is easy to see that $p_c(1)=1$; however the result that 
$p_c(2)=\sfrac{1}{2}$ requires considerably more effort \cite{K80}.
In general, it can be shown that $0< p_c(d+1) < p_c(d)$
\cite{K82,M87}.

The expected value of the size of the cluster at the origin is the 
\textit{susceptibility} defined by
\begin{equation}
\chi^H (p) \;=\; E^H_p|C| \;=\;\LH  \infty \cdot P^H_p(|C|{=}\infty)\RH
+ \sum_{n=1}^\infty n\,P^H_p(|C|{=}n) 
\end{equation}
(interpreting $\infty\cdot 0 = 0$).

If $p>p_c(d)$, then obviously $\chi^H(p)=\infty$.  It is also known that
$\chi^H(p)  < \infty$ whenever $p<p_c(d)$ 
(see references \cite{AB87,M86}, and also for 
example \cite{G99}).  This property is often referred to as the uniqueness of
the critical point.  The finite component of the susceptibility is given by
\begin{equation}
\chi^{f,H} (p) \;=\; E^H_p\left(|C|\, \indic_{|C|{<}\infty}\right)\;
=\; \sum_{n=1}^\infty n\,P^H_p(|C|{=}n) .
\end{equation}
Clearly $\chi^{f,H} (p) \leq \chi^{H} (p)$.

It is known that the limit 
\begin{equation}
 \zeta^H(p) \;=\; - \lim_{n \to \infty} \sfrac{1}{n} \log P^H_p(|C|{=}n)
\label{eqn4z}   
\end{equation}
exists and that $\zeta^H(p)>0$ if $p < p_c(d)$ \cite{KS78}.  Hence,
we have exponential decay of $P^H_p(|C|=n)$ in the subcritical regime.  
More explicitly, $P^H_p(|C|=n)$ is bounded from above by \cite{AN84}
\begin{equation}
 P^H_p(|C|{=}n) \; \leq \; 
P^H_p(|C|{\geq} n)\; \leq \;2\, e^{- \sfrac{n}{2[\chi^H(p)]^2} }
\q\hbox{for all $n$, if $p < p_c(d)$.}
\end{equation}

In the supercritical phase the cluster size distribution of the
cluster at the origin has sub-exponential decay \cite{ADS80}:
\begin{equation}
P_p^H(|C| {=} n) \;\geq\; e^{-\gamma(p) \, n^{(d-1)/d}}
\label{eqn444z}   
\end{equation}
where $\gamma(p)$ is a finite function of $p\in(p_c(d),1]$. By taking
logarithms, dividing by $n$ and letting $n\to\infty$, this shows that 
$\zeta^H(p) = 0$ for $p>p_c(d)$.

\begin{figure}[t!]
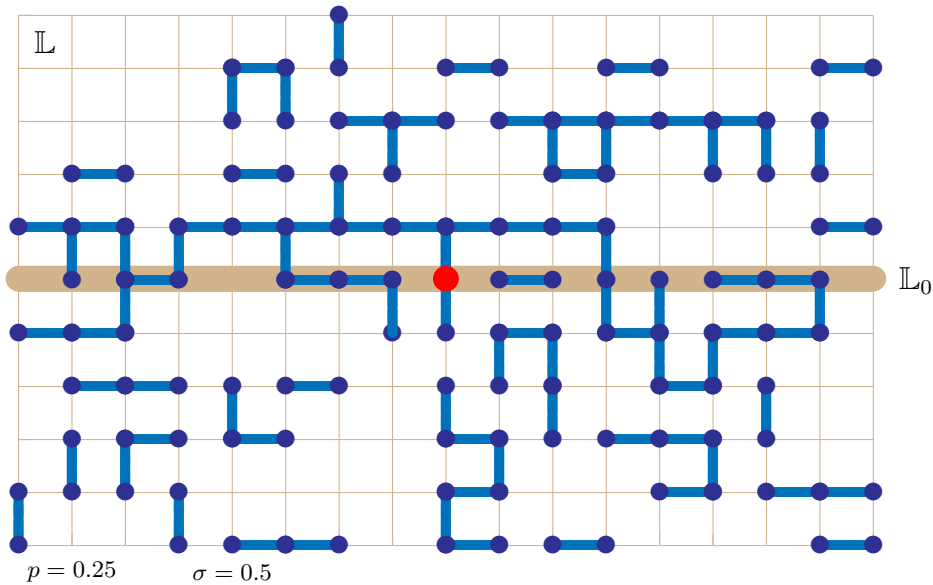

\centering\hfill
\input figure4.tex
\caption{Bond percolation in the square lattice $\Lattice$ with a 
(one dimensional) defect line $\Lattice_0$.  Edges in
$\Lattice_0$ are open with probability $\sigma$,  and 
(bulk) edges in $\Lattice\setminus\Lattice_0$ are open
with density $p$.  In this illustration a section of $\Lattice$
centered at the origin is shown and the densities were set
at $p=0.25$ and $\sigma=0.5$.}
\end{figure}

\subsection{Inhomogeneous percolation}

Let $\{\vec{e}_1,\vec{e}_2,\ldots,\vec{e}_d\}$ be 
the standard basis of unit vectors 
in the $d$-dimensional hypercubic lattice $\Lattice$.  
Choose an integer $s$ such that $2\leq s < d$ and let
$\Lattice_0$ be the $s$-dimensional sublattice of $\Lattice$
which contains the origin and has basis vectors
$\{\vec{e}_1,\vec{e}_2,\ldots,\vec{e}_s\}$. 

We shall view $\Lattice_0$ as a ``defect plane" 
or ``adsorbing surface" in $\Lattice$.  The set 
of edges or bonds with both endpoints in $\Lattice_0$ 
is $\Edges_0$, and we shall write that 
$\Lattice_0 \subseteq \Lattice$, since $\Edges_0 \subseteq \Edges$.

Inhomogeneous bond percolation is set up in $\Lattice$ 
with one density $\sigma$ for the defect plane $\Edges_0$ 
and another density $p$ for the bulk ($\Edges \setminus \Edges_0$).
Given $p,\sigma\in[0,1]$, the inhomogeneous percolation 
probability measure $P^I_{p,\sigma}$ is given by
\begin{equation}
P^I_{p,\sigma} ( \omega(e){=}1) \;=\;
P^I_{p,\sigma} ( \hbox{$e$ is open} ) \;=\; \cases{
p &\hbox{if $e\in \Edges\setminus\Edges_0$},\cr
\sigma & \hbox{if $e\in \Edges_0$},
} 
\label{eqnZ88}     
\end{equation}
with all edges independent.  The corresponding expectation 
is $E^I_{p,\sigma}$.

Open clusters $C(x)$ are defined as before.
The probability that $C$ (the cluster at the origin)
has infinite size is given by
\begin{equation}
\theta^I(p,\sigma) \;=\; \lim_{n\to\infty} P^I_{p,\sigma} (|C|{\geq} n)
 \;=\; P^I_{p,\sigma} (|C| = \infty)
\label{eqnZ99}     
\end{equation}
(we suppress the dimensions $d$ and $s$ in this notation).  We say that
percolation occurs if $\theta^I(p,\sigma)>0$.  Clearly,
$\theta^I(p,p) = \theta^H(p)$, and $\theta^I(p,\sigma)$ is a non-decreasing
function of its arguments---that is, $\theta^I(p,\sigma)
\leq \theta^I(p^\prime,\sigma')$ if $p \leq p^\prime$ and
$\sigma \leq \sigma^\prime$.

Similarly, the susceptibility is defined by
\begin{equation}
\chi^I(p,\sigma) \;=\; E^I_{p,\sigma}|C|  
\;=\; \left[ \infty\cdot P_{p,\sigma}^I(|C|{=}\infty)\right]  \,+\, 
\sum_{n=1}^\infty n\,P_{p,\sigma}^I(|C|{=}n)
\label{eqnZ14}     
\end{equation}
and we define
\begin{equation}
\chi^{f,I}(p,\sigma)\; =\; E^I_{p,\sigma}\left(|C| 
\,\indic_{|C|<\infty} \right) \;=\;
\sum_{n=1}^\infty n\,P_{p,\sigma}^I(|C|{=}n) .
\label{eqnZ15}     
\end{equation}

\begin{figure}[t!]
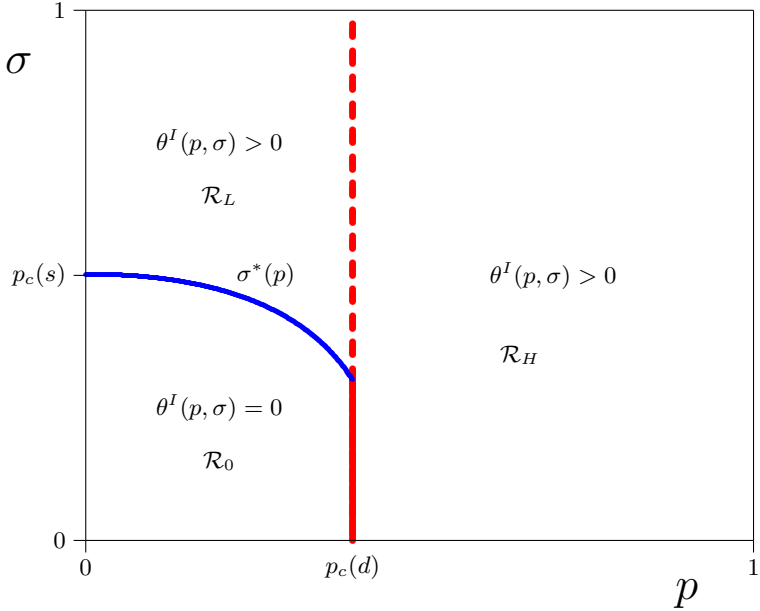

\centering\hfil
\input figure2N.tex
\caption{The phase diagram of inhomogeneous percolation.  The subcritical
phase is labeled by $\C{R}_0$, and we distinguish two supercritical phases:  The
regime labeled by $\C{R}_L$ is a surface supercritical phase, where percolation
has occurred along $\Lattice_0$ but has not penetrated into the bulk
of the lattice $\Lattice$.  In the regime marked by $\C{R}_H$ percolation 
occurs throughout the $d$-dimensional lattice $\Lattice$, since $p>p_c(d)$.}
\label{Diagram}    
\end{figure}

Figure \ref{Diagram} shows the three regimes of this model.
We shall begin with a formal definition of the three regimes $\C{R}_0$, $\C{R}_L$
and $\C{R}_H$, and then we shall describe their properties.

We define a critical curve $\sigma=\sigma^*(p)$ by
\begin{equation}
\label{eqn-sigmastardef}
\sigma^*(p) \;=\; \inf
\{\sigma\in[0,1] \, \vert \theta^I(p,\sigma) \;>\; 0\}  \hspace{5mm}
\hbox{for $p\in [0,1]$}.
\end{equation}
It is not hard to show that $\sigma^*(p)=0$ for $p>p_c(d)$, and $0<\sigma^*(p)<1$
for $0\leq p<p_c(d)$ (see proposition \ref{prop-sigstar}).
Accordingly, we define
\begin{eqnarray*}
    \C{R}_0  & = &  \{(p,\sigma)\in[0,1]^2 \,:\, \hbox{ $ p<p_c(d)$ and $\sigma<\sigma^*(p)$ } \}, \\
      \C{R}_L  & = &  \{(p,\sigma)\in[0,1]^2 \,:\, \hbox{ $ p<p_c(d)$ and $\sigma>\sigma^*(p)$ } \},
        \hbox{ and} \\
        \C{R}_H  & = &  \{(p,\sigma)\in[0,1]^2 \,:\, \hbox{ $ p>p_c(d)$  } \} .
\end{eqnarray*}

By definition, we see that $\theta^I(p,\sigma)=0$ at every point of  $\C{R}_0$.
Thus $\C{R}_0$ is the subcritical phase, in which every cluster is finite.  Also by 
the definition (\ref{eqn-sigmastardef}), we have
\begin{equation}
   \label{eqn-sigma0pcs}
     \sigma^*(0) \;=\; p_c(s) \,.
 \end{equation}

In $\C{R}_H$, the infinite cluster extends throughout the bulk.  
Indeed, for given $(p,\sigma)$ in $\C{R}_H$, the probability 
$P^I_{p,\sigma}(\vec{v}\in C)$ is bounded away from 
$0$ uniformly for all $\vec{v}$ in $\Lattice$ (see proposition \ref{prop-connect}).  
In contrast, for $(p,\sigma)$ in 
$\C{R}_L$, the probability $P^I_{p,\sigma}(v\in C)$ decays to 0 exponentially rapidly in
the distance from $v$ to $\Lattice_0$ (see lemma \ref{lemma888}).  Thus we
call $\C{R}_L$ the surface supercritical phase, where the infinite cluster stays close
to the defect plane rather than penetrating into the bulk, and we call $\C{R}_H$ 
the bulk supercritical phase, where the 
infinite cluster spreads throughout the whole lattice.
Alternatively, for large $N$ we see that $E(|C|\cap [-N,N]^d)$ is proportional to
$N^d$ when $(p,\sigma)$ is in $\C{R}_H$, and is proportional to $N^s$ when 
$(p,\sigma)$ is in $\C{R}_L$. 

From a slightly different perspective, 
if we fix $p<p_c(d)$, then increasing $\sigma$ takes the model through 
a percolation transition at $\sigma^*(p)$ from $\C{R}_0$ into  $\C{R}_L$.  
This transition is often 
referred to as a ``surface phase transition" 
in the model (\cite{DBL93}, see references \cite{DBE81,DB00} as well).

(We remark here that the case $s=1$, a defect line, has simpler behaviour:  for
every $\sigma$, we have 
$\theta^I(p,\sigma)=0$ if $p<p_c(d)$ and 
$\theta^I(p,\sigma)>0$ if $p>p_c(d)$ \cite{MSS94}.  That is, there is no 
$\C{R}_L$ phase.  For this reason, we assume $s\geq 2$ throughout this paper.)

If $p$ is increased so that $p>p_c(d)$ then the model goes 
through a bulk percolation threshold into a bulk super-critical 
regime $\C{R}_H$ --- see proposition \ref{prop-sigstar}.


The boundary between the regimes $\C{R}_0$ and $\C{R}_L$
will be denoted by $\C{R}_{0,L}$.
Obviously, the critical curve
$\sigma =\sigma^*(p)$  for $p\in[0,p_c(d))$ is a subset of $\C{R}_{0,L}$.  We expect
that they are equal (except perhaps for a point at $p=p_c(d)$), but we do not know this
rigorously because we cannot prove that the curve is continuous on $[0,p_c(d))$.  The curve
is obviously non-increasing, and Proposition \ref{prop.cubic} shows that it
is strictly decreasing on $[0,p_c(d)]$.

There is a jump discontinuity in $\sigma^*(p)$ at $p_c(d)$.
Indeed, proposition \ref{prop-sigstar}  in section \ref{section2} 
shows that $p_c(s)>\sigma^*(p)>p_c(d)$ for every $p$ in $(0,p_c(d))$
and that $\sigma^*(p)=0$ if $p>p_c(d)$.  Thus, the boundary between
$\C{R}_0$ and $\C{R}_H$ is a vertical line segment at $p=p_c(d)$.
For large enough $d$, Newman and Wu \cite{NW97} prove
the stronger result that $p_c(s) > \sigma^*(p_c(d)) > p_c(d)$ 
for $2\leq s \leq d-3$.  
They conjecture that this is true whenever $2\leq s<d$.
We prove that $p_c(s) > \sigma^*(p_c(d))$ in general (see corollary \ref{cor-ltpcs}).  
It would be much 
harder to prove $\sigma^*(p_c(d)) > p_c(d)$, since this would imply the
longstanding conjecture that $\theta_d^H(p_c(d))=0$.  In general, 
it seems hard to say much about $\sigma^*(p_c(d))$.

In section \ref{section3} we consider the uniqueness of the 
critical point.  This requires the generalisation of differential 
inequalities for homogeneous percolation in reference 
\cite{AB87} to the model in this paper.  This is done
in \ref{appendix1}, and the resulting modified inequalities 
are used to show that if $\chi^{I} (p,\sigma) =\infty$, 
then
$(p,\sigma)$ cannot be in the interior of $\C{R}_0$ (see theorem
\ref{theorem5}).

In sections \ref{section4} and \ref{section5}, we consider the distribution of 
the size of the cluster $C$ at the origin.
In the subcritical regime $\C{R}_0$, Theorem \ref{theoremZ6} shows
that the distribution 
of $|C|$ has exponential tails; more precisely, for every $n$
\begin{equation}
P_{p,\sigma}^I (|C| {=} n) 
\;\leq \; 2\,e^{-n/(2\,\chi^H(p)\chi^I(p,\sigma))} .
\end{equation}

In the supercritical regime $\C{R}_H$, there exists a $\gamma(p)>0$ such that
\begin{equation}
P_{p,\sigma}^I (\infty {>} |C| {\geq} n) 
\; \geq \;
P_{p,\sigma}^I (|C| {=} n) 
\; \geq \; e^{-\gamma(p) n^{(d-1)/d}} .
\label{eqn13AA}   
\end{equation}
See theorem \ref{theorem5-12} in section \ref{section5}.
This result should be compared with the situation in regime 
$\C{R}_L$, where we show in theorem \ref{thm999} that there
exist positive $\betas1$ and $\betas2$ (depending on $p$ and $\sigma$)
such that
\begin{equation}
P_{p,\sigma}^I (\infty {>} |C| {\geq} n) 
\;\geq\; \betas1 e^{-\betas2 n^{(s-1)/s}
(\log^2 n)^{d-s}}. 
\end{equation}
It follows that
\begin{equation}
\lim_{n\to\infty} \frac{1}{n^{(d-1)/d}}
\log P_{p,\sigma}^I (\infty {>} |C| {\geq} n) 
\;=\; 0 \q\hbox{in regime $\C{R}_L$}.
\end{equation}
This suggests two different behaviours for the tails of
$P_{p,\sigma}^I (\infty {>} |C| {\geq} n) $ in
the regimes $\C{R}_L$ and $\C{R}_H$. 

In section \ref{section6} we consider briefly the relation of 
$P^I_{p,\sigma}(|C|=n)$ to a lattice animal model of polymer collapse 
near a defect plane.  We show that there is a limiting free energy for the 
lattice animals which implies the existence of the limits
\begin{equation}
\psi^I(p,\sigma) \;=\; \lim_{n\to\infty} \sfrac{1}{n} 
\log P^I_{p,\sigma}(\|C\|=n) , \q
\zeta^I(p,\sigma) \;=\; \lim_{n\to\infty} \sfrac{1}{n} 
\log P^I_{p,\sigma}(|C|=n) .
\end{equation}
Using our knowledge of the percolation transition, we show 
that the lattice animal free energy is non-analytic
on certain curves. 

We coded the numerical algorithm of Newman and Ziff \cite{NZ00}  
for the inhomogeneous percolation model and collected data to
determine the location of the critical curve $\sigma^*(p)$
for low dimensions.  We present some results in section \ref{sectionNum}, 
including data on the case with the bulk density fixed at density $p$
near $p_c(d)$, where we obtain estimates of the curve $\sigma^*$, 
consistent with reference \cite{CCRS81}.

We conclude the paper in section \ref{sectionConc} with a summary
and some final remarks on the model.

\section{The Phase Boundaries}
\label{section2}     

This section proves properties of the critical curve $\sigma=\sigma^*(p)$, which 
is defined in equation (\ref{eqn-sigmastardef}) by
\[
\sigma^*(p) \;=\; \inf
\{ \sigma \in [0,1]\,\vert \hbox{$\theta^I(p,\sigma) > 0$} \}.
\]

Let $P_{p,\sigma}^I$ and $\theta^I(p,\sigma)$ be defined as in equations
\Ref{eqnZ88} and \Ref{eqnZ99}, with the homogeneous analogues
 $P_p^H$ and $\theta^H(p)$ as defined in section 1.1.
Observe that $P^I_{p,p}=P^H_p$,  $\theta^I(p,p)=\theta^H(p)\equiv \theta^H_d(p)$ and
$\theta^I(0,\sigma)= \theta^H_s(\sigma)$.

The following proposition verifies the basic structure of 
Figure \ref{Diagram}.  

\begin{proposition}
   \label{prop-sigstar}
The critical curve $\sigma=\sigma^*(p)$ satisfies
\begin{itemize}
\item[(a)] $\sigma^*(0)\;=\;0$ for $p\,>\,p_c(d)$, and 
\item[(b)]$0\,<\,p_c(d)\,\leq\,\sigma^*(p) \,\leq \, p_c(s)\,=\,\sigma^*(0)$ if $0\,\leq \, p\,< \,p_c(d)$.
\end{itemize}
In particular, $\sigma^*$ is discontinuous at $p=p_c(d)$.
\end{proposition}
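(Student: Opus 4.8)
The plan is to prove (a), then the two bounds in (b), and to read off the discontinuity as an immediate consequence; the only non-elementary input needed is the classical half-space percolation theorem. \emph{For (a)}, fix $p>p_c(d)$ and observe that it suffices to show $\theta^I(p,0)>0$, since $\theta^I(p,\sigma)$ is non-decreasing in $\sigma$ and hence this already forces $\sigma^*(p)=0$. The idea is to percolate to infinity using only bulk edges that stay off the defect plane. Let $\Lattice^+=\{x\in\IntN^d: x_d\geq 1\}$; since $s<d$, every edge with both endpoints in $\Lattice^+$ lies in $\Edges\setminus\Edges_0$, so under $P^I_{p,0}$ the edges internal to $\Lattice^+$ form an ordinary density-$p$ percolation configuration on a copy of $\IntN^{d-1}\times\IntN_{\geq 0}$. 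Because the critical probability of a half-space equals $p_c(d)$ (see \cite{G99}), for $p>p_c(d)$ the vertex $\vec e_d$ is joined to infinity within $\Lattice^+$ with positive probability; this event is independent of the bulk edge $0{\sim}\vec e_d$, which is open with probability $p>0$, and concatenating the two gives $\theta^I(p,0)>0$.

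\emph{For the bounds in (b)}, the equality $\sigma^*(0)=p_c(s)$ is exactly (\ref{eqn-sigma0pcs}): at $p=0$ only edges of $\Edges_0$ can be open, so $\theta^I(0,\sigma)=\theta^H_s(\sigma)$, which is positive precisely when $\sigma>p_c(s)$. The upper bound follows from monotonicity of $\theta^I$ in its first argument, which makes $\sigma^*$ non-increasing; hence $\sigma^*(p)\leq\sigma^*(0)=p_c(s)$ (equivalently, $\theta^I(p,\sigma)\geq\theta^I(0,\sigma)=\theta^H_s(\sigma)>0$ whenever $\sigma>p_c(s)$). For the lower bound, suppose $p<p_c(d)$ and $\sigma<p_c(d)$, and put $q=\max\{p,\sigma\}<p_c(d)$. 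Under $P^I_{p,\sigma}$ every edge is open with probability at most $q$, so $P^I_{p,\sigma}$ is stochastically dominated by the homogeneous measure $P^H_q$; applying this to the increasing event $\{|C|=\infty\}$ gives $\theta^I(p,\sigma)\leq\theta^H_d(q)=0$. Thus $\{\sigma\in[0,1]:\theta^I(p,\sigma)>0\}\subseteq[p_c(d),1]$, so $\sigma^*(p)\geq p_c(d)$; and $p_c(d)>0$ together with $p_c(d)<p_c(s)$ (using $d>s$) are among the classical facts recalled in Section \ref{sec-homog}, which makes the chain $0<p_c(d)\leq\sigma^*(p)\leq p_c(s)$ non-degenerate.

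\emph{The discontinuity} is then immediate: by (a), $\sigma^*(p)=0$ for all $p>p_c(d)$, whereas by (b), $\sigma^*(p)\geq p_c(d)>0$ for all $p<p_c(d)$; hence $\lim_{p\downarrow p_c(d)}\sigma^*(p)=0<p_c(d)\leq\liminf_{p\uparrow p_c(d)}\sigma^*(p)$, and $\sigma^*$ cannot be continuous at $p_c(d)$ regardless of the value $\sigma^*(p_c(d))$. The main obstacle is really just invoking the half-space theorem and verifying that every edge internal to $\Lattice^+$ is a bulk edge; everything else reduces to monotonicity in the parameters, stochastic domination by a homogeneous measure, and the standard inequalities $0<p_c(d)<p_c(s)<1$. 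As a byproduct, part (a) shows $\theta^I(p,\sigma)>0$ for every $\sigma\geq0$ when $p>p_c(d)$, which is the percolation claim for the regime $\C{R}_H$.
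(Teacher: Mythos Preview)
Your proof is correct and follows essentially the same approach as the paper: for (a) you use the half-space percolation result to percolate from $\vec e_d$ inside $\Lattice^+$ and append the bulk edge $0{\sim}\vec e_d$, and for (b) you use monotonicity/stochastic domination by $P^H_{\max\{p,\sigma\}}$ for the lower bound and the non-increasing nature of $\sigma^*$ together with $\sigma^*(0)=p_c(s)$ for the upper bound. The only additions are some extra detail (e.g., verifying that edges internal to $\Lattice^+$ avoid $\Edges_0$) and explicitly invoking $p_c(d)<p_c(s)$, which the paper records in Section~\ref{sec-homog}.
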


It is hard to say much about the value of $\sigma^*(p)$ 
at $p=p_c(d)$; see reference \cite{NW97}.

\begin{proof}
(a) 
Consider percolation in the $d$-dimensional half-space 
$\Lattice_+\subset \Lattice$ with vertices $\IntN^{d-1} 
\times \NatN$ (where $\NatN = \{1,2,3\ldots\}$) and the corresponding 
edges of $\Edges$.  
  
It is known that the critical density for homogeneous 
percolation  in $\Lattice_+$ is equal to $p_c(d)$ \cite{BGN91}.  
Thus, if $C_1$ is the connected component
containing the vertex  $(0,\ldots,0,1)$ in the subgraph 
induced by the open edges of $\Lattice_+$, then 
$P^H_p(|C_1|{=}\infty)\,>\,0$ for $p>p_c(d)$.  Moreover, 
by also considering the status of the single edge from the 
origin to $(0,\ldots,0,1)$, we have $P_{p,0}^I(|C(0)|{=}\infty) 
\geq p\,P^H_p(|C_1|{=}\infty)$.  Hence $\theta^I(p,0)\,>\,0$
if $p\,>\,p_c(d)$.  Part (a) follows.

\smallskip
\noindent
(b) Fix $p\,<\,p_c(d)$.  If $\sigma<p_c(d)$, then by monotonicity,
$\theta^I(p,\sigma) \,\leq \,\theta^H(\max\{p,\sigma\}) =0$ [since 
$\max\{p,\sigma\}\,<\,p_c(d)$].  This shows that $\sigma^*(p)\,\geq \,p_c(d)$.
We obtain $\sigma^*(p)\leq \sigma^*(0)=p_c(s)$ from equation (\ref{eqn-sigma0pcs})
and the fact that $\sigma^*$ is non-increasing in $p$.
\end{proof}

The phase boundary $\sigma^*(p)$ may be estimated for small $p$
in a mean field approximation using the approach in reference \cite{CCRS81}.
Consider percolation in the defect lattice $\Lattice_0$, which
has critical density $\sigma_c = p_c(s)$.  An infinite cluster
can grow in $\Lattice_0$ either along edges $x{\sim}y \in 
\Lattice_0$, or if such an edge is closed, then along a ``bridge"
of three edges in $\Lattice\setminus\Lattice_0$ in a $\sqcap$-shape,
and joining $x$ to $y$.  That is, a bridge of $x{\sim}y$ is
a sequence of three edges $x{\sim}r{\sim}t{\sim}y$ with
$r,t\in\Lattice\setminus\Lattice_0$.

The probability that $x{\sim}y$ is open is $\sigma$, and the
probability that a particular bridge of $x{\sim}y$ is open is
$p^3$.  Since $x{\sim}y$ is bridged by $2(d-s)$ bridges, the 
probability that at least one of them is open is
$1-(1-p^3)^{2(d-s)}$.   Hence, the probability that either
$x{\sim}y$ is open, or that it is closed and at least one of its bridges is
open, is given by $\sigma+(1-\sigma)(1-(1-p^3)^{2(d-s)})$.

An approximation is made by assuming that bridges of different
edges in $\Lattice_0$ are open or closed independently.  In this
approximation a cluster will grow to infinity along $\Lattice_0$
using bridges if the density of open edges or closed edges with
an open bridge is greater than $\sigma_c$,  i.e.\  if
\begin{equation}
\sigma_c \;<\; \sigma + (1-\sigma)\L 1-(1-p^3)^{2(d-s)}\R .
\end{equation}
Solving for $\sigma$ gives an estimate of $\sigma^*(p)$ for small $p$ :
\begin{equation}
\hspace{0mm}
\sigma^*(p) \;\simeq\; \frac{\sigma_c + (1-p^3)^{2(d-s)}-1}{(1-p^3)^{2(d-s)}}
\;= \;\sigma_c - 2(d-s)(1-\sigma_c)p^3 + O(p^6)
\label{eqnsigapp}
\end{equation}
(the approximation of \cite{CCRS81} is different only because they consider
a half-space with  $\Lattice_0$ being the boundary plane).
This result should be
good for small values of $p$ in particular, because the 
assumption that bridges are independent is better at low densities
of edges in the bulk lattice.

\subsection{Strict monotonicity of $\sigma^*(p)$}

The next proposition serves two purposes.  Firstly, it shows that $\sigma^*$
is a strictly decreasing function for $p\in [0,p_c(d)]$.   Secondly, it proves that
the cubic form of the mean-field approximation of equation \Ref{eqnsigapp}
(see reference \cite{CCRS81})
is a rigorous upper bound for $\sigma^*(p)$ when $p$ is close to 0.

\begin{proposition}
   \label{prop.cubic}
Assume $2\leq s<d$.
Fix $0\leq p<p_c(d)$.  Then there is a positive constant $A$ (possibly depending
on $p$) such that 
\[   \sigma^*(p+\delta) \;\leq\; \sigma^*(p)\,-\,A \,\delta^3   \hspace{5mm}
   \hbox{for all sufficiently small positive $\delta$.}
\]
In particular, $\sigma^*$ is strictly decreasing on $[0,p_c(d)]$.
\end{proposition}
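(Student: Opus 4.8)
The plan is to show that raising the bulk density from $p$ to $p+\delta$ buys enough extra connectivity near $\Lattice_0$ to afford lowering the surface density by an amount of order $\delta^3$. Concretely, I would establish the comparison: there is a constant $A=A(p)>0$ such that $\theta^I(p,\sigma)>0$ implies $\theta^I(p+\delta,\sigma-A\delta^3)>0$ for all sufficiently small $\delta>0$, with the smallness uniform in $\sigma$. Granting this, $\sigma^*(p+\delta)\le\sigma-A\delta^3$ for every $\sigma>\sigma^*(p)$, and letting $\sigma\downarrow\sigma^*(p)$ gives the displayed bound. Strict monotonicity on all of $[0,p_c(d)]$ then follows by chaining this estimate over small increments together with the known fact that $\sigma^*$ is non-increasing; the endpoint $p=p_c(d)$ and the range $p+\delta>p_c(d)$ are covered trivially, since there $\sigma^*\equiv 0$ (Proposition~\ref{prop-sigstar}) while $\sigma^*(p)\ge p_c(d)>0$.

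The extra connectivity comes from bridges. If $x{\sim}y$ is an edge of $\Lattice_0$ and one of its $2(d-s)$ $\sqcap$-shaped bridges $x{\sim}r{\sim}t{\sim}y$ (with $r,t\notin\Lattice_0$) is entirely open, then $x$ and $y$ are connected exactly as if $x{\sim}y$ were open. In the $(p+\delta,\sigma-A\delta^3)$ model I would call $x{\sim}y$ \emph{effectively open} if it is open or one of its bridges is open; then percolation of the effective-surface configuration on $\Lattice_0$ forces $\theta^I(p+\delta,\sigma-A\delta^3)>0$. To turn the increment $\delta$ into forced bridges cleanly I would attach to every bulk edge a second, independent uniform label and declare a bridge \emph{forced} when all three of its edges have this label below $\delta/(1-p)$, coupled so that forced implies open in the $(p+\delta)$-model without disturbing the density-$p$ part; a single bridge is then forced with probability of order $\delta^3$ --- this three-edge bridge is precisely what produces the cubic exponent --- and over the $2(d-s)$ bridges of a surface edge this raises the effective surface density above $\sigma-A\delta^3$ by an amount of order $\delta^3$. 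Choosing $A$ small (here $2\le s<d$ enters: $s\ge 2$ gives $p_c(s)<1$ so the surface is not saturated, and $s<d$ gives at least one transverse direction for bridges) makes the effective surface density at least $\sigma$, so the effective-surface configuration stochastically dominates homogeneous $\sigma$-percolation on $\Lattice_0$ and hence percolates whenever $(p,\sigma)$ does. Optimizing over all $2(d-s)$ bridge directions and letting $p\downarrow 0$ (where $\sigma^*(0)=p_c(s)$) recovers the mean-field coefficient $2(d-s)(1-p_c(s))$ of (\ref{eqnsigapp}) as a rigorous upper bound.

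I expect the technical heart --- and the main obstacle --- to be two intertwined points. First, bridges of two surface edges sharing a vertex share a transverse ``lift'' bulk edge, so the effectively-open events (and the forced-bridge events) are dependent; I would resolve this by conditioning on the statuses of all lift edges, after which each surface edge is effectively open through the independent randomness of its own state and of the (unshared) middle edges of its bridges, so the events become conditionally independent with conditional marginals of the required average. Second, the comparison must hold uniformly in $\sigma$, but $P^I_{p+\delta,\sigma-A\delta^3}$ and $P^I_{p,\sigma}$ are \emph{not} stochastically ordered (the surface marginal moved the wrong way), so there is no monotone coupling; one must pass through the bridge-enhanced intermediate measure, checking both that it dominates $P^I_{p,\sigma}$ (whence it percolates, using only $\sigma>\sigma^*(p)$) and that $P^I_{p+\delta,\cdot}$ dominates it. This is exactly an essential-enhancement comparison in the spirit of Aizenman--Grimmett, and their machinery also handles the range of $p$ bounded away from $0$, where $\sigma^*(p)<p_c(s)$ strictly and the crude comparison with $p_c(s)$ above is no longer tight: adding a forced bridge over each surface edge is an essential enhancement of the percolating model $P^I_{p,\sigma}$, so the enhanced critical surface density drops, by an amount at least of order $\delta^3$, which is the bound sought. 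Making the bridge-dependence bookkeeping coexist with this non-monotone, two-step comparison is where the real work lies; the reduction above, the $\sqcap$-bridge geometry, and the chaining to strict monotonicity are routine.
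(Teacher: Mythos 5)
Your key insight is the right one --- a three‑edge bridge over a surface edge produces the cubic gain, and the comparison must go through an intermediate bridge‑enhanced model because $P^I_{p+\delta,\sigma-A\delta^3}$ and $P^I_{p,\sigma}$ are not stochastically ordered. But the paper's resolution of the dependence problem is quite different from yours, and I think yours has a genuine gap as stated.

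\textbf{What the paper does.} Rather than use the $2(d-s)$ bridge directions already present in $\Lattice$ and then fight the resulting dependence, the paper builds a \emph{modified lattice} $\tilde{\Lattice}$: it adds, for each $u\in\Lattice_0$, $2s$ extra parallel copies of the lift edge $u\sim u+\vec{e}_d$, and for each $\beta\in\Edges_0$ an extra copy of $\beta^+=\beta+\vec{e}_d$. Each surface edge $\beta$ is then assigned its own dedicated three‑edge bridge $\C{S}(\beta)$ made of \emph{new} edges, so the sets $\C{S}(\beta)$ are pairwise disjoint and disjoint from $\Edges\setminus\Edges_0$. The ``effectively open'' events $Y(\beta)$ are therefore genuinely i.i.d.\ and independent of the bulk configuration, and one gets a clean two‑sided sandwich
$\theta^I(p+\delta,\sigma)\ge P^*_{p,\sigma,t}(|C|=\infty)\ge\theta^I\bigl(p,\,\sigma+(1-\sigma)t^3/(2s)^2\bigr)$
with $t=\delta/(1-p)$, from which the $\delta^3$ bound follows by elementary algebra and $\sigma^*(p)<1$. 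No enhancement theory is needed; only one transverse direction is used, and the $2s$ parallel copies are tuned so that the lumped lift edge has density exactly $p+\delta$.

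\textbf{Where your argument gaps.} Your fix for the shared‑lift‑edge dependence is to ``condition on the statuses of all lift edges, after which \ldots the events become conditionally independent with conditional marginals of the required average.'' Conditional independence is correct, but the conditional marginal of ``$\beta$ effectively open'' is \emph{not} uniformly bounded below by $\sigma'$: on the event that every lift edge at both endpoints of $\beta$ is closed, the conditional marginal is exactly $\sigma$. That these marginals \emph{average} to $\sigma'$ does not give stochastic domination of product Bernoulli$(\sigma')$ (increasing functions of shared randomness with correct marginals need not dominate the corresponding product measure), so the step ``the effective‑surface configuration stochastically dominates homogeneous $\sigma'$‑percolation on $\Lattice_0$'' does not follow. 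Your fallback to Aizenman--Grimmett essential enhancement could plausibly yield \emph{some} strict decrease, but it is much heavier machinery, and the quantitative $\delta^3$ rate is not automatic from it; you would still have to carry out the bridge bookkeeping to extract the exponent. The paper's parallel‑edge construction dissolves both difficulties at once, elementarily, which is why it is the route worth taking.
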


\begin{proof}  Consider a modified lattice obtained by adding some new edges 
to $\Lattice$ as follows (see figure \ref{Diagram3}).
For each vertex ${u}\in \Lattice_0$, let $\tilde{b}_0[{u}]$ be the edge of $\Edges$
joining ${u}$ to ${u}+\vec{e}_d$.  Introduce $2s$ new edges,
$\tilde{b}_1[{u}],\ldots,\tilde{b}_{2s}[{u}]$, each joining 
${u}$ to ${u}+\vec{e}_d$; thus we have $2s+1$ parallel edges 
joining these two vertices.   For each edge $\beta$ in $\Edges_0$, let 
$\beta^+ \,=\,\beta+\vec{e}_d$ (i.e., the edge in $\Edges$ obtained by translating
$\beta$ one unit in the $d^{th}$ coordinate direction).  Also, let $\tilde{\beta}^+$
be a new edge parallel to $\beta^+$ (i.e., having the same endpoints).
Let $\tilde{\Lattice}$ be the (inhomogeneous) lattice with sites $\IntN^d$ and edges
\[    \Edges \,\cup\, \{ \tilde{b}_j[{u}] \,:\, {u}\in \Lattice_0, 1\leq j\leq 2s\}  \,\cup\,
   \{ \tilde{\beta}^+ \,:\,\beta\in \Edges_0\}  \,.\]

\begin{figure}[t!]
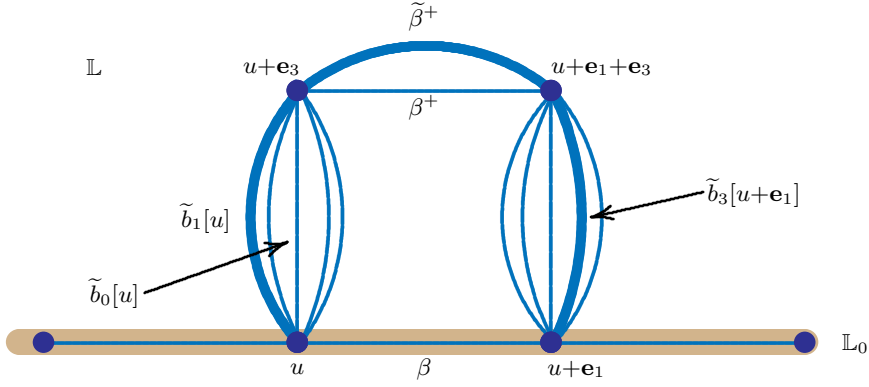

\centering\hfil
\input figure2P.tex
\caption{$\Lattice$ is modified in the proof of proposition
\ref{prop.cubic} by adding edges next to edges $\beta\in \Lattice_0$ as 
shown above for the case $d=3$, $s=2$ and $J=1$. The set $\C{S}(\beta)$
consists of the three bold edges.}
\label{Diagram3}    
\end{figure}

Let $P^*_{p,\sigma,t}$ be the probability measure for bond percolation on $\tilde{\Lattice}$ 
with parameters $p,\sigma,t\in [0,1]$ so that edges are independent and 
\[    
P^*_{p,\sigma,t}(\hbox{$e$ is open}) \;=\;  \cases{
p  & \hbox{if $e\in \Edges\setminus \Edges_0$}   \cr
\sigma & \hbox{if $e\in \Edges_0$}   \cr
t & \hbox{if $e=\tilde{\beta}^+$ for $\beta\in \Edges_0$}  \cr
1-(1-t)^{1/2s}  & \hbox{if $e=\tilde{\beta}_j[{u}], {u}\in \Lattice_0, 1\leq j\leq 2s$.}
}
\]

With each edge $\beta\in \Edges_0$, we associate a set $\C{S}(\beta)$ of three
edges in $\tilde{\Lattice}$ as follows.  Let ${u}\in\Lattice_0$ and $J\in\{1,\ldots,s\}$
be such that $\beta$ has endpoints ${u}$ and ${u}+\vec{e}_J$.   Then define
\[    \C{S}(\beta) \;=\;  \{  \tilde{\beta}^+, \,\tilde{b}_J[{u}],\, \tilde{b}_{J+s}[{u}+\vec{e}_J]\} \,.
\]
Thus the edges of $\C{S}(\beta)$ form a three-step path from one endpoint of 
$\beta$ to the other.  Note that for two distinct edges $\beta_1$ and $\beta_2$ in 
$\Edges_0$, the sets $\C{S}(\beta_1)$ and $\C{S}(\beta_2)$ are disjoint.  For 
each $\beta\in\Edges_0$, define the random variable
$Y(\beta)$ to be 1 if either $\beta$ is open, or if all three edges in $\C{S}(\beta)$ 
are open; and define $Y(\beta)$ to be 0 otherwise.   Then
\begin{eqnarray*}
   P^*_{p,\sigma,t}(Y(\beta)=1) & \;=\; &   \sigma \,+\, (1-\sigma)t(1-(1-t)^{1/2s})^2  \\
        & \;\geq\; & \sigma \,+\, \frac{(1-\sigma)t^3}{(2s)^2}   
\end{eqnarray*}
where we used the fact that $tw\leq 1-(1-t)^w$ for $t,w\in (0,1)$.  Since $\C{S}(\beta)$ is 
disjoint from $\Edges\setminus\Edges_0$, it follows that
\begin{equation}
    \label{eq-starI}
      P^*_{p,\sigma,t}(|C|{=}\infty) 
\;\geq \;   \theta^I\left(p,\sigma+\frac{(1-\sigma)t^3}{(2s)^2} \right)\,.
\end{equation}
Next, given a small positive $\delta$, let $t=\delta/(1-p)$, so that $1-(1-p)(1-t) =p+\delta$.
Then for each ${u}\in \Lattice_0$, we have
\begin{eqnarray}
  \nonumber
&  P^*_{p,\sigma,t}&(\hbox{ $\tilde{b}_j[{u}]$ is empty for all $j=0,\ldots,2s$ }) \\
& & \;=\;  (1-p)\left( (1-t)^{1/2s}\right)^{2s}   \nonumber  \\
        \label{eq.Pstar}
& &\;=\; (1-p) \,(1-t)  \;\;\;=\;\;\; 1-(p+\delta)\,. 
\end{eqnarray}
In addition,
\begin{equation}
   \label{eq.Pstar2}
      P^*_{p,\sigma,t}(\hbox{ $\beta^+$ or $\tilde{\beta}^+$ is occupied }) \;=\; 
        1 \,-\,(1-p)(1-t) \;=\;p+\delta\,.
\end{equation}
Combining equations \Ref{eq.Pstar} and \Ref{eq.Pstar2} shows that
percolation on 
$\tilde{\Lattice}$ governed by $P^*_{p,\sigma,t} $ is essentially the same as 
percolation on $\Lattice$ governed by a modification of $P_{p,\sigma}$ in which
some edges of $\Edges\setminus \Edges_0$ have their density raised from $p$
to $p+\delta$.  This shows that
\begin{equation}
    \label{eq-starH}
    \theta^I\left( p+\delta,\sigma \right)  
\;\geq \;  P^*_{p,\sigma,t}(|C|{=}\infty) \;\,.
\end{equation}
Combining equations \Ref{eq-starI} and \Ref{eq-starH} shows that 
$\theta^I(p+\delta,\sigma)\,>\,0$
if $\sigma+(1-\sigma)t^3/(2s)^2\,>\sigma^*(p)$, which can be rearranged to yield the inequality
\begin{equation} 
   \label{eq.sigbd}
     \sigma  \; >  \;   \frac{\sigma^*(p)-t^3/(2s)^{2}}{1-t^3/(2s)^2}   \,.
\end{equation}
Therefore the right-hand side of equation \Ref{eq.sigbd} is an upper bound for 
$\sigma^*(p+\delta)$.  The proposition follows (using the fact that $\sigma^*(p)<1$).
\end{proof}

\begin{corollary}
  \label{cor-ltpcs}
$\sigma^*(p_c(d))\,<\,p_c(s)$.
\end{corollary}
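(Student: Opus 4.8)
The plan is to read this off directly from Proposition \ref{prop.cubic}. That proposition asserts, among other things, that $\sigma^*$ is strictly decreasing on the closed interval $[0,p_c(d)]$. Since $p_c(d)>0$, the numbers $0$ and $p_c(d)$ are two genuinely distinct points of this interval with $0<p_c(d)$, so strict monotonicity gives $\sigma^*(p_c(d))<\sigma^*(0)$. By equation (\ref{eqn-sigma0pcs}) we have $\sigma^*(0)=p_c(s)$, and hence $\sigma^*(p_c(d))<p_c(s)$, which is exactly the claim.

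If one prefers not to invoke the ``strictly decreasing on the closed interval'' phrasing and instead to use the quantitative estimate in Proposition \ref{prop.cubic} directly, I would apply that estimate at the parameter value $p=0$ (legitimate since $0<p_c(d)$): it produces a constant $A>0$ such that, for every sufficiently small positive $\delta$, one has $\sigma^*(\delta)\le\sigma^*(0)-A\delta^3=p_c(s)-A\delta^3$. Fix one such value $\delta=\delta_0$, chosen small enough that in addition $\delta_0<p_c(d)$; then $\sigma^*(\delta_0)\le p_c(s)-A\delta_0^3<p_c(s)$. Finally, because $\theta^I(p,\sigma)$ is non-decreasing in $p$, the function $\sigma^*$ is non-increasing on $[0,1]$, so $\sigma^*(p_c(d))\le\sigma^*(\delta_0)<p_c(s)$, and the corollary follows.

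There is essentially no obstacle here: this corollary is a one-line consequence of the previously established Proposition \ref{prop.cubic}. The only points requiring (minimal) care are that $p_c(d)>0$, so that $0$ and $p_c(d)$ really are distinct arguments of $\sigma^*$, and --- if the quantitative version is used --- that the auxiliary value $\delta_0$ must simultaneously meet the smallness hypothesis of Proposition \ref{prop.cubic} and satisfy $\delta_0<p_c(d)$, after which monotonicity of $\sigma^*$ carries the bound from $\delta_0$ up to $p_c(d)$.
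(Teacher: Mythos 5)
Your first paragraph is exactly the paper's argument: strict monotonicity of $\sigma^*$ on $[0,p_c(d)]$ from Proposition \ref{prop.cubic}, combined with $\sigma^*(0)=p_c(s)$ from equation (\ref{eqn-sigma0pcs}), gives $\sigma^*(p_c(d))<\sigma^*(0)=p_c(s)$. The alternative quantitative version you sketch is a slightly more careful unpacking of the same two ingredients and is also fine, but it is not a different route.
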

\begin{proof}
By the strict monotonicity of $\sigma^*$ (proposition \ref{prop.cubic}) and 
equation (\ref{eqn-sigma0pcs}), we have $\sigma^*(p_c(d))\,<\,\sigma^*(0)\,=\,p_c(s)$.
\end{proof}

\section{Uniqueness of the critical point}
\label{section3}     

It follows trivially from equation \Ref{eqnZ14} that $\theta^I(p,\sigma)>0$
implies that $\chi^I(p,\sigma) = \infty$.  
To show a converse is more difficult. 
We would like to prove that the percolation transition is at the same place as
the transition from finite to infinite susceptibility.  Such an assertion, often
called the ``uniqueness of the critical point", is the subject of the 
following theorem, whose proof is the goal of this section.

\begin{theorem}
Suppose 
that $\chi^{I}(p_1,\sigma_1) = \infty$.  Then either
\begin{itemize}
\item[a)] $\theta^I(p_1,\sigma_1) > 0$, or
\item[b)] $\theta^I(p_1,\sigma_1) = 0$ and
$\theta^I(p_1+\Delta,\sigma_1+\Delta) > 0$ for all small positive $\Delta$; in
particular, $(p_1,\sigma_1)$ is a boundary point of $\C{R}_{0}$ .
\end{itemize}
\label{theorem5}    
\end{theorem}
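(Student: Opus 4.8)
\emph{Plan.} The plan is to introduce the standard ghost field and then run the Aizenman--Barsky differential-inequality argument along the diagonal direction $(p,\sigma)=(p_1+\Delta,\sigma_1+\Delta)$, using the inhomogeneous Aizenman--Barsky inequalities of \ref{appendix1}. First I would colour each vertex of $\Lattice$ green independently with probability $1-e^{-h}$ (with $h>0$) and set $M^I(p,\sigma,h)=P^I_{p,\sigma,h}(0\hbox{ is connected to a green vertex or to }\infty)$; conditioning on the bond configuration gives $M^I(p,\sigma,h)=1-E^I_{p,\sigma}\!\left(\indic_{|C|{<}\infty}\,e^{-h|C|}\right)$, so that $M^I(p,\sigma,0^+)=\theta^I(p,\sigma)$ and, by monotone convergence, $\partial_h M^I(p,\sigma,h)\uparrow\chi^{f,I}(p,\sigma)$ as $h\downarrow 0$. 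If $\theta^I(p_1,\sigma_1)>0$ we are in case (a) and done, so from now on assume $\theta^I(p_1,\sigma_1)=0$; then $\chi^{f,I}(p_1,\sigma_1)=\chi^I(p_1,\sigma_1)=\infty$, and moreover $p_1,\sigma_1<1$ (for instance $\sigma_1=1$ would force $C\supseteq\Lattice_0$ and hence $\theta^I(p_1,\sigma_1)=1$), so $(p_1+\Delta,\sigma_1+\Delta)\in[0,1]^2$ for all small $\Delta>0$.

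Next I would set $g(\Delta,h):=M^I(p_1+\Delta,\sigma_1+\Delta,h)$, so that $\partial_\Delta g=\partial_p M^I+\partial_\sigma M^I$ by the chain rule. The inhomogeneous Aizenman--Barsky inequalities of \ref{appendix1} should take the form
\[
  M^I\;\le\;h\,\partial_h M^I\,+\,(M^I)^2\,+\,p\,M^I\,\partial_p M^I\,+\,\sigma\,M^I\,\partial_\sigma M^I,
\]
together with bounds of the form $\partial_p M^I\le c_1\,M^I\,\partial_h M^I$ and $\partial_\sigma M^I\le c_2\,M^I\,\partial_h M^I$, with $c_1,c_2$ depending only on $d$ and $s$. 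Because $0\le p,\sigma\le 1$ and all the partial derivatives are non-negative, these would give
\[
  g\;\le\;h\,\partial_h g\,+\,g^2\,+\,g\,\partial_\Delta g,\qquad \partial_\Delta g\;\le\;(c_1{+}c_2)\,g\,\partial_h g,
\]
which are, up to the harmless constant, exactly the homogeneous Aizenman--Barsky differential inequalities with $\Delta$ in the role of the bond density. (Moving along the diagonal, rather than in $p$ or $\sigma$ separately, is precisely what lets the single term $g\,\partial_\Delta g$ absorb both of the ``derivative of $M^I$'' contributions in the first inequality.)

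Since $\partial_h g(0,h)\uparrow\chi^{f,I}(p_1,\sigma_1)=\infty$ as $h\downarrow 0$, the Aizenman--Barsky analysis of this pair of inequalities (as in \cite{AB87}, or \cite{G99}) would apply with only cosmetic changes and yield $g(\Delta,0^+)>0$ for all sufficiently small $\Delta>0$; that is, $\theta^I(p_1+\Delta,\sigma_1+\Delta)>0$, which is conclusion (b). Finally, since $\theta^I\equiv 0$ throughout $\C{R}_0$ while $\theta^I>0$ at points $(p_1+\Delta,\sigma_1+\Delta)$ arbitrarily close to $(p_1,\sigma_1)$, the point $(p_1,\sigma_1)$ cannot lie in the interior of $\C{R}_0$; combined with $\theta^I(p_1,\sigma_1)=0$ (which forces $p_1\le p_c(d)$ and $\sigma_1\le\sigma^*(p_1)$, hence places $(p_1,\sigma_1)$ in $\overline{\C{R}_0}$) this makes $(p_1,\sigma_1)$ a boundary point of $\C{R}_0$.

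The main obstacle is \ref{appendix1} itself: carrying the ghost-vertex and pivotal-edge surgery of Aizenman--Barsky through to a lattice carrying two distinct bond densities and proving the displayed differential inequalities --- including the finite-volume derivation, the passage to the infinite-volume limit, and the requisite regularity of $M^I$ in $(p,\sigma,h)$ for $h>0$. Granting those inequalities, the diagonal chain-rule reduction above is routine, and the ODE analysis is imported essentially unchanged from the homogeneous theory.
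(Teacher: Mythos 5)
Your high-level plan---ghost field, inhomogeneous Aizenman--Barsky differential inequalities, push along the diagonal $(p_1+\Delta,\sigma_1+\Delta)$---is exactly what the paper does (theorem \ref{thm3} followed by an integration step). But the form you guess for the inequalities hides a real issue, and you are missing a necessary case reduction.

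The inequalities actually proved in \ref{appendix1} (theorem \ref{theoremA9}) do \emph{not} have coefficients depending only on $d$ and $s$: the coefficient multiplying the derivative terms is $2d\,\chi^H(p)$, the homogeneous bulk susceptibility, which is finite only for $p<p_c(d)$ and diverges as $p\to p_c(d)^-$. The companion inequality also carries a $\chi^H(p)$ factor you dropped; in the paper's variables it reads
$\theta^I\le\gamma\,\partial_\gamma\theta^I+(\theta^I)^2+\chi^H(p)\,\theta^I\,\bigl(\vec{p}\cdot\nabla\theta^I\bigr)$.
Moreover both inequalities are established only under the hypothesis $p\le\sigma$. Consequently your ODE argument needs $p_1<p_c(d)$ and $p_1\le\sigma_1$ to get off the ground, and your proposal never secures either. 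The paper dispatches the excluded cases first: if $p_1\ge p_c(d)$ then proposition \ref{prop-sigstar}(a) gives the conclusion directly; if $p_1\ge\sigma_1$ then $\chi^I(p_1,\sigma_1)=\infty$ and monotonicity force $\chi^H(p_1)=\infty$, hence $p_1\ge p_c(d)$, reducing to the previous case. Once you add that reduction and restore the $\chi^H(p)$ factors---bounding them uniformly on a short diagonal segment, which is possible precisely because $p_1<p_c(d)$---your sketch matches the paper's. One last caveat: because $\chi^H(p(t))$ varies as $t$ moves along the diagonal, the integration is not literally ``imported unchanged'' from the homogeneous case; the paper inserts a short Claim exploiting $\chi^H\ge1$ and $(\chi^H)'\ge0$ to absorb the $t$-dependence before integrating.
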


By proposition \ref{prop-sigstar}(a), theorem \ref{theorem5}
holds whenever $p_1\geq p_c(d)$.  
If $p_1\geq \sigma_1$, then $\chi^{I}(p_1,\sigma_1) = \infty$ implies that
$\chi^H(p_1)=\infty$ (by monotonicity), which in turn 
implies that $p_1\geq p_c(d)$ by our knowledge of the 
homogeneous case.  
Thus theorem \ref{theorem5} holds whenever $p_1\geq \sigma_1$.

Consequently, for the rest of this section we 
shall assume that $p_1<p_c(d)$, $p_1<\sigma_1$, and $\chi^{I}(p_1,\sigma_1) = \infty$.  
If we also have $\chi^{f,I}(p_1,\sigma_1) < \infty$, then (by comparing 
equations (\ref{eqnZ14}) and (\ref{eqnZ15})),
we must have $\theta^I(p_1,\sigma_1)>0$, and we are done.  Therefore
we shall also assume that $\chi^{f,I}(p_1,\sigma_1) = \infty$.

The proof for the homogeneous case in reference \cite{AB87} (see also reference \cite{G99})
relies on augmenting the model to include a ghost vertex $g$.  
We follow a similar approach in the inhomogeneous case.

Thus we introduce the ghost vertex $g$ and edges $\Edges_g = 
\{g{\sim} x\vert \,x\in \Lattice\}$. 
Edges in $\Edges_g$ are open with probability $\gamma\in (0,1)$.
Define $G$ to be the (random) collection of vertices in $\Lattice$
adjacent to $g$ through open edges in $\Edges_g$.  

Percolation on $\Edges \cup \Edges_g$ has parameters
$(p,\sigma,\gamma)$. Since edges in $\Edges_g$ are open with 
probability $\gamma$, it follows that
\begin{equation}
\theta^I (p,\sigma,\gamma) \;=\; 1 - \sum_{n=1}^\infty (1-\gamma)^n\,
P^I_{p,\sigma}(|C| {=} n) 
\label{eqn34}    
\end{equation}
is the probability that there is an open path from the origin
to the ghost vertex $g$.  Observe that by Abel's theorem
\begin{eqnarray*}
\lim_{\gamma\to 0^+} \theta^I (p,\sigma,\gamma)
&\;=\; 1 - \sum_{n=1}^\infty P^I_{p,\sigma}(|C| {=} n) \\
&\;=\; 1 - P^I_{p,\sigma}(|C|{<}\infty) 
\;=\; P^I_{p,\sigma}(|C|{=}\infty) = \theta^I(p,\sigma). 
\end{eqnarray*}
Similarly, it is the case that
\begin{equation}
\chi^I (p,\sigma,\gamma) \;:=\; \sum_{n=1}^\infty n\,(1-\gamma)^{n} 
P^I_{p,\sigma} (|C|{=}n) \;=\; (1-\gamma) \,
 \frac{\partial \theta^I}{\partial \gamma}
\label{eqn35}    
\end{equation}
for $\gamma\in(0,1)$. We also have
\begin{equation}
\lim_{\gamma \to 0^+} \chi^I (p,\sigma,\gamma) \;=\; \chi^{f,I}(p,\sigma) .
\label{eqn35a}
\end{equation}
If $p < \sigma$, then theorem \ref{theoremA9} in the Appendix shows that
the functions $\theta^I(p,\sigma,\gamma)$ 
and $\chi^I(p,\sigma,\gamma)$ satisfy the differential inequalities 
\begin{eqnarray}
\vec{q} \cdot \nabla \theta^I (p,\sigma,\gamma)
& \;\leq\; &2d\, \chi^H(p)\,\theta^I(p,\sigma,\gamma) 
(1-\gamma) \frac{\partial \theta^I}{\partial \gamma} ,
\label{eqn37}    
\\
%
\theta^I(p,\sigma,\gamma) 
&\;\leq\; &\gamma\, \frac{\partial\theta^I}{\partial \gamma}
+  \L\theta^I(p,\sigma,\gamma)\R^2 \nonumber \\
& & \qq + \chi^H(p)\, \theta^I(p,\sigma,\gamma)
\L \vec{p}\cdot \nabla \theta^I(p,\sigma,\gamma) \R  
\label{eqn38}    
\end{eqnarray}
where $\nabla = (\sfrac{\partial}{\partial p},
\sfrac{\partial}{\partial \sigma})$, $\vec{p}=(p,\sigma)$ 
and $\vec{q} = (1-p,1-\sigma)$.  

\begin{theorem}
Assume that $p< p_c(d)$, $p\leq \sigma$, and  $\chi^{f,I} (p,\sigma) = \infty$.
Then there exists an  $\alpha=\alpha(p, \sigma)$ such that
\[ \theta^I(p,\sigma,\gamma) \;\geq\; \alpha\, \gamma^{1/2} \]
for all small positive values of $\gamma$.
\label{thm3}    
\end{theorem}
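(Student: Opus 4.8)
The plan is to deduce Theorem \ref{thm3} from the Aizenman--Barsky analysis of a coupled pair of differential inequalities in two variables, obtained by restricting the three-parameter function $\theta^I(p,\sigma,\gamma)$ to the diagonal line $\{(p+t,\sigma+t):t\ge 0\}$ and feeding in the inequalities \Ref{eqn37}--\Ref{eqn38} of theorem \ref{theoremA9}.

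First I would dispose of the degenerate cases. If $\theta^I(p,\sigma)>0$, then equation \Ref{eqn34} together with $(1-\gamma)^n\le 1$ gives $\theta^I(p,\sigma,\gamma)\ge\theta^I(p,\sigma)$, so the conclusion holds for any $\alpha$ once $\gamma$ is small; hence I may assume $\theta^I(p,\sigma)=0$. Opening all edges of $\Edges_0$ puts the origin in the infinite connected graph $\Lattice_0$, so $\theta^I(p,1)=1$, whence $\theta^I(p,\sigma)=0$ forces $\sigma<1$. Also, if $p=\sigma$ then $\chi^{f,I}(p,\sigma)=\chi^{f,H}(p)\le\chi^H(p)<\infty$ because $p<p_c(d)$, contradicting the hypothesis, so in fact $p<\sigma$. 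Finally, if $p=0$ the statement is the homogeneous Aizenman--Barsky theorem for percolation on $\Lattice_0\cong\IntN^s$: here $\theta^I(0,\sigma,\gamma)$ and $\chi^{f,I}(0,\sigma)$ are the corresponding homogeneous quantities in dimension $s$, and by uniqueness of the critical point in dimension $s$ the condition $\chi^{f,I}(0,\sigma)=\infty$ forces $\sigma=p_c(s)$, where the square-root bound is classical \cite{AB87}. So for the remainder I assume $0<p<\sigma<1$ and $p<p_c(d)$, which is precisely the regime in which \Ref{eqn37}--\Ref{eqn38} hold throughout a two-sided neighbourhood of $(p,\sigma)$.

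The core step is the diagonal reduction. Pick $\delta>0$ with $p+\delta<p_c(d)$ and $\sigma+\delta<1$, put $K:=\chi^H(p+\delta)$ (finite, and, by monotonicity of $\chi^H$, an upper bound for $\chi^H(p+t)$ whenever $0\le t\le\delta$), and define
\[ \Phi(t,\gamma)\;:=\;\theta^I(p+t,\sigma+t,\gamma),\qquad 0\le t\le\delta,\ 0<\gamma<1 . \]
Since $\theta^I$ is non-decreasing in each of $p,\sigma,\gamma$, its partial derivatives are non-negative and $\partial_t\Phi=\partial_p\theta^I+\partial_\sigma\theta^I$. Using $p+t\le 1$ and $\sigma+t\le 1$ in \Ref{eqn38} (so that $\vec{p}\cdot\nabla\theta^I\le\partial_t\Phi$ there) and $\chi^H(p+t)\le K$ gives
\[ \Phi\;\le\;\gamma\,\partial_\gamma\Phi\,+\,\Phi^2\,+\,K\,\Phi\,\partial_t\Phi . \]
Using instead $p\le\sigma$, hence $1-p-t\ge 1-\sigma-t\ge 1-\sigma-\delta>0$, in \Ref{eqn37} (so that $\vec{q}\cdot\nabla\theta^I\ge(1-\sigma-\delta)\,\partial_t\Phi$), together with $\chi^H(p+t)\le K$ and $(1-\gamma)\,\partial_\gamma\Phi\le\partial_\gamma\Phi$, gives
\[ \partial_t\Phi\;\le\;L\,\Phi\,\partial_\gamma\Phi,\qquad L:=\frac{2d\,K}{1-\sigma-\delta} . \]
Feeding the second inequality into the first yields the single Aizenman--Barsky-type inequality
\[ \Phi\;\le\;\big(\gamma+K L\,\Phi^2\big)\,\partial_\gamma\Phi\,+\,\Phi^2 ,\qquad 0\le t\le\delta,\ 0<\gamma<1 , \]
with $K$ and $L$ constants depending only on $p,\sigma$ (and $d$).

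It then remains to run the classical analytic argument (\cite{AB87}; see also \cite{G99}) at $t=0$. Write $m(\gamma):=\Phi(0,\gamma)=\theta^I(p,\sigma,\gamma)$, which is non-decreasing with $m(0^+)=\theta^I(p,\sigma)=0$ and $C^1$ on $(0,1)$. By \Ref{eqn35}, \Ref{eqn35a} and the hypothesis $\chi^{f,I}(p,\sigma)=\infty$ we have $(1-\gamma)\,m'(\gamma)=\chi^I(p,\sigma,\gamma)\to\infty$, hence $m'(\gamma)\to\infty$ as $\gamma\to0^+$. Rewriting the last display as $m(1-m)\le(\gamma+KL\,m^2)\,m'$ and using $1-m\ge\frac12$ for small $\gamma$, one integrates this differential inequality and combines it with the divergence $m'(\gamma)\to\infty$ (which is exactly what excludes the linear-in-$\gamma$ behaviour otherwise compatible with the inequality) to obtain $\alpha=\alpha(p,\sigma)>0$ with $m(\gamma)\ge\alpha\sqrt{\gamma}$ for all sufficiently small $\gamma>0$, as claimed. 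I expect the main obstacle to be not any individual estimate but arranging the diagonal reduction so that it respects the structural constraint $p\le\sigma$ under which \Ref{eqn37}--\Ref{eqn38} are valid and under which the coefficient $L$ stays finite; this is what forces one to move along the diagonal rather than, say, varying $\sigma$ alone, and it is handled by the reduction to $0<p<\sigma<1$ performed in the first step, which also secures the differentiability of $\theta^I$ in $(p,\sigma)$ used along the way.
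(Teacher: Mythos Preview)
Your proof is correct and follows essentially the same approach as the paper: both combine the two differential inequalities \Ref{eqn37}--\Ref{eqn38} to eliminate the $(p,\sigma)$-gradient and obtain a single ODE-type inequality $f\le(\gamma+Cf^2)f'+f^2$ in $\gamma$ alone at the fixed point $(p,\sigma)$, and then extract the square-root lower bound from that inequality together with $f'(\gamma)\to\infty$.

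The one difference worth noting is that your diagonal parameter $t$ is an unnecessary detour. You introduce $\Phi(t,\gamma)=\theta^I(p+t,\sigma+t,\gamma)$, derive inequalities for all $t\in[0,\delta]$, and then set $t=0$. The paper avoids this by observing directly that $p\le\sigma$ gives the algebraic bound
\[
\vec{p}\cdot\nabla\theta^I \;\le\; \frac{\sigma}{1-\sigma}\,\vec{q}\cdot\nabla\theta^I
\]
at the single point $(p,\sigma)$, which immediately feeds \Ref{eqn37} into \Ref{eqn38} to produce the same ODE (your route amounts to the cruder bound $\vec{p}\cdot\nabla\theta^I\le\partial_p\theta^I+\partial_\sigma\theta^I\le(1-\sigma)^{-1}\vec{q}\cdot\nabla\theta^I$). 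So the ``main obstacle'' you identify --- arranging a diagonal reduction compatible with $p\le\sigma$ --- is not actually present: no auxiliary parameter is needed, and the constraint $p\le\sigma$ is used only once, pointwise. The paper then carries out the final analytic step explicitly via the inverse function $h=f^{-1}$, showing $h(\phi)/\phi\to 0$ and integrating to get $h(x)\le\beta x^2$, which is precisely the classical argument you invoke.
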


\begin{proof}
Suppose that $p<p_c(d)$,  $0 <p \leq \sigma < 1$, and
$\chi^{f,I}(p,\sigma) = \infty$.
If $\theta^I(p,\sigma) >0$ then the theorem is trivial.
Thus, assume that $\theta^I(p,\sigma) = 0$.  This implies that 
$\lim_{\gamma\to 0^+} \theta^I(p,\sigma,\gamma) =0$.


Observe that since $p\leq \sigma$ and $1-p\geq 1-\sigma$,
\begin{eqnarray}
   \vec{p}\cdot \nabla \theta^I
   \;=\; p\frac{\partial \theta^I}{\partial p} + 
      \sigma\frac{\partial \theta^I}{\partial \sigma}\;&\leq &\; \frac{\sigma}{1-\sigma} 
    \L (1-p)\frac{\partial\theta^I}{\partial p} + 
      (1-\sigma)\frac{\partial \theta^I}{\partial \sigma} \R \nonumber \\
   \;&=&\; \frac{\sigma}{1-\sigma}\, \vec{q}\cdot \nabla \theta^I .
   \label{eq.pdottheta}
\end{eqnarray}
With $p$ and $\sigma$ fixed, put  $\theta^I(p,\sigma,\gamma) = f(\gamma)$.  The properties of 
$f(\gamma)$ are such that $f$ is strictly increasing and continuously differentiable 
on $(0,1)$ with $f(0)=0$ and $f(1) = 1$.  
Using equation \Ref{eq.pdottheta} to eliminate $\nabla \theta^I$ from the
differential inequalities \Ref{eqn37} and \Ref{eqn38}, we obtain
\begin{equation}
 f(\gamma) 
  \;\leq\; \gamma\, \frac{\partial f(\gamma)}{\partial \gamma}
+  f^2(\gamma)
+ \frac{2d\,\sigma}{1-\sigma}
\LH \chi^H(p) f(\gamma) \RH^2 (1-\gamma) 
\frac{\partial f(\gamma)}{\partial \gamma} \,.
\label{star4}
\end{equation} 
By the mean value theorem there exists a $\psi\in (0,\gamma)$ such that 
\[ f^\prime(\psi) \;=\; \frac{1}{\gamma} f(\gamma) .\]
As $\gamma\to 0^+$, $\psi\to 0^+$, so that by equations
\Ref{eqn35} and \Ref{eqn35a}, 
\[  \lim_{\gamma \to 0^+} \frac{\gamma}{f(\gamma)} \;=\; 0. \]
Define the inverse function of $f$ to be $h$. Then $h$ is
strictly increasing and continuously differentiable with
$h(0)=0$ and $h(1)=1$, and satisfying
\begin{equation}
\lim_{\phi \to 0^+} \frac{h(\phi)}{\phi} \;=\; 0  . 
\label{starstar4}
\end{equation} 
This shows that $h^\prime(\phi)$ is bounded on $(0,\Phi]$ 
for some $\Phi>0$ . Also, note that $\sfrac{d\,h}{d\phi} 
= \L \sfrac{d\,f}{d\gamma} \R^{-1}$.

By substituting $\gamma = h(\phi)$ and $f(\gamma)=\phi$ in 
equation \Ref{star4} and simplifying, we get
\begin{equation}
   \label{ineq.dhdphi}
  \frac{1}{\phi} \, \frac{d\,h}{d\phi} - \frac{h}{\phi^2}
   \;\leq\; \frac{2d\,\sigma}{1-\sigma} \L \chi^H (p) \R^2
    \L 1- h\R + \frac{d\,h}{d\phi} .
\end{equation}
Observe that $h$ is a strictly increasing function with bounds 
$0 \leq h(\phi) \leq 1$ where
$h(0)=0$ and $h(1)=1$.  Furthermore, since $h^\prime (\phi)$ is bounded
on $(0,\Phi]$, equation (\ref{ineq.dhdphi}) implies that 
there exists a $\beta(p,\sigma)>0$ such that
\[ 
\frac{1}{\phi} \, \frac{d\,h}{d\phi} - \frac{h}{\phi^2}
\;\leq\; \beta,\q\hbox{if $0 < \phi \leq \Phi$.} \]
Integrate this over $\phi \in (0,x)$ where $x \leq \Phi$ to get
\[ \frac{h(\phi)}{\phi} \Vert_0^x \;\leq\; \beta x . \]
By equation \Ref{starstar4}, this gives $h(x) \leq \beta x^2$ for all
$x\in (0,\Phi]$.  In terms of $f$ this says $\gamma \leq \beta\,
\L f(\gamma) \R^2$ for $\gamma \leq h(\Phi)$, or
$f(\gamma) \geq \alpha \gamma^{1/2}$ where
$\alpha = \beta^{-1/2}$.
\end{proof}

We shall now complete the proof of theorem \ref{theorem5} in the remaining 
situation of interest, namely 
that $p_1<p_c(d)$, $p_1<\sigma_1$ and $\chi^{f,I}(p_1,\sigma_1) = \infty$.
Let $\theta^I \equiv \theta^I(p,\sigma,\gamma)$ and $\kappa=\kappa(p) = \chi^H(p)$.
Write equation \Ref{eqn38} as
\begin{equation}
0 \;\leq\; \frac{1}{\theta^I} \,\frac{\partial \theta^I}{\partial\gamma}
+ \frac{1}{\gamma} \L
\theta^I - 1 + \kappa \, \vec{p} \cdot \nabla\theta^I \R .
\label{star5}
\end{equation}
If $\theta^I (p_1,\sigma_1) > 0$
then the proof is done, so assume that $\theta^I (p_1,\sigma_1) =0$.
Let $\Delta>0$ be small, and 
define $\O{p} = p_1 + \Delta/2$ and $\O{\sigma} = \sigma_1 + \Delta/2$.
Put $p\,=\,p(t)\, =\, p_1+t/2$ and $\sigma =\sigma(t)=\sigma_1+t/2$, and define
$\vec{p}(t) = \L p(t),\sigma(t) \R$.

\begin{figure}[t!]
\centering\hfill
\input figure9.tex
\caption{Integrating equation \Ref{sstar5} for $t\in(0,\Delta)$ takes
$(p_1,\sigma_1)$ to $(\O{p},\O{\sigma})$.}
\end{figure}

We shall follow the method as presented in reference \cite{G99} to show that
$\theta^I (\O{p},\O{\sigma}) > 0 $.
We begin with the following claim.  (Notice that $\kappa = \chi^H(p(t))$ and 
$\theta^I=\theta^I(p(t),\sigma(t))$ now depend on $t$.)
\smallskip
\Claim 
$\displaystyle
\theta^I - 1 + \kappa\, \vec{p}(t)\cdot \nabla \theta^I
\; \leq \;\frac{d}{dt} \L (p(t)+\sigma(t))(2\kappa\,\theta^I -1) \R$.

\Claimproof
\begin{eqnarray*}
& & \sfrac{d}{dt} \L (p(t)+\sigma(t))(2\kappa\,\theta^I -1) \R
 \;=\; 2\kappa\,\theta^I - 1
    + (p(t)+\sigma(t)) \sfrac{d}{dt} \L 2\kappa\,\theta^I \R  \\
 & \;=\; & 2\kappa\,\theta^I  - 1 + (p(t)+\sigma(t))
    \L \kappa^\prime \theta^I 
     + \kappa \L \sfrac{\partial}{\partial p}\theta^I
                 + \sfrac{\partial}{\partial \sigma} \theta^I\R \R  \\
& \;\geq\; & \theta^I - 1 + \kappa (p(t)+\sigma(t))
  \L \sfrac{\partial}{\partial p}\theta^I
   + \sfrac{\partial}{\partial \sigma} \theta^I\R  \\
& \;\geq\; & \theta^I - 1 + \kappa \, \vec{p}(t) \cdot \nabla \theta^I
\end{eqnarray*}
since $\kappa\geq 1$, $\kappa^\prime \geq 0$ and $\theta^I
\in [0,1]$ and is non-decreasing with $p$ and $\sigma$.  
This completes the proof of the claim. 

By the claim, equation \Ref{star5} implies
\begin{equation}
0 \;\leq\; \frac{1}{\theta^I} \,\frac{\partial \theta^I}{\partial\gamma}
+ \frac{1}{\gamma} \, \frac{d}{dt} 
\L (p(t)+\sigma(t))(2\kappa\,\theta^I -1) \R .
\label{sstar5}
\end{equation}
Integrate the first term with respect to $\gamma\in(\delta, \eps)$ to obtain
\[ \int_\delta^\eps  \frac{1}{\theta^I} 
\,\frac{\partial \theta^I}{\partial\gamma} \, d\gamma
\;=\; \log \theta^I(p(t),\sigma(t),\eps) 
      - \log \theta^I(p(t),\sigma(t),\delta).\]
Now integrate the result with respect to $t\in (0,\Delta)$, and use the bounds 
\[ \theta^I(p_1,\sigma_1,\gamma) 
\;\leq\; \theta^I(p(t),\sigma(t),\gamma) 
\;\leq\; \theta^I(\O{p},\O{\sigma},\gamma) \]
on $\theta^I$ to get
\begin{equation}
 \int_0^\Delta \LH \int_\delta^\eps  \frac{1}{\theta^I} 
\,\frac{\partial \theta^I}{\partial\gamma} \, d\gamma \RH dt
\;\leq\; \Delta \L \log \theta^I(\O{p},\O{\sigma},\eps) 
      - \log \theta^I(p_1,\sigma_1,\delta)\R .
\label{ssstar5}
\end{equation}

We integrate and bound the second term in equation \Ref{sstar5} similarly, as follows.
The integral with respect to $t$ is straightforward, and integrating the
result with respect to $\gamma$ gives
\begin{eqnarray*}
\int_\delta^\eps \LH (\O{p} + \O{\sigma}) 
                 \L 2\kappa(\O{p}) \theta^I(\O{p},\O{\sigma},\gamma)-1 \R
                 -  ({p_1} + {\sigma_1}) 
                 \L 2\kappa(p_1)\, \theta^I({p_1},{\sigma_1},\gamma)-1 \R \RH
                 \frac{d\gamma}{\gamma}& &\\
\; \leq \; \int_\delta^\eps \LH (\O{p} + \O{\sigma}) 
                 \L 2\kappa(\O{p}) \theta^I(\O{p},\O{\sigma},\eps)-1 \R
                 +  ({p_1} + {\sigma_1}) \RH 
                 \frac{d\gamma}{\gamma}& &\\
\;=\; \log(\eps/\delta) \L 2(\O{p} + \O{\sigma}) \,
         \kappa(\O{p}) \theta^I(\O{p},\O{\sigma},\eps)  - \Delta \R & &
\end{eqnarray*}
since $(\O{p}+\O{\sigma})-(p_1+\sigma_1) = \Delta$.
Using this and equation \Ref{ssstar5}, the inequality in
\Ref{sstar5} becomes
\[ 
 0 \;\leq \;
\frac{\Delta \L \log \theta^I(\O{p},\O{\sigma},\eps) 
      - \log \theta^I(p_1,\sigma_1,\delta)\R}{\log \eps - \log\delta} 
       +  2(\O{p} + \O{\sigma}) \,
         \kappa(\O{p})\, \theta^I(\O{p},\O{\sigma},\eps)  - \Delta .\]
By theorem \ref{thm3}, $\theta^I (p_1,\sigma_1,\gamma) 
\geq \alpha \gamma^{1/2}$.  Hence $-\log \theta^I(p_1,\sigma_1,\delta)
\;\leq\; - \log \alpha - \sfrac{1}{2} \log \delta$ for small $\delta$.  Substituting this into the
above and taking $\delta \to 0^+$ gives
\[ 0 \;\leq \;\frac{\Delta}{2} +  2(\O{p} + \O{\sigma}) 
        \, \kappa(\O{p})\, \theta^I(\O{p},\O{\sigma},\eps)  - \Delta \]
which can be rearranged to give
\[ \frac{\Delta}{2} \;\leq\; 4 \kappa(\O{p})\, \theta^I(\O{p},\O{\sigma},\eps) . \]
Taking $\eps \to 0^+$ completes the proof.

\section{Exponential decay of the cluster size distribution}
\label{section4}     

In this section the exponential decay of $P_{p,\sigma}^I(|C|=n)$ in the
subcritical phase is examined (corresponding to region $\C{R}_0$ in 
figure \ref{Diagram}).  Proving this follows the same general outline 
as for the similar result in homogeneous percolation, with 
minor modifications.

The two-point connectivity function is defined by
\begin{equation}
\tau^I_{p,\sigma} (x,y) 
\;=\; E^I_{p,\sigma} \indic_{ \{x\leftrightarrow y\} }
\label{eqnZ39}    
\end{equation}
where $\indic_{\{x\leftrightarrow y\}}$ is the indicator function of 
the event that there exists an open path joining vertices $x$ and $y$
in $\IntN^d$ (or the indicator function of the event that $x$ and $y$ 
belong to the same open cluster).  

Naturally, the number of vertices in the cluster at the origin is
$|C| = \sum_x \indic_{\{0\leftrightarrow x\}}$ and so the
susceptibility defined in equation \Ref{eqnZ14} may be expressed
in terms of the two point connectivity function via
\[
\chi^I(p,\sigma) \;=\;
E^I_{p,\sigma} |C| \;=\; E^I_{p,\sigma} \sum_x 
\indic_{\{0\leftrightarrow x\}} 
\;=\; \sum_x E^I_{p,\sigma} \indic_{\{0\leftrightarrow x\}} 
\;=\; \sum_x \tau^I_{p,\sigma} (0,x)\,.
\]

More generally, consider the open cluster $C(y)$ 
at the site $y$ and define
\begin{eqnarray*}
\chi^I(p,\sigma;y) 
\;=\; E^I_{p,\sigma} | C(y) | & \;=\; & E^I_{p,\sigma} \sum_x 
\indic_{\{y\leftrightarrow x\}}  \\
&\;=\;& \sum_x E^I_{p,\sigma} \indic_{\{y\leftrightarrow x\}} 
\;=\; \sum_x \tau^I_{p,\sigma} (y,x) .
\end{eqnarray*}

By lemma \ref{lemmaA6} and equation \Ref{eqnAchi} in the appendix,
\begin{equation}
\sum_x \tau^I_{p,\sigma} (y,x) 
  \; =\;  \chi^I(p,\sigma;y)
  \;\leq \;\chi^H(p) \, \chi^I (p,\sigma;0)   
  \hspace{5mm} \hbox{for every $y\in \IntN^d$}.
\label{eqnZ41}    
\end{equation}

A similar and generalised bound on the $(n{+}1)$-point connectivity 
function $\tau^I_{p,\sigma} (y,x_1,x_2,x_3,\ldots,x_n)$ defined by
\begin{eqnarray}
& & \tau^I_{p,\sigma} (y,x_1,x_2,x_3,\ldots,x_n) \nonumber\\
& & \;=\; E^I_{p,\sigma} \L
\indic_{\{y\leftrightarrow x_1\}} 
\indic_{\{y\leftrightarrow x_2\}}
\indic_{\{y\leftrightarrow x_3\}} \ldots 
\indic_{\{y\leftrightarrow x_n\}}
\R
\end{eqnarray}
should be determined.  This will give an upper bound on 
$E^I_{p,\sigma} |C|^n$ since
\begin{equation}
E^I_{p,\sigma} |C|^n \;=\; \sum_{x_1,x_2,\ldots,x_n}
\tau^I_{p,\sigma} (0,x_1,x_2,x_3,\ldots,x_n) .
\end{equation}

In the case of the three-point connectivity function, the arguments 
given in Chapter 6 of reference \cite{G99} can be extended to 
apply to the inhomogeneous model considered in this paper.  As such, we can 
state the following lemma without proof.

\begin{lemma}
For all values of $p$ and $\sigma$ and vertices $x_0$, $x_1$ and $x_2$,
\[ \tau^I_{p,\sigma}(x_0,x_1,x_2) \;\leq\;
\sum_y \tau^I_{p,\sigma} (y,x_0) \, \tau^I_{p,\sigma} (y,x_1) \,
 \tau^I_{p,\sigma} (y,x_2). \]
Thus, by equation \Ref{eqnZ41}, 
\[ 
E^I_{p,\sigma}  |C|^2  \;\leq\;
\sum_{y,x_1,x_2} \tau^I_{p,\sigma} (y,0) \, \tau^I_{p,\sigma} (y,x_1)\,
\, \tau^I_{p,\sigma} (y,x_2) 
\;\leq\; \L \chi^H(p) \, \chi^I (p,\sigma;0)  \R^3 . \]
\label{lemma5Z}    
\end{lemma}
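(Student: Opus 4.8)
The plan is to deduce both assertions from the classical tree-graph (van den Berg--Kesten, ``BK'') inequality. The crucial point is that $P^I_{p,\sigma}$ is a product measure on $\{0,1\}^{\Edges}$: although the edge weights differ between $\Edges\setminus\Edges_0$ and $\Edges_0$, the edges are still independent, so BK applies to $P^I_{p,\sigma}$ verbatim, and the argument of Chapter~6 of \cite{G99} transfers with only notational changes. That is why the lemma can be quoted there without proof; here I indicate the steps.

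First I would observe that the event $\{x_0\leftrightarrow x_1\}\cap\{x_0\leftrightarrow x_2\}\cap\{x_1\leftrightarrow x_2\}$ is exactly the event that $x_0,x_1,x_2$ all lie in one open cluster, and that $\tau^I_{p,\sigma}(x_0,x_1,x_2)$ is its probability. On this event one exhibits a ``hub'' vertex $y$ at which the three connections branch off edge-disjointly: fix any open path $\pi$ from $x_0$ to $x_1$ and any open path $\rho$ from $x_2$ to $x_0$, and let $y$ be the first vertex of $\rho$, traversed from $x_2$, that lies on $\pi$. Then the initial segment of $\rho$ realizes $\{x_2\leftrightarrow y\}$ while the two sub-paths into which $y$ splits $\pi$ realize $\{x_0\leftrightarrow y\}$ and $\{y\leftrightarrow x_1\}$, and these three edge sets are pairwise disjoint. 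Hence
\[
\{x_0\leftrightarrow x_1\}\cap\{x_0\leftrightarrow x_2\}\cap\{x_1\leftrightarrow x_2\}
\;\subseteq\; \bigcup_{y\in\IntN^d}
\big(\{x_0\leftrightarrow y\}\circ\{y\leftrightarrow x_1\}\circ\{y\leftrightarrow x_2\}\big),
\]
where $\circ$ denotes disjoint occurrence. Applying BK to each term and then the union bound over $y$ gives
\[
\tau^I_{p,\sigma}(x_0,x_1,x_2)\;\leq\;\sum_y \tau^I_{p,\sigma}(y,x_0)\,\tau^I_{p,\sigma}(y,x_1)\,\tau^I_{p,\sigma}(y,x_2),
\]
which is the first claim.

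For the second claim I would set $x_0=0$ in the inequality just proved and sum over $x_1,x_2$, using $E^I_{p,\sigma}|C|^2=\sum_{x_1,x_2}\tau^I_{p,\sigma}(0,x_1,x_2)$. Since $\sum_{x_1}\tau^I_{p,\sigma}(y,x_1)=\chi^I(p,\sigma;y)$ (and likewise for $x_2$), and $\chi^I(p,\sigma;y)\leq\chi^H(p)\,\chi^I(p,\sigma;0)$ for every $y$ by \Ref{eqnZ41}, one obtains
\[
E^I_{p,\sigma}|C|^2\;\leq\;\sum_{y,x_1,x_2}\tau^I_{p,\sigma}(y,0)\,\tau^I_{p,\sigma}(y,x_1)\,\tau^I_{p,\sigma}(y,x_2)
\;\leq\;\chi^I(p,\sigma;0)\,\big(\chi^H(p)\,\chi^I(p,\sigma;0)\big)^2,
\]
the last step using $\sum_y\tau^I_{p,\sigma}(y,0)=\chi^I(p,\sigma;0)$. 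Finally $|C|\geq1$ forces $\chi^H(p)\geq1$, so $\chi^I(p,\sigma;0)\leq\chi^H(p)\,\chi^I(p,\sigma;0)$ and the right-hand side is at most $\big(\chi^H(p)\,\chi^I(p,\sigma;0)\big)^3$.

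The only real obstacle is the combinatorial ``three-arm'' decomposition: verifying carefully that the hub $y$ exists and that the three connection events are realized on pairwise disjoint edge sets (including the degenerate cases $y\in\{x_0,x_1,x_2\}$). This is routine but is the one place that needs attention, and it is precisely what is imported from \cite[Ch.~6]{G99}; everything else is bookkeeping with BK, the union bound, and the already-established bound \Ref{eqnZ41}.
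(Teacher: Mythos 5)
Your proof is correct and follows exactly the route the paper has in mind: the tree-graph decomposition via a branching ``hub'' vertex $y$, the BK inequality (which applies verbatim to the product measure $P^I_{p,\sigma}$ even though the edge marginals are not uniform), and then summation using the uniform bound \Ref{eqnZ41} and $\chi^H(p)\geq 1$. The paper simply states the lemma without proof, deferring to the homogeneous argument in \cite[Ch.~6]{G99}; you have supplied the details it elides.
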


The generalisation of the above bound for the inhomogeneous model
proceeds along the same line as the argument given by Aizenman and 
Newman \cite{AN84} for the case of homogeneous percolation, 
involving the characterisation of connectivity functions 
as sums over labeled skeletons (trees with all interior vertices of degree 
three) \cite{G99}. 

Following the arguments for the homogeneous case one arrives
at the bound
\begin{equation}
\tau_{p,\sigma}^I(x_0,x_1,\ldots,x_n) \;\leq\;
\sum_S \sum_{\psi_x} \prod_{u{\sim} v \in S}
\tau_{p,\sigma}^I (\psi_x(u),\psi_x(v)) 
\label{eqn44}    
\end{equation}
in the notation of reference \cite{G99}.  The summation over $S$ is
over all labeled skeletons with $n+1$ exterior vertices 
(or \textit{end vertices}).  The summation over $\psi_x$ is a sum
over all admissible mappings from the vertex set of a skeleton $S$
into $\IntN^d$ (this is a summation over all possible $\psi_x(v)$
as $v$ takes on values in the interior vertices of $S$).  The product
is over all branches $u{\sim} v$ (edges joining adjacent vertices $u$ and $v$
in the graph theoretic sense) of $S$.

Equation \Ref{eqn44} must be summed over $x_j$ for $1\leq j \leq n$
to obtain an upper bound on $E_{p,\sigma}^I |C|^n$.  Since
the $x_i$ are end-vertices in $S$, and are vertices in the two-point
functions, one may use equation \Ref{eqnZ41} to bound these summations
from above.  That reduces equation \Ref{eqn44} to
\[
E_{p,\sigma}^I |C|^n \;\leq\; \L \chi^H(p)\,\chi^I(p,\sigma) \R^n\,
\sum_S \sum_{\psi} {\prod_{u{\sim} v}}^{\hskip 0mm \hbox{\Large$\prime$}} \,
 \tau_{p,\sigma}^I (\psi(u),\psi(v)) 
\]
and the primed product is only over branches $u{\sim} v$ where
$u$ and $v$ are vertices in the skeleton which are either the origin, or 
are interior vertices of $S$.  Performing the
summation over $\psi$ and using equation \Ref{eqnZ41} as a bound
gives
\begin{equation}
E_{p,\sigma}^I |C|^n \;\leq\; N_{n+1}\,
\L \chi^H(p)\chi^I(p,\sigma) \R^{2n-1}
\label{eqn46}     
\end{equation}
where
\begin{equation}
N_{n+1} \;=\; \frac{(2n-2)!}{2^{n-1}(n-1)!} 
\end{equation}
is the number of labeled skeletons with $n+1$ exterior (or end-)vertices.
This is the generalisation of lemma \ref{lemma5Z}.

The bound in equation \Ref{eqn46} is sufficient for the following theorem.

\begin{theorem}
Suppose that $\chi^I(p,\sigma)<\infty$.
Then for every $n$, 
\begin{equation}
  \label{eq.cltexp}
 P_{p,\sigma} (|C|\geq n) \;\leq\; 2 \, e^{-n/\L 2 \,(\chi^H(p)\,\chi^I(p,\sigma))^2\R}. 
\end{equation}
\label{theoremZ6}     
\end{theorem}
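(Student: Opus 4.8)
The plan is to follow the Aizenman--Newman passage from moment bounds to exponential tails, exactly as in the homogeneous case \cite{AN84,G99}; all of the model-dependent work is already in place, since equation~\Ref{eqn46} supplies, with $\kappa:=\chi^H(p)\,\chi^I(p,\sigma)$, the uniform moment bound $E^I_{p,\sigma}|C|^k\le N_{k+1}\,\kappa^{2k-1}$ for every $k\ge1$, where $N_{k+1}=(2k-2)!/(2^{k-1}(k-1)!)$. (If $\chi^H(p)=\infty$ then $\kappa=\infty$ and the claimed bound reads $2e^{0}\ge1$, so there is nothing to prove; hence we may assume $\kappa<\infty$, and note that $\kappa\ge1$ since $\chi^H,\chi^I\ge1$.)

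First I would record the elementary estimate $N_{k+1}\le 2^{k-1}(k-1)!$, which follows from $(2k-2)!/((k-1)!)^2\le 4^{k-1}$, so that $E^I_{p,\sigma}|C|^k\le 2^{k-1}(k-1)!\,\kappa^{2k-1}$. Then, for a fixed threshold $n$ and any positive integer $k$, Markov's inequality applied to $|C|^k$ gives
\[
 P^I_{p,\sigma}(|C|\ge n)\;=\;P^I_{p,\sigma}(|C|^k\ge n^k)\;\le\;\frac{E^I_{p,\sigma}|C|^k}{n^k}\;\le\;\frac{N_{k+1}\,\kappa^{2k-1}}{n^k}\,.
\]
The ratio of the right-hand side at $k+1$ to its value at $k$ equals $(2k-1)\kappa^2/n$, so this bound is essentially minimized at $k\approx n/(2\kappa^2)$; I would take $k=\lfloor n/(2\kappa^2)\rfloor$ whenever that is a positive integer. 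Substituting this $k$ and using Stirling's formula to estimate $N_{k+1}=(2k-3)!!\sim\sqrt{2}\,(2k/e)^{k-1}$ then yields $P^I_{p,\sigma}(|C|\ge n)\le 2\,e^{-n/(2\kappa^2)}$, the prefactor $2$ comfortably absorbing the $O(1)$ errors from the integer rounding of $k$ and from Stirling. The remaining small-$n$ range, in which $\lfloor n/(2\kappa^2)\rfloor$ is not a usable positive integer, is dealt with trivially: whenever $n\le 2\kappa^2\log 2$ the right-hand side $2e^{-n/(2\kappa^2)}$ already exceeds $1\ge P^I_{p,\sigma}(|C|\ge n)$.

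Since the combinatorial heavy lifting --- extending the skeleton / tree-graph moment inequality to the inhomogeneous model --- has already been carried out above (equation~\Ref{eqn46}), there is no genuine obstacle at this stage; the one thing that demands care is the bookkeeping in the final step, namely choosing $k$ and invoking Stirling with enough explicit control to reach the clean constants (prefactor $2$, rate $1/(2\kappa^2)$). I would remark that any weaker prefactor --- even one growing polynomially in $n$ --- would serve equally well downstream, since it still gives exponential decay of the cluster size, hence a strictly positive decay rate, throughout the subcritical phase $\C{R}_0$; so in a pinch one could simply invoke the homogeneous computation of \cite{AN84,G99} verbatim with $\kappa$ in place of $\chi$. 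Finally, the bound for $P^I_{p,\sigma}(|C|=n)$ follows immediately from $\{|C|=n\}\subseteq\{|C|\ge n\}$.
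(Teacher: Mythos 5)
Your proposal reaches the right conclusion from the same key input (the skeleton/tree-graph moment bound of equation \Ref{eqn46}), but the route from moments to tails is genuinely different from the paper's. The paper computes the \emph{generating function} of the moments: it writes
\[
E_{p,\sigma}^I\bigl(|C|\,e^{t|C|}\bigr)=\sum_{k\geq 0}\tfrac{t^k}{k!}E_{p,\sigma}^I|C|^{k+1}
\;\leq\; \kappa\sum_{k\geq 0}\tfrac{(2k)!}{2^k(k!)^2}(t\kappa^2)^k
\;=\;\frac{\kappa}{\sqrt{1-2t\kappa^2}}
\]
(with $\kappa=\chi^H\chi^I$), applies Markov's inequality to $|C|e^{t|C|}$, and then optimizes over the continuous parameter $t$ with the explicit choice $t=\sfrac{1}{2\kappa^2}-\sfrac{1}{2n}$, obtaining $P(|C|\geq n)\leq \sqrt{e/n}\,e^{-n/(2\kappa^2)}$ for $n>\kappa^2$. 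The closed form avoids Stirling entirely and makes the constants fall out cleanly. You instead apply Markov directly to $|C|^k$ and optimize over \emph{integer} $k$; this is the same Chernoff-type idea (the exponential moment is, after all, the generating function of the polynomial moments), but it costs you Stirling estimates and integer rounding. Your overall structure is sound, but the bookkeeping you defer is nontrivial: with $k=\lfloor n/(2\kappa^2)\rfloor$ one picks up an extra factor of order $e$ from the floor and another from Stirling, and making the final constant come out as exactly $2$ requires a case analysis for small $k$ (e.g.\ $k\in\{1,2\}$ when $\kappa$ is close to $1$) that your ``trivial range'' remark ($n\leq 2\kappa^2\log 2$) does not cover; the intermediate window $2\kappa^2\log 2 < n < 2\kappa^2$ needs, say, the $k=1$ Markov bound $P(|C|\geq n)\leq \chi^I/n$ checked against the claim. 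These gaps are patchable, and as you say, any exponential rate suffices for the downstream applications (corollary \ref{Cor7X}); but if you want the clean constants claimed in the theorem, the paper's generating-function route is the path of least resistance.
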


\begin{proof}
The proof follows the approach in Grimmett \cite{G99}.  Use equation
\Ref{eqn46} to see that
\begin{eqnarray*}
E_{p,\sigma}^I\L |C|\,e^{t|C|} \R
& \;=\; & \sum_{n=0}^\infty \sfrac{t^n}{n!}\,
          E_{p,\sigma}^I\L |C|^{n+1}\R  \\
&  \;\leq\; & \L \chi^H(p)\,\chi^I(p,\sigma)\R\,
      \LH 1+\sum_{n=1}^\infty \sfrac{t^n}{n!}
          N_{n+2} \L \chi^H(p)\, \chi^I(p,\sigma) \R^{2n}\RH \\ 
& \;=\; & \frac{\chi^H(p)\,\chi^I(p,\sigma)}{
\sqrt{1-2t\L \chi^H(p)\,\chi^I(p,\sigma) \R^2}} 
\end{eqnarray*}
whenever $t\in\LH 0,\sfrac{1}{2} \L \chi^H(p)\,\chi^I(p,\sigma) \R^{-2} \R$.
Markov's inequality \cite{GS01} then shows that
\[ P_{p,\sigma}^I(|C|{\geq} n) 
\;=\; P_{p,\sigma}^I (|C|e^{t|C|} {\geq} n\,e^{tn} ) 
\;\leq\; \frac{1}{n\,e^{tn}} \,
E_{p,\sigma}^I (|C| e^{t|C|} ) . \]
This shows that
\[ P_{p,\sigma}^I(|C|{\geq} n) \;\leq\; 
\frac{\chi^H(p)\,\chi^I(p,\sigma)}{n\,e^{tn} 
\sqrt{1-2t\L \chi^H(p)\,\chi^I(p,\sigma) \R^2}} .\]

The final step in the proof is to choose an appropriate value 
for $t$.  The last inequality is valid for 
$0 \leq t < \sfrac{1}{2}
 \L \chi^H(p)\,\chi^I(p,\sigma) \R^{-2}$, so put
\[ t \;=\; \frac{1}{2\L \chi^H(p)\,\chi^I(p,\sigma)\R^2} - \frac{1}{2n} . \]
If $n> \L \chi^H(p)\,\chi^I(p,\sigma)\R^2$ then $t>0$,
and with this choice of $t$ one gets
\[ P_{p,\sigma}^I(|C|{\geq} n) \;\leq\; 
\sfrac{\sqrt{e}}{\sqrt{n}} \,
e^{- \sfrac{n}{2\L \chi^H(p)\chi^I(p,\sigma) \R^2 }} \]
and equation (\ref{eq.cltexp}) follows.  Finally, (\ref{eq.cltexp}) is 
trivially true for $n\leq  \L \chi^H(p)\,\chi^I(p,\sigma)\R^2$ because $2\,e^{-1/2}>1$.
This completes the proof.
\end{proof}

Since for each cluster $C$ one has $\sfrac{1}{d} \|C\|
\leq | C| \leq \|C\|+1$, it follows that
$P_{p,\sigma}^I(\|C\| {\geq} n)\,\leq \,
 P_{p,\sigma}^I(|C| {\geq} \,n/d )$
and we get the following a corollary of theorem \ref{theoremZ6}:

\begin{corollary}
Suppose that 
$(p,\sigma)$ is in the interior of $\C{R}_0$.
Then there exists a function $\lambda^I (p,\sigma) > 0$ such that
\[ P_{p,\sigma}^I(\|C\|{=}n) \;\leq\; P_{p,\sigma}^I(\|C\|{\geq} n)
\;\leq\; e^{-n\,\lambda^I(p,\sigma)} .\]
\label{Cor7X}     
\end{corollary}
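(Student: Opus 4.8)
The plan is to read the corollary off from Theorem~\ref{theoremZ6} together with the comparison $P^I_{p,\sigma}(\|C\|\geq n)\leq P^I_{p,\sigma}(|C|\geq n/d)$ noted immediately before the statement. The only point that requires any thought is checking that Theorem~\ref{theoremZ6} is applicable at the point in question, i.e.\ that $\chi^I(p,\sigma)<\infty$ at every interior point of $\C{R}_0$.

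First I would establish finiteness of the two susceptibilities that appear in the bound. Every point of $\C{R}_0$ has $p<p_c(d)$, so $\chi^H(p)<\infty$ by the classical homogeneous result recalled in Section~\ref{sec-homog}. For $\chi^I(p,\sigma)$ I would argue by contradiction: if $\chi^I(p,\sigma)=\infty$, then Theorem~\ref{theorem5} forces one of its two alternatives to hold. Alternative (a), $\theta^I(p,\sigma)>0$, is impossible because $\theta^I\equiv 0$ throughout $\C{R}_0$; alternative (b) entails that $(p,\sigma)$ is a boundary point of $\C{R}_0$, which is impossible because $(p,\sigma)$ lies in the \emph{interior} of $\C{R}_0$ and a set is disjoint from its own boundary. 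Hence $\chi^I(p,\sigma)<\infty$.

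Next I would invoke Theorem~\ref{theoremZ6}, which now gives $P^I_{p,\sigma}(|C|\geq m)\leq 2\,e^{-cm}$ for all $m$, where $c=c(p,\sigma):=1/\L 2(\chi^H(p)\,\chi^I(p,\sigma))^2\R>0$. Combining this with the inclusion $\{\|C\|\geq n\}\subseteq\{|C|\geq n/d\}$ yields $P^I_{p,\sigma}(\|C\|\geq n)\leq 2\,e^{-cn/d}$ for all $n$, while the first inequality of the corollary, $P^I_{p,\sigma}(\|C\|=n)\leq P^I_{p,\sigma}(\|C\|\geq n)$, is immediate. It then remains only to absorb the leading factor $2$ into the exponent. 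For all $n$ beyond some threshold $n_0=n_0(p,\sigma)$ one has $\ln 2\leq cn/(2d)$, so $2\,e^{-cn/d}\leq e^{-cn/(2d)}$; and for the finitely many $n$ with $1\leq n<n_0$ one observes that $P^I_{p,\sigma}(\|C\|\geq n)\leq P^I_{p,\sigma}(\|C\|\geq 1)=1-(1-\sigma)^{2s}(1-p)^{2(d-s)}<1$, using that $0<p,\sigma<1$ at an interior point of $\C{R}_0$, so a sufficiently small positive constant bounds the right-hand side from below on that finite range. Setting $\lambda^I(p,\sigma)$ equal to the minimum of $c/(2d)$ and such a constant then gives the claimed bound for every $n$.

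I do not expect a genuine obstacle here: all of the substantive work is already contained in Theorem~\ref{theoremZ6} (the cluster-size tail estimate) and in Theorem~\ref{theorem5} (uniqueness of the critical point, invoked only to guarantee $\chi^I<\infty$ in the interior of $\C{R}_0$). The two mildly delicate points are the use of the dichotomy in Theorem~\ref{theorem5} to rule out $\chi^I=\infty$, and the routine bookkeeping that strips the constant $2$ and the factor $1/d$ off the exponential rate; neither causes any real difficulty.
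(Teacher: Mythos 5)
Your proposal is correct and follows essentially the same route as the paper, which derives the corollary directly from Theorem~\ref{theoremZ6} via the comparison $\{\|C\|\geq n\}\subseteq\{|C|\geq n/d\}$. You rightly supply the one step the paper leaves implicit—that an interior point of $\C{R}_0$ has $\chi^I(p,\sigma)<\infty$, which you obtain by ruling out both alternatives of Theorem~\ref{theorem5}—and you also handle the absorption of the factor $2$ into the exponential rate, both of which are small but legitimate gaps to fill.
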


\section{The supercritical region}
 \label{section5}   

\subsection{Subexponential decay of the supercritical cluster size distribution}
 \label{subsec5.1}   

It is a result of homogeneous percolation that $P_p^H(|C|{=}n)$ does
not decay exponentially with $n$ in the supercritical phase.
Instead, it is known that there exists a $\gamma_H(p)>0$ such that
\begin{equation}
P_p^H(|C|{=}n)\;\geq \; e^{-\gamma_H(p)\, n^{(d-1)/d} }\q
\hbox{if $p>p_c(d)$}.
\end{equation}

A similar result [equation (\ref{eqn13AA})] can be shown for the inhomogeneous model 
with $p>p_c(d)$ by considering percolation in the half-space
$\Lattice^+$ \cite{M86}.  We state this as the following theorem
and defer its proof to section \ref{section6-2}.

\begin{theorem}
If $p>p_c(d)$, then there exists a $\gamma_I(p)>0$ such that
\[
 P^I_{p,\sigma} (\infty {>} |C| {\geq}  n) \;\geq \;
  P^I_{p,\sigma} (|C| {=}  n) \;\geq \;
e^{-\gamma_I(p)\, n^{(d-1)/d} }.
\]
\label{theorem5-12}   
\end{theorem}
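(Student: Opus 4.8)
The plan is to deduce the bound from the homogeneous supercritical lower bound (\ref{eqn444z}) of Aizenman, Delyon and Souillard \cite{ADS80} by forcing the cluster of the origin to be a ``blob'' sitting entirely in the bulk just above the defect plane. Since $s<d$, the sublattice $\Lattice_0$ lies in the hyperplane $\{x_d{=}0\}$, so the open half--space $H=\{x\in\IntN^d:x_d\geq 1\}$ has the feature that every edge with at least one endpoint in $H$ is a bulk edge, hence has density $p$ under $P^I_{p,\sigma}$. Thus, on edges incident to $H$, $P^I_{p,\sigma}$ restricts to ordinary homogeneous percolation at density $p$, which is supercritical since $p>p_c(d)$ (recall that $p_c(d)$ is also the critical density for half--space percolation \cite{BGN91}, as used in the proof of Proposition \ref{prop-sigstar}(a)). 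Note that $d\geq 3$ throughout, since $2\leq s<d$, so the exponent $(d-1)/d$ exceeds $1/2$ and no $d=2$ subtleties arise.

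Fix $(p,\sigma)$ with $p>p_c(d)$ and $\sigma<1$ (the case $\sigma=1$ is genuinely degenerate: then $\Lattice_0\subseteq C$ and $|C|=\infty$ almost surely). First I isolate the origin. The origin has $2d$ incident edges: the $2s$ edges of $\Edges_0$ (density $\sigma$) and $2(d-s)$ bulk edges (density $p$), one of which is $0{\sim}\vec{e}_d$. Let $A_0$ be the event that $0{\sim}\vec{e}_d$ is open while the remaining $2d-1$ edges at the origin are closed; then $P^I_{p,\sigma}(A_0)=p\,(1-p)^{2(d-s)-1}(1-\sigma)^{2s}=:c_0(p,\sigma)>0$, a constant not depending on $n$. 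Next, working inside $H$, I invoke the construction behind (\ref{eqn444z}): for a suitable $K=K(p)$ there is a box $B\subset H$ of side length at most $K\,n^{1/d}$ with $\vec{e}_d\in B$ and $0\notin B$, together with $\gamma'(p)>0$, such that the event $A_1$ that (i) every edge with exactly one endpoint in $B$ other than $0{\sim}\vec{e}_d$ is closed, and (ii) the cluster of $\vec{e}_d$ in the percolation restricted to $B$ has exactly $n-1$ vertices, satisfies $P^I_{p,\sigma}(A_1)\geq e^{-\gamma'(p)\,(n-1)^{(d-1)/d}}$. Here the boundary edges in (i) are all bulk edges and number $O(n^{(d-1)/d})$, giving a ``surface cost'' $e^{-O(n^{(d-1)/d})}$, while the probability that the interior percolation produces a cluster of the prescribed size inside such a box is exactly what the ADS argument yields, its extremal configuration being precisely a cluster confined to a box of side $O(n^{1/d})$. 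The only change from the literature is that one boundary edge of $B$, namely $0{\sim}\vec{e}_d$, is required open rather than closed, which alters the bound only by the bounded factor $p/(1-p)$.

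Now $A_0$ and $A_1$ are determined by disjoint edge sets: $A_0$ depends only on the edges at the origin, and among those only $0{\sim}\vec{e}_d$ meets $B$ or its boundary, while $A_1$ depends on the interior edges of $B$ and on the boundary edges of $B$ other than $0{\sim}\vec{e}_d$. Hence $P^I_{p,\sigma}(A_0\cap A_1)=P^I_{p,\sigma}(A_0)\,P^I_{p,\sigma}(A_1)\geq c_0(p,\sigma)\,e^{-\gamma'(p)(n-1)^{(d-1)/d}}$, and on $A_0\cap A_1$ the closed boundary of $B$ confines the cluster of $\vec{e}_d$ to $B$, which connects to the origin only through $0{\sim}\vec{e}_d$, while the origin has no other open edge; therefore $C=\{0\}\cup(\hbox{cluster of }\vec{e}_d\hbox{ in }B)$ and $|C|=1+(n-1)=n$. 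Absorbing $c_0$ and the gap between $(n-1)^{(d-1)/d}$ and $n^{(d-1)/d}$ into the exponent gives $P^I_{p,\sigma}(|C|{=}n)\geq e^{-\gamma_I\,n^{(d-1)/d}}$ for all large $n$ and suitable $\gamma_I=\gamma_I(p,\sigma)$; the finitely many small $n$ (including $n=1$, for which the event that all $2d$ edges at the origin are closed already has positive probability) are handled by enlarging $\gamma_I$, and the left-hand inequality $P^I_{p,\sigma}(\infty{>}|C|{\geq}n)\geq P^I_{p,\sigma}(|C|{=}n)$ is trivial. The main obstacle is the box construction in the second paragraph, i.e.\ extracting from \cite{ADS80} the statement that a cluster of size \emph{exactly} $n-1$ (not merely at least $n-1$) forms inside a box of side $O(n^{1/d})$ with probability $e^{-O(n^{(d-1)/d})}$; this is the heart of the ADS estimate and can be quoted in this box-confined form or re-derived. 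A secondary point is that $c_0(p,\sigma)$ degenerates as $\sigma\to 1$, so the bound is uniform in $\sigma$ only on compact subsets of $[0,1)$.
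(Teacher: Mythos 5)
Your proposal rests on the same key idea as the paper's proof: escape into the half--space $H=\{x_d\geq 1\}$ through the single bulk edge $0{\sim}\vec{e}_d$, close the other $2d-1$ edges at the origin at a fixed cost $p\,(1-p)^{2(d-s)-1}(1-\sigma)^{2s}$, and reduce to the known homogeneous supercritical lower bound $P^H_p(|C|{=}n)\geq e^{-\gamma n^{(d-1)/d}}$. The implementations diverge from there, and the paper's is cleaner. The paper argues combinatorially: it pairs each lattice animal $D$ of size $n-1$ rooted at $\vec{e}_d$ in the half--space $\Lattice_+$ with the animal $\hat D = D\cup\{0{\sim}\vec{e}_d\}$ rooted at $0$, tracks exactly how the edge count and the perimeter (split by bulk vs.\ $\Edges_0$) change, and uses the elementary translation inequality $a_+[n{-}1,m,t]\geq \frac{1}{n-1}\,a[n{-}1,m,t]$ to obtain $P^I_{p,\sigma}(|C|{=}n) \geq \frac{1}{n-1}\,p\,q^{2(d-s-1)}\,(1-\sigma)^{2s}\,P^H_p(|C|{=}n{-}1)$ in one line, after which (\ref{eqn444z}) finishes. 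You instead construct explicit events $A_0$ and $A_1$, the latter requiring the cluster of $\vec{e}_d$ to have size \emph{exactly} $n-1$ while confined to a box of side $O(n^{1/d})$ with its boundary edges all closed. This box--confined, exact--size variant of the Aizenman--Delyon--Souillard estimate is genuinely what their proof constructs, but it is not the statement you can cite from \cite{ADS80} without re-deriving it; you rightly flag this as the main gap. The paper's counting trick avoids the box altogether, because summing over all animals in $\C{A}_+$ of the right size makes no geometric confinement necessary and gives exact independence for free. Your observations about $d\geq 3$ and the degeneracy at $\sigma=1$ (the exponent must be allowed to depend on $\sigma$, not just $p$, since the prefactor $(1-\sigma)^{2s}$ vanishes as $\sigma\to 1$) are correct and in fact point to a small imprecision in the theorem as stated, which writes $\gamma_I(p)$ rather than $\gamma_I(p,\sigma)$.
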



In the case that $p< p_c(d)$ and $\theta^I(p,\sigma) >0$, the
decay of the cluster size distribution has a different subexponential lower bound, which 
we state in theorem \ref{thm999}.  We prove it 
using a variation of the method for homogeneous percolation
due to Aizenman, Delyon and Souillard \cite{ADS80}.   

\begin{theorem}
Assume that $0<p<p_c(d)$ and that $\theta^I(p,\sigma)>0$.
Then there exist positive constants $\betas1$ and $\betas2$
(which are functions of $(p,\sigma)$) such that for all sufficiently large $n$,
\begin{equation}
  \label{eq.supsubth1}
   P^I_{p,\sigma} \L \infty {>} |C| {\geq} n \R \;\geq\;
\betas1 \,e^{-\betas2 n^{(s-1)/s} ( \log^2 n )^{d-s}} . 
\end{equation}
\label{thm999}    
\end{theorem}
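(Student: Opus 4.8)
The plan is to follow the scheme of Aizenman, Delyon and Souillard \cite{ADS80} for homogeneous supercritical percolation, but to exploit the slab structure of the phase $\C{R}_L$: in $\C{R}_L$ the infinite cluster is concentrated near $\Lattice_0$ (lemma \ref{lemma888}), so rather than forcing the cluster of the origin to fill a $d$-dimensional cube we force it to fill a \emph{slab-shaped} box and we seal off only the faces that matter. Concretely, fix $(p,\sigma)\in\C{R}_L$, so $0<p<p_c(d)$ and $\theta^I(p,\sigma)>0$. With $L\asymp n^{1/s}$ and $M\asymp\log^2 n$ I would work in the box $B:=[-(L{+}1),L{+}1]^s\times[-M,M]^{d-s}$ and close every edge of the set $\C{W}$ consisting of the edges leaving $B$ in one of the $s$ coordinate directions of $\Lattice_0$. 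There are $|\C{W}|\leq c_3\,L^{s-1}M^{d-s}\leq c_3'\,n^{(s-1)/s}(\log^2 n)^{d-s}$ such edges, and sealing them costs $(1-p)^{|\C{W}|}$, which produces exactly the exponent in \Ref{eq.supsubth1}. The $2(d-s)$ ``cap'' faces of $B$ (parallel to $\Lattice_0$) carry $\asymp L^sM^{d-s-1}\asymp n(\log^2 n)^{d-s-1}$ edges and are far too expensive to seal; instead I will handle escape through the caps probabilistically using lemma \ref{lemma888}, which is where the transverse scale $M\asymp\log^2 n$ is used.

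Next I would quantify ``the cluster of the origin is already large inside the box.'' For this I would use the fact that $\theta^I(p,\sigma)>0$ forces percolation within a slab $T_0:=\IntN^s\times[-M_0,M_0]^{d-s}$ of some finite width $2M_0{+}1$ depending on $(p,\sigma)$, with $\theta_0:=P^I_{p,\sigma}(0\leftrightarrow\infty\text{ within }T_0)>0$. Since $T_0$ is amenable and quasi-transitive with insertion tolerance, its infinite cluster is a.s.\ unique, and the standard supercritical ``positive density in a box'' estimate applied to $T_0$ yields a constant $\kappa_0=\kappa_0(p,\sigma)>0$ and a constant $c_1$ such that, for the finite box $\Lambda:=[-L,L]^s\times[-M_0,M_0]^{d-s}\subseteq B$ with $L:=\lceil c_1 n^{1/s}\rceil$, the cluster of $0$ computed using only the edges of $\Lambda$ has at least $n$ vertices with probability at least $\kappa_0$, for all large $n$. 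The event $\{\text{all edges of }\C{W}\text{ are closed}\}$ and the event $\{\text{the }\Lambda\text{-cluster of }0\text{ has }\geq n\text{ vertices}\}$ depend on disjoint sets of edges, hence are independent; on their intersection $H$ we have $|C|\geq n$, and
\[
 P^I_{p,\sigma}(H)\;\geq\;\kappa_0\,(1-p)^{|\C{W}|}\;\geq\;\kappa_0\,(1-p)^{c_3' n^{(s-1)/s}(\log^2 n)^{d-s}}\,.
\]

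Finally I would show $H$ typically also gives a \emph{finite} cluster, i.e.\ $P^I_{p,\sigma}(H\cap\{|C|=\infty\})\leq\frac12 P^I_{p,\sigma}(H)$. On $H\cap\{|C|=\infty\}$ there is an infinite open path from $0$; since the side edges $\C{W}$ are closed it can leave $B$ only through a cap face, so there is an open path inside $B$ from $0$ to some vertex $w$ with $\mathrm{dist}(w,\Lattice_0)=M$ and $w'\in[-(L{+}1),L{+}1]^s$. The event ``$0\leftrightarrow w$ within $B$'' is again independent of ``$\C{W}$ closed,'' and since $P^I_{p,\sigma}(0\leftrightarrow w\text{ within }B)\leq\tau^I_{p,\sigma}(0,w)=P^I_{p,\sigma}(w\in C)\leq Ke^{-cM}$ by lemma \ref{lemma888}, a union bound over the $\leq c_4 L^s M^{d-s-1}$ admissible $w$ gives
\[
 P^I_{p,\sigma}\bigl(H\cap\{|C|=\infty\}\bigr)\;\leq\;(1-p)^{|\C{W}|}\,c_4\,L^sM^{d-s-1}e^{-cM}\,,
\]
and with $L\asymp n^{1/s}$, $M\asymp\log^2 n$ the prefactor is $O\bigl(n(\log n)^{2(d-s-1)}\bigr)\cdot n^{-c\log n}\to0$, so it is below $\frac12\kappa_0$ for all large $n$. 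Hence $P^I_{p,\sigma}(\infty>|C|\geq n)\geq P^I_{p,\sigma}(H)-P^I_{p,\sigma}(H\cap\{|C|=\infty\})\geq\frac12\kappa_0(1-p)^{c_3' n^{(s-1)/s}(\log^2 n)^{d-s}}$, which is \Ref{eq.supsubth1} with $\betas1=\frac12\kappa_0$ and $\betas2=c_3'\log\frac1{1-p}$. The step I expect to be the main obstacle is the slab input together with the box-density estimate it feeds: getting a genuinely uniform (or at least polylog-small, which would also suffice) lower bound on the probability that the cluster of $0$ restricted to a \emph{finite} box already has $n$ vertices. In $\C{R}_L$ the two-point function along $\Lattice_0$ does not decay, so one cannot pretend a finite box is isolated; the cure is to establish that $\C{R}_L$-percolation survives in a slab of finite width and then to invoke the standard positive-density-in-a-box estimate inside that (genuinely supercritical, quasi-transitive) slab. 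Everything else is a routine adaptation of \cite{ADS80}.
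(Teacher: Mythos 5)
Your overall scheme has the right geometry --- a slab-shaped box with $L\asymp n^{1/s}$ and transverse width $M\asymp\log^2 n$, sealing off $O(n^{(s-1)/s}(\log^2 n)^{d-s})$ side edges to get the exponent, and controlling escape through the caps via lemma \ref{lemma888} --- and that part agrees with the paper. But the step you yourself flag as the ``main obstacle'' is a genuine gap, not a routine detail. You assume that $\theta^I(p,\sigma)>0$ with $p<p_c(d)$ implies percolation in a finite-width slab $T_0=\IntN^s\times[-M_0,M_0]^{d-s}$, and you feed that into a positive-density-in-a-box estimate to get the constant $\kappa_0$. No such slab-percolation statement is available here. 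Monotonicity gives $P(0\leftrightarrow\infty\text{ in }T_M)\uparrow$ as $M\uparrow$, but the limit of these events is strictly smaller than $\{0\leftrightarrow\infty\}$, so $\theta^I>0$ does not by itself give a finite $M_0$ with positive slab-percolation probability. In the homogeneous supercritical case this is exactly the content of the Grimmett--Marstrand theorem, a deep renormalization result; proving its analogue in $\C{R}_L$ (with $p$ subcritical in the bulk, $\sigma$ supercritical on a defect plane) would be a substantial new theorem, not a lemma one can invoke. Your positive-density constant $\kappa_0$ is only meaningful once that is in place, so as written the argument does not close.

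The paper's proof is designed precisely to avoid this. Rather than demanding that the cluster of the origin already occupy a positive fraction of the box, it shows a much softer statement: for every $x\in\Lattice_0\cap B^*(m)$, the probability that $x$ is joined \emph{inside} $B^*(m)$ to the vertical boundary $\partial_{\hbox{vert}}(m)$ is at least $\sfrac12\theta^I(p,\sigma)$ for large $m$ (lemma \ref{lemma999}). This needs only $\theta^I>0$ plus lemma \ref{lemma888} (the infinite cluster cannot cheaply leave through the caps), not slab percolation. Then the event $A_m$ forces \emph{all} edges induced on $\partial_{\hbox{vert}}(m)$ open and all edges of $\partial_e(m)$ closed; since $\partial_{\hbox{vert}}(m)$ is connected for $s\geq2$, this glues the $\geq\sfrac12 m^s\theta^I$ vertices of $L(m)$ (lemma \ref{lemma1010}) into a single cluster containing the origin. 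The cost of $A_m$ is the same $e^{-O(m^{s-1}h(m)^{d-s})}$ you pay for $\C{W}$. In short: you seal the side only, whereas the paper seals \emph{and opens} the side, trading the unavailable slab-percolation input for the cheap surgery of turning $\partial_{\hbox{vert}}$ into a hub. If you want to salvage your route, you would first have to prove the inhomogeneous slab-percolation theorem --- which is interesting, but is not a step you can take for granted.
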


To prove this result, we shall show that if
$p< p_c(d)$ and $\theta^I(p,\sigma) > 0$, then there exist positive constants
$\alpha_1$, $\alpha_2$  and $\alpha_3$ (depending on $p$ and $\sigma$) such that 
\begin{eqnarray}
   \label{eq.supsubth2}
    P^I_{p,\sigma} \L\infty {>} |C| \geq \alpha_1\, m^s \R
    &\;\geq\;& \alpha_2\, e^{-\alpha_3m^{s-1}\,h(m)^{d-s}} \nonumber \\
   & &    \hspace{12mm}\hbox{for all sufficiently large $m$,}
\end{eqnarray}
where $h(m) = \lcl \log^2 m \rcl$. 
Theorem \ref{thm999} follows from this by putting $m = (n/\alpha_1)^{1/s}$.
The rest of this subsection is devoted to proving equation \Ref{eq.supsubth2}, with 
$h(m)$ being any function that grows faster than $\log m$ and slower than $m$.

Assume that $p< p_c(d)$ and $\theta^I(p,\sigma) > 0$.

Let $h: \NatN\rightarrow\NatN$ be a specified function satisfying $h(m)=o(m)$ and 
$\log m = o(h(m))$.  For each $m\in \NatN$, 
define the rectangular box $B^*(m)$ centered at the origin in $\Lattice$ by
\begin{eqnarray}
& & B^*(m) \;=\;\left([-m,m]^s \times [-h(m),h(m)]^{d-s}\right) \cap\Lattice \label{eqnA60}
  \\
& \;=\; &\LC {z}\in \IntN^d \Vert \hbox{$|\z{i}| \leq m$ for $i=1,\ldots,s$
 and $|\z{i}| \leq h(m)$ for $i=s{+}1,\ldots,d$}\RC . \nonumber
\end{eqnarray}
We separate the boundary of  $B^*(m)$ into
a vertical part $\partial_{\hbox{vert}} (m)$ and a horizontal part
$\partial_{\hbox{hor}}(m)$:
\begin{eqnarray}
\partial_{\hbox{vert}}(m)
&\;=\;& \LC {v} \in B^*(m) \Vert \hbox{$|v_i| = m$ for at least one $i\leq s$} \RC, 
  \hspace{3mm}\hbox{and} \label{eqnA61} \\
\partial_{\hbox{hor}}(m)
 &\;=\;& \LC {v} \in B^*(m)\setminus \partial_{\hbox{vert}}(m) 
\Vert \hbox{$|v_i| = h(m)$ for at least one $i>s$} \RC. \nonumber
\end{eqnarray}
Also, let $\partial_e(m)$ be the set of edges outside $B^*(m)$ incident on 
$\partial_{\hbox{vert}} (m)$:
\[
     \partial_e(m) \;=\; \{ x{\sim} y \in \Edges \,:\, x\in \partial_{\hbox{vert}} (m), \,y\not\in B^*(m) \}\,.
\]

Given $v\in\IntN^d$ and $A\subseteq \IntN^d$, denote the event that $v$ is
connected to a point of $A$ by an open path by $\LC v \leftrightarrow A\RC
= \cup_{x\in A} \LC v\leftrightarrow x\RC$. 
For every $v\not\in \Lattice_0$,
\begin{equation}
P_{p,\sigma}^I \LC v \leftrightarrow \Lattice_0\RC \;=\; 
P_p^H \LC v \leftrightarrow \Lattice_0\RC.
\label{eqn57AA}     
\end{equation}
Notice that if $p < p_c(d)$, then there exists a positive constant $\delta_p$ such that
\begin{equation}
P_p^H  \LC v \leftrightarrow \Lattice_0\RC \;\leq\;
e^{-\delta_p \, \hbox{dist}(v,\Lattice_0)}
\label{eqn58AA}     
\end{equation}
where $\hbox{dist}(v,\Lattice_0) = \min\{ \| v - x \| \vert x\in \Lattice_0\}$
(and $\|\cdot\|$ is the Euclidean norm).  This is a consequence of
the exponential decay of the cluster size distribution in homogeneous percolation
if $p < p_c$ (see for example reference \cite{AN84}).

Since $\hbox{dist}(v,\Lattice_0) \geq h(m)$ if $v\in\partial_{\hbox{hor}}(m)$,
the following lemma for the probability of the event $\{ v \leftrightarrow \Lattice_0\}$
follows from equations \Ref{eqn57AA} and  \Ref{eqn58AA}.

\begin{lemma}
If $p<p_c(d)$, then there is a $\delta_p>0$
such that 
\[ P_{p,\sigma}^I \LC v \leftrightarrow \Lattice_0 \RC \;\leq\; e^{-\delta_p h(m) } 
 \hspace{5mm}\hbox{for every $v\in \partial_{\hbox{hor}}(m)$}. \]
\label{lemma888}     
\end{lemma}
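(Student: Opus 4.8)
The plan is to read the bound off directly from equations \Ref{eqn57AA} and \Ref{eqn58AA}, after first recording one elementary geometric fact about $\partial_{\hbox{hor}}(m)$. Since $\Lattice_0 = \{z\in\IntN^d \,:\, z_i = 0 \hbox{ for all } i>s\}$, the point of $\Lattice_0$ closest to a given $v\in\IntN^d$ is $(v_1,\dots,v_s,0,\dots,0)$, so $\hbox{dist}(v,\Lattice_0) = \bigl(\sum_{i=s+1}^d v_i^2\bigr)^{1/2}$. Now if $v\in\partial_{\hbox{hor}}(m)$, then by the definition \Ref{eqnA61} there is an index $i>s$ with $|v_i| = h(m)$, whence $\hbox{dist}(v,\Lattice_0)\geq h(m)$; and since $h(m)\geq 1$ (as $h:\NatN\to\NatN$), this also shows $v\notin\Lattice_0$, so that \Ref{eqn57AA} is applicable at $v$.

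With that in hand, the proof is just a chaining of inequalities: for every $v\in\partial_{\hbox{hor}}(m)$,
\[
  P_{p,\sigma}^I\LC v\leftrightarrow\Lattice_0\RC
  \;=\; P_p^H\LC v\leftrightarrow\Lattice_0\RC
  \;\leq\; e^{-\delta_p\,\hbox{dist}(v,\Lattice_0)}
  \;\leq\; e^{-\delta_p h(m)},
\]
where the equality is \Ref{eqn57AA}, the first inequality is \Ref{eqn58AA} (whose constant $\delta_p>0$ exists precisely because $p<p_c(d)$), and the last inequality is the geometric bound above. Taking the same $\delta_p$ as in \Ref{eqn58AA} gives the statement of the lemma, with a bound that is uniform over $v\in\partial_{\hbox{hor}}(m)$ and over $m$.

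I do not expect a genuine obstacle here: all the substantive content is already contained in \Ref{eqn58AA}, i.e.\ in the exponential decay of the radius of the open cluster $C(v)$ in subcritical homogeneous percolation, which is classical (cited in the excerpt as \cite{AN84}) and which we are entitled to assume. The only point that might deserve an extra sentence is \Ref{eqn57AA} itself: any open path realizing the event $\LC v\leftrightarrow\Lattice_0\RC$ may be taken self-avoiding, and then all of its vertices except the terminal one lie outside $\Lattice_0$, so the path uses only edges of $\Edges\setminus\Edges_0$; hence $\LC v\leftrightarrow\Lattice_0\RC$ is measurable with respect to $\{\omega(e):e\in\Edges\setminus\Edges_0\}$, on which $P^I_{p,\sigma}$ and $P^H_p$ induce the same (product) law. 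If one simply quotes \Ref{eqn57AA} and \Ref{eqn58AA}, the lemma has a two-line proof.
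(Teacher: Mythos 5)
Your proof is correct and is essentially the paper's own argument: the paper proves the lemma by combining \Ref{eqn57AA}, \Ref{eqn58AA}, and the observation that $\hbox{dist}(v,\Lattice_0)\geq h(m)$ for $v\in\partial_{\hbox{hor}}(m)$, which is exactly your chain of inequalities. You supply a bit more detail (the explicit formula for $\hbox{dist}(v,\Lattice_0)$, the check that $v\notin\Lattice_0$, and the self-avoiding-path justification of \Ref{eqn57AA}) than the paper does, but the route is the same.
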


\begin{figure}[t!]
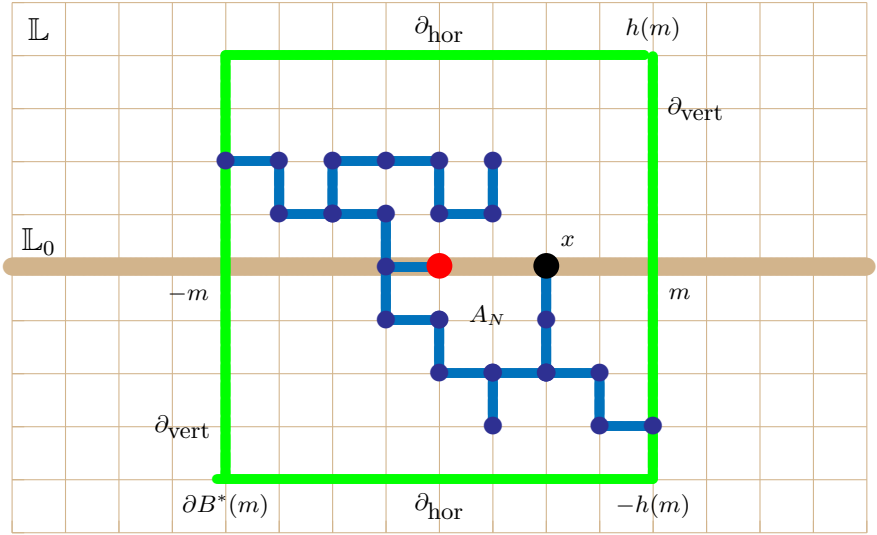

\centering\hfill
\input figure6.tex
\caption{The event $Q_m(x)$ that there is an open path in
$B^*(m)$ from $x\in\Lattice_0$ to $\partial_{\hbox{vert}}(m)$.}
\end{figure}

For $x\in \Lattice_0 \cap B^*(m)$, let $Q_m(x)$ be the event that there is an
open path in $B^*(m)$ from $x$ to $\partial_{\hbox{vert}}(m)$.
Let $F_m$ be the event that there is no open path 
from $\partial_{\hbox{hor}}(m)$ to $\Lattice_0$ whose edges are all outside 
of $B^*(m)\cup \partial_e(m)$.

\begin{lemma}
Assume $p<p_c(d)$ and $\theta^I(p,\sigma)>0$. 
Then for sufficiently large $m$, 
\begin{equation}
  \label{eq.999a}
   P_{p,\sigma}^I (Q_m(x))  \;>\; \sfrac{1}{2} \theta^I(p,\sigma)
	 \hspace{4mm}\hbox{for all $x\in \Lattice_0 \cap B^*(m)$.}
\end{equation}
Also,
\begin{equation}
  \label{eq.999b}
     \lim_{m\rightarrow\infty}P^I_{p,\sigma}(F_m) \;=\; 1.
\end{equation}
\label{lemma999}     
\end{lemma}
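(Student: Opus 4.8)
The plan is to derive both parts of Lemma~\ref{lemma999} from Lemma~\ref{lemma888}, exploiting the elementary fact that $B^*(m)$ has only polynomially many boundary vertices in $m$, while $e^{-\delta_p h(m)}$ decays faster than any polynomial because $\log m=o(h(m))$.

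For \Ref{eq.999a}, I would fix $x\in\Lattice_0\cap B^*(m)$ and first record that $P^I_{p,\sigma}(x\leftrightarrow\infty)=\theta^I(p,\sigma)$, since $P^I_{p,\sigma}$ is invariant under translations by vectors of $\Lattice_0$. On the event $\{x\leftrightarrow\infty\}$ the open cluster of $x$ is unbounded and hence leaves the finite box $B^*(m)$; picking an open path from $x$ to infinity and letting $v$ be its last vertex in $B^*(m)$ before the first exit, the initial segment up to $v$ is an open path lying entirely in $B^*(m)$, and $v$ necessarily lies on $\partial B^*(m)=\partial_{\hbox{vert}}(m)\cup\partial_{\hbox{hor}}(m)$ (every edge leaving $B^*(m)$ has its inner endpoint on this boundary). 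If $v\in\partial_{\hbox{vert}}(m)$ then $Q_m(x)$ holds; if $v\in\partial_{\hbox{hor}}(m)$ then $v$ is connected to $x$, hence to $\Lattice_0$. This yields the inclusion
\[
\{x\leftrightarrow\infty\}\ \subseteq\ Q_m(x)\ \cup\ \bigcup_{v\in\partial_{\hbox{hor}}(m)}\{v\leftrightarrow\Lattice_0\}\,,
\]
whence, by a union bound and Lemma~\ref{lemma888},
\[
P^I_{p,\sigma}(Q_m(x))\ \geq\ \theta^I(p,\sigma)\,-\,|\partial_{\hbox{hor}}(m)|\,e^{-\delta_p h(m)}\,.
\]
Since $|\partial_{\hbox{hor}}(m)|\leq|B^*(m)|$ is bounded by a power of $m$ for large $m$ (using $h(m)=o(m)$) and $\log m=o(h(m))$, the subtracted term tends to $0$ uniformly in $x$, so it is $<\frac{1}{2}\theta^I(p,\sigma)$ for all sufficiently large $m$; this is \Ref{eq.999a}.

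For \Ref{eq.999b}, I would observe that the complement $F_m^c$ is contained in $\bigcup_{v\in\partial_{\hbox{hor}}(m)}\{v\leftrightarrow\Lattice_0\}$: an open path realizing $F_m^c$ joins some $v\in\partial_{\hbox{hor}}(m)$ to $\Lattice_0$, and dropping the requirement that its edges avoid $B^*(m)\cup\partial_e(m)$ only enlarges the event (that requirement is what later makes $F_m$ independent of the $B^*(m)$-measurable events $Q_m(x)$, but it is not needed here). Each $v\in\partial_{\hbox{hor}}(m)$ has $\hbox{dist}(v,\Lattice_0)\geq h(m)$ and is not in $\Lattice_0$, so Lemma~\ref{lemma888} (via \Ref{eqn57AA}--\Ref{eqn58AA}) bounds $P^I_{p,\sigma}(v\leftrightarrow\Lattice_0)$ by $e^{-\delta_p h(m)}$. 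Summing over $v$ gives $P^I_{p,\sigma}(F_m^c)\leq|\partial_{\hbox{hor}}(m)|\,e^{-\delta_p h(m)}\to 0$, which is \Ref{eq.999b}.

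The one step that needs genuine care is the path-decomposition in \Ref{eq.999a}: one must justify precisely that an unbounded open cluster through a vertex $x$ of $B^*(m)$ produces an open path inside $B^*(m)$ from $x$ to $\partial B^*(m)$, and identify the terminal vertex as belonging to $\partial_{\hbox{vert}}(m)$ or $\partial_{\hbox{hor}}(m)$. This is a routine topological observation about a path escaping a finite box, but the bookkeeping of the two boundary pieces should be spelled out. The remaining ingredients --- invariance of $\theta^I$ under shifts along $\Lattice_0$, the two union bounds, and the polynomial-versus-$e^{-\delta_p h(m)}$ comparison --- are all straightforward.
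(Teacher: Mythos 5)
Your proof is correct and is essentially the paper's own argument: both reduce everything to the union bound $P^I_{p,\sigma}\bigl(\bigcup_{v\in\partial_{\hbox{hor}}(m)}\{v\leftrightarrow\Lattice_0\}\bigr)\le|\partial_{\hbox{hor}}(m)|\,e^{-\delta_p h(m)}=o(1)$ from Lemma~\ref{lemma888}, use translation invariance of $P^I_{p,\sigma}$ along $\Lattice_0$ to identify $P^I_{p,\sigma}(|C(x)|=\infty)=\theta^I(p,\sigma)$, and observe the inclusions $F_m^c\subseteq\bigcup_v\{v\leftrightarrow\Lattice_0\}$ and $\{|C(x)|=\infty\}\subseteq Q_m(x)\cup\{C(x)\cap\partial_{\hbox{hor}}(m)\neq\emptyset\}$. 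Your path-decomposition (tracking the first exit of an open path from $B^*(m)$) simply spells out the step the paper labels as ``clear''.
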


\begin{proof}
Let $Z_m= \{v\leftrightarrow \Lattice_0 \hbox{ for some $v\in\partial_{\hbox{hor}}(m)$}\}$.
Using $\LV \partial_{\hbox{hor}} (m)\RV = o(m^d)$ and $\log m = o(h(m))$, we observe that 
as $m\rightarrow\infty$,
\begin{equation}
      \label{eq.gapbound}
     P_{p,\sigma}^I(Z_m)
       \; \leq  \;    \sum_{v\in \partial_{\hbox{hor}}(m)}
        P_{p,\sigma}^I(v \leftrightarrow \Lattice_0)   
      \; \leq \; \LV \partial_{\hbox{hor}} (m)\RV \,e^{-\delta_p h(m)}   
      \; = \; o(1)     \,.
\end{equation}
Equation \Ref{eq.999b} follows since $F_m^c\subseteq Z_m$.   For every 
$x\in \Lattice_0\cap B^*(m)$ we have 
$\LC C(x) \cap \partial_{\hbox{hor}}(m)\neq\emptyset \RC \;\subseteq\;Z_m$, so equation (\ref{eq.gapbound}) implies that 
\begin{equation}
\label{eq.limmaxp}
\lim_{m\to\infty} \L  \max_{x \in \Lattice_0 \cap B^*(m)} 
P_{p,\sigma}^I \L C(x) \cap  \partial_{\hbox{hor}}(m) \;\not=\;\emptyset\R \R \;=\;0 . 
\end{equation}
Next, for every $x\in \Lattice_0\cap B^*(m)$ we clearly have
\[  
    \LC |C(x)| \;=\; \infty \RC \cap
    \LC C(x) \cap \partial_{\hbox{hor}}(m)=\emptyset \RC \;\subseteq\; Q_m(x) . 
\]
Hence by equation \Ref{eq.limmaxp}, for sufficiently large $m$ we have
\[ 
   P_{p,\sigma}^I( Q_m (x))
\; >\; \sfrac{1}{2} P_{p,\sigma}^I \L |C(x)| {=} \infty \R
\;=\; \sfrac{1}{2} \theta^I(p,\sigma)   
  \hspace{5mm}\hbox{for all $x\in \Lattice_0\cap B^*(m)$}, 
\]
which proves Equation \Ref{eq.999a}.
\end{proof}


Let $L(m)$ be the set of vertices in $B^*(m)$ where
$Q_m(x)$ occurs:
\begin{equation}
L(m) \;=\; \{ x\in \Lattice_0 \cap B^*(m) \Vert \, \hbox{$Q_m(x)$ occurs} \}.
\label{eqn5959}     
\end{equation}
Then the next lemma shows that 
$P_{p,\sigma}^I\L  \LV L(m) \RV \geq \sfrac{1}{2}m^s \theta^I(p,\sigma)\R $
is not small in the supercritical regime.

\begin{lemma}
Assume $p<p_c(d)$ and $\theta^I(p,\sigma)>0$.  Then
for large $m$ we have
$P_{p,\sigma}^I\L  \LV L(m) \RV 
 {\geq} \sfrac{1}{2}m^s \theta^I(p,\sigma) 
\R \;\geq\; \sfrac{1}{4} \theta^I(p,\sigma)$.
\label{lemma1010}     
\end{lemma}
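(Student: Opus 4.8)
The plan is a short first-moment estimate followed by a reverse Markov inequality; the genuinely probabilistic work has already been carried out in Lemma~\ref{lemma999}. First I would write $|L(m)|$ as a sum of indicators,
\[
   |L(m)| \;=\; \sum_{x\in\Lattice_0\cap B^*(m)} \indic_{Q_m(x)} ,
\]
and note that $\Lattice_0\cap B^*(m)$ has exactly $(2m+1)^s$ vertices. Taking expectations and using the uniform lower bound $P^I_{p,\sigma}(Q_m(x)) > \sfrac{1}{2}\theta^I(p,\sigma)$ supplied by Lemma~\ref{lemma999} (equation \Ref{eq.999a}), which holds for all sufficiently large $m$ and all $x\in\Lattice_0\cap B^*(m)$, I obtain
\[
   E^I_{p,\sigma}|L(m)| \;=\; \sum_{x\in\Lattice_0\cap B^*(m)} P^I_{p,\sigma}(Q_m(x)) \;>\; \sfrac{1}{2}\,\theta^I(p,\sigma)\,(2m+1)^s .
\]

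Next, set $t_m = \sfrac{1}{2}m^s\theta^I(p,\sigma)$. Since $|L(m)|$ is deterministically bounded above by $(2m+1)^s$, splitting the expectation over the events $\{|L(m)|<t_m\}$ and $\{|L(m)|\ge t_m\}$ gives
\[
   E^I_{p,\sigma}|L(m)| \;\le\; t_m + (2m+1)^s\, P^I_{p,\sigma}\L |L(m)| \ge t_m \R ,
\]
and combining this with the previous display yields
\[
   P^I_{p,\sigma}\L |L(m)| \ge t_m \R \;\ge\; \sfrac{1}{2}\theta^I(p,\sigma)\L 1 - \frac{m^s}{(2m+1)^s}\R .
\]
Finally, since the paper assumes $s\ge 2$ throughout, $(2m+1)^s \ge (2m)^s = 2^s m^s \ge 4m^s$, so that $m^s/(2m+1)^s \le \sfrac{1}{4} \le \sfrac{1}{2}$, and hence $P^I_{p,\sigma}(|L(m)|\ge t_m)\ge \sfrac{1}{4}\theta^I(p,\sigma)$, which is exactly the asserted bound.

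I do not expect any real obstacle here. The only point that needs to be handled with care is that the combinatorial discrepancy between $|\Lattice_0\cap B^*(m)| = (2m+1)^s$ and the threshold scale $m^s$ is absorbed by the standing hypothesis $s\ge 2$ (for $s=1$ one would have to replace the constant $\sfrac{1}{4}$ by a smaller one, but that case is excluded in this paper); all of the analytic content — the supercriticality assumption $\theta^I(p,\sigma)>0$ and the exponential decay estimate of Lemma~\ref{lemma888} — is already packaged inside Lemma~\ref{lemma999}.
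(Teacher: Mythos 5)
Your proof is correct and follows essentially the same route as the paper: a first-moment bound from Lemma~\ref{lemma999} followed by the reverse-Markov inequality obtained by splitting $E^I_{p,\sigma}|L(m)|$ over the two events $\{|L(m)|<t_m\}$ and $\{|L(m)|\geq t_m\}$. The only cosmetic difference is in the last arithmetic step: the paper simply observes $(m/(2m+1))^s\leq \sfrac{1}{2}$, which holds already for every $s\geq 1$ (since $m/(2m+1)<\sfrac{1}{2}$), so your appeal to $s\geq 2$ and the parenthetical remark about the $s=1$ case are unnecessary, though harmless.
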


\begin{proof}
Notice that
$|L(m)| \, \leq \,| \Lattice_0 \cap B^*(m) | \,= \,(2m+1)^s$.  Moreover,
\[ E_{p,\sigma}^I |L(m)| \; =\; \sum_{x\in \Lattice_0 \cap B^*(m)}
 P_{p,\sigma}^I (Q_m(x)) \;\geq\; \sfrac{1}{2} \theta^I(p,\sigma)\,
       (2m+1)^s \]
for large $m$, by lemma \ref{lemma999}. Hence,
\begin{eqnarray*}
\sfrac{1}{2} \theta^I(p,\sigma)\, (2m+1)^s 
&  \;\leq\; & E_{p,\sigma}^I |L(m)|  \\
& \;\leq\; & \sfrac{1}{2} m^s \theta^I(p,\sigma)P_{p,\sigma}^I
  \L \LV L(m) \RV {<} \sfrac{1}{2} m^s \theta^I(p,\sigma) \R \\
& &\qquad\qquad\quad    \,+\, (2m+1)^s P_{p,\sigma}^I
  \L \LV L(m) \RV {\geq}  \sfrac{1}{2} m^s \theta^I(p,\sigma) \R \\
& \;\leq\;& \sfrac{1}{2} m^s \theta^I(p,\sigma) \,+\, (2m+1)^s
  P_{p,\sigma}^I\L  
\LV L(m) \RV {\geq}  \sfrac{1}{2} m^s \theta^I(p,\sigma) \R .
\end{eqnarray*}
Solving for $P_{p,\sigma}^I\L  
\LV L(m) \RV {\geq} \sfrac{1}{2} m^s \theta^I(p,\sigma) \R $ then gives
\[ 
P_{p,\sigma}^I\L  
\LV L(m) \RV {\geq}  \sfrac{1}{2} m^s \theta^I(p,\sigma) \R 
\;\geq\; \sfrac{1}{2} \theta^I(p,\sigma) \L 1 - \L \sfrac{m}{2m+1}\R^s \R
\;\geq\; \sfrac{1}{4} \theta^I(p,\sigma) . \]
This completes the proof of the lemma.
\end{proof}

Let $A_m$ be the event that all edges in $\partial_{\hbox{vert}} (m)$ 
are open, and all edges of 
$\partial_{e} (m)$  are closed.    Then, for some $\beta_2>0$,
\begin{equation}
P_{p,\sigma}^I(A_m) \;=\; e^{-O\L |\partial_{\hbox{vert}} (m)|\R} 
\;\geq \; e^{-\beta_2 m^{s-1} (h(m))^{d-s}} .
\label{eqn6060}     
\end{equation}
Let $D_m$ be the event that the number of vertices in
$L(m)$ is at least $\sfrac{1}{2}m^s \theta^I(p,\sigma)$, i.e.
\begin{equation}
D_m \;=\; \LC  \LV L(m) \RV \;\geq\;  \sfrac{1}{2} m^s \theta^I(p,\sigma)  \RC 
\end{equation}
(recall that $L(m)$ is the set of vertices in $\Lattice_0\cap B^*(m)$
that are connected by open paths in $B^*(m)$ to
 $\partial_{\hbox{vert}} (m)$).
Observe the following:
\begin{itemize}
\item[(\textit{i})]
\[ 
A_m \cap Q_m(0) \cap D_m 
\,\subseteq \, \LC   |C(0)| \geq \sfrac{1}{2} m^s \theta^I(p,\sigma)\RC .\]
In other words, if all the edges in $\partial_{\hbox{vert}} (m)$ are open
and all edges of  $\partial_{e} (m)$
are closed, and if $Q_m(0)$ and $D_m$ occur, then
the cluster at the origin has size at least 
$ \sfrac{1}{2} m^s \theta^I(p,\sigma)$.
\item[(\textit{ii})]
\[ 
A_m \cap F_m \,\subseteq \, \{C(0)\cap \Lattice_0 \subseteq B^*(m)\} \,\subseteq \,
 \{ |C(0)| < \infty \}\cup Y_m, \hbox{ where }\]
$Y_m=\{|C(0)|=\infty\hbox{ and }C(0)\cap \Lattice_0 \subseteq B^*(m)\}$.
\item[(\textit{iii})]
The events $A_m$, $F_m$, and $D_m\cap Q_m(0)$ are independent.
\item[(\textit{iv})] By the FKG Inequality, and Lemmas \ref{lemma999} and \ref{lemma1010}, for
large $m$ we have
\[ 
P_{p,\sigma}^I(Q_m(0)\cap D_m)
\;\geq \;P_{p,\sigma}^I(Q_m(0)) \, P_{p,\sigma}^I(D_m)
\;\geq \;\sfrac{1}{8} \L \theta^I(p,\sigma) \R^2 . \]
\item[(\textit{v})]
$P^I_{p,\sigma}(Y_m)=0$, where $Y_m$ was defined in (\textit{ii}).  To see this, 
consider the new percolation measure $P^*$ in which each bond of $\Edges_0\cap B^*(m+1)$ 
is open with probability $\sigma$, and every other bond is open with probability $p$.
Then $P^I_{p,\sigma}(Y_m)=P^*(Y_m)$.  Moreover, since $p{<}p_c(d)$, we have
$P^H_p(Y_m)=0$, and hence $P^*(Y_m)$ is also 0 because $P^H_p$ and $P^*$ differ
on only finitely many edges.
\end{itemize}
Thus we conclude
\begin{eqnarray*}
P_{p,\sigma}^I&\L \infty >  | C(0) | \geq \sfrac{1}{2} m^s \theta^I(p,\sigma) \R  &  \\
&\;\geq\; P_{p,\sigma}^I\L A_m \cap Q_m(0) \cap D_m \cap F_m\cap Y_m^c\R  \q &
   \hbox{(by (\textit{i}) and  (\textit{ii}))}\\
   &\;=\; P_{p,\sigma}^I\L A_m \cap Q_m(0) \cap D_m \cap F_m\R  \q &
   \hbox{(by (\textit{v}))}\\
&\;=\; P_{p,\sigma}^I(A_m) \, P_{p,\sigma}^I ( D_m \cap Q_m(0) ) \,
    P_{p,\sigma}^I(F_m)\q  &
\hbox{(by (\textit{iii}))} .
\end{eqnarray*}
The proof of theorem \ref{thm999} is completed by comparing this last lower bound 
with equations \Ref{eqn6060} and \Ref{eq.999b}, as well as (\textit{iv}).

\subsection{Long-range connectivity above $p_c(d)$}

Recall that the two-point connectivity function is 
$\tau^I_{p,\sigma}(x,y) \,=\, P^I_{p,\sigma}(x\leftrightarrow y)$.
The next result shows that this function is bounded away from 0 for any given $(p,\sigma)$
in $\C{R}_H$.

\begin{proposition}
  \label{prop-connect}
Fix $p>p_c(d)$ and $\sigma\in[0,1]$.  Then $\inf\{\tau^I_{p,\sigma}(x,y) :  x,y\in \Lattice\} \,>\,0$.
\end{proposition}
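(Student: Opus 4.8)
The plan is to show that for $p>p_c(d)$, there is a lower bound on $\tau^I_{p,\sigma}(x,y)$ that depends on neither $x$ nor $y$. The key observation is that in $\C{R}_H$ the bulk density $p$ already exceeds $p_c(d)$, so the defect plane is irrelevant for producing a giant cluster; by monotonicity $\tau^I_{p,\sigma}(x,y) \geq \tau^I_{p,0}(x,y) \geq \tau^H_{p'}(x,y)$ for the half-space argument, but more directly we can work with homogeneous supercritical percolation on all of $\Lattice$ and then account for the edges in $\Edges_0$ only helping. First I would invoke the well-known fact that for $p>p_c(d)$ there is a unique infinite cluster $P^H_p$-almost surely, and that $P^H_p(x\leftrightarrow\infty)=\theta^H_d(p)>0$ for every vertex $x$ by translation invariance. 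Since $P^I_{p,\sigma}$ stochastically dominates $P^H_p$ (every edge is open with probability at least $\min\{p,\sigma\}$... actually with probability exactly $p$ off $\Edges_0$ and $\sigma$ on $\Edges_0$, so dominance holds against the measure where all edges have density $\min\{p,\sigma\}$; but cleaner is to couple so that the configuration restricted to $\Edges\setminus\Edges_0$ has exactly the homogeneous law at density $p$), the event $\{x\leftrightarrow\infty\}$ using only bulk edges has probability at least $\theta^H_d(p)$ under $P^I_{p,\sigma}$, uniformly in $x$.

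The main step is then: $\tau^I_{p,\sigma}(x,y) \geq P^I_{p,\sigma}(x\leftrightarrow\infty \text{ and } y\leftrightarrow\infty \text{ in the same cluster})$. To bound this below uniformly, I would use the square-root trick / FKG together with uniqueness of the infinite cluster. Specifically, let $A_x = \{x\leftrightarrow\infty\}$ and $A_y=\{y\leftrightarrow\infty\}$. By uniqueness of the infinite cluster (which holds for $P^I_{p,\sigma}$ whenever $p>p_c(d)$, since the infinite cluster in the bulk is already unique and adding the plane cannot create a second one — or cite the Burton–Keane-type argument, which applies since the inhomogeneity is confined to a zero-density sublattice and the measure is still translation-invariant in the $s$ plane directions and invariant under the remaining symmetries enough for the argument; alternatively, restrict attention to bulk edges only where classical uniqueness applies), the event $A_x\cap A_y$ is, up to a null set, contained in $\{x\leftrightarrow y\}$. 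Hence $\tau^I_{p,\sigma}(x,y)\geq P^I_{p,\sigma}(A_x\cap A_y)$. Both $A_x$ and $A_y$ are increasing events, so by the FKG inequality $P^I_{p,\sigma}(A_x\cap A_y)\geq P^I_{p,\sigma}(A_x)\,P^I_{p,\sigma}(A_y)\geq \bigl(\theta^H_d(p)\bigr)^2 > 0$. This bound is independent of $x$ and $y$, which finishes the proof, since $\inf\{\tau^I_{p,\sigma}(x,y):x,y\in\Lattice\}\geq (\theta^H_d(p))^2>0$.

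The main obstacle is justifying the uniqueness of the infinite cluster for the inhomogeneous measure $P^I_{p,\sigma}$ (or, equivalently, sidestepping it). The cleanest workaround is to avoid it entirely: note that $\{x\leftrightarrow\infty\}$ and $\{y\leftrightarrow\infty\}$ restricted to the \emph{bulk} edges $\Edges\setminus\Edges_0$ are governed by homogeneous supercritical percolation at density $p$ on $\Lattice$, for which uniqueness of the infinite cluster is a classical theorem (see e.g. \cite{G99}); since the inhomogeneous configuration agrees with the homogeneous one on all bulk edges and only opens \emph{additional} edges on $\Edges_0$, the bulk infinite cluster is unchanged and connecting $x$ to $y$ through it is a $P^I_{p,\sigma}$-event of probability at least $(\theta^H_d(p))^2$ by FKG. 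So the argument reduces to: homogeneous uniqueness on $\Lattice$ at density $p$, translation invariance giving $P^H_p(x\leftrightarrow\infty)=\theta^H_d(p)$ for all $x$, the FKG inequality for the increasing events $\{x\leftrightarrow\infty\}$, $\{y\leftrightarrow\infty\}$, and monotonicity $P^I_{p,\sigma}\geq P^H_p$ restricted to bulk events. The remaining care is purely bookkeeping about which edges each event depends on.
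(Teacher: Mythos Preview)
Your overall strategy---reduce $\tau^I_{p,\sigma}(x,y)$ via FKG to a uniform lower bound on $P(x\leftrightarrow\infty)$, then invoke uniqueness of the infinite cluster---is correct and is exactly what the paper does. The gap is in your justification of the uniform lower bound. You first try $P^I_{p,\sigma}\geq P^H_p$, notice this fails when $\sigma<p$, and then retreat to ``use only bulk edges $\Edges\setminus\Edges_0$.'' But the bulk-edge subgraph is \emph{not} $\Lattice$: it is $\Lattice$ with the edges of an $s$-dimensional sublattice deleted, and it is not vertex-transitive. So the identity $P^H_p(x\leftrightarrow\infty)=\theta^H_d(p)$, which holds on the full lattice by translation invariance, says nothing about $P(x\leftrightarrow\infty\text{ using only bulk edges})$. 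The latter quantity genuinely depends on the position of $x$ relative to $\Lattice_0$, and your argument gives no uniform lower bound for it. (Your sentence ``only opens \emph{additional} edges on $\Edges_0$'' is also simply false for $\sigma<p$, so that route is closed.)

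The paper fills this gap by working in the half-space $\Lattice_+(1)=\{v:v_d\geq 1\}$, which contains no edges of $\Edges_0$. On this half-space, homogeneous percolation at density $p>p_c(d)$ is supercritical with a unique infinite cluster $C_+$ (citing \cite{BGN91,GM90}), and the connection probability $P^+_p(\vec{v}\in C_+)$ is monotone in the coordinate $v_d$, hence bounded below by its value at the boundary. Every vertex of $\Lattice$ is within one or two $p$-edges of this half-space, which yields a uniform lower bound $p^2\bigl(P^+_p(\vec{1}\in C_+)\bigr)^2$ on $P^I_{p,\sigma}(\vec{v}\leftrightarrow\vec{0})$, independent of $\sigma$. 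Your argument can be repaired the same way: replace ``bulk edges'' by ``edges of the appropriate half-space,'' where translation invariance \emph{parallel} to the boundary does give the needed uniformity.
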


\begin{proof}
Let $\vec{0}$ be the origin in $\Lattice$.  Since $P^I_{p,\sigma}(x\leftrightarrow y) \,\geq\,
P^I_{p,\sigma}(x\leftrightarrow \vec{0})\,P^I_{p,\sigma}(x\leftrightarrow \vec{0})$ by the 
FKG inequality, it suffices to prove that 
$\inf\{P^I_{p,\sigma}(v\leftrightarrow\vec{0}) :  v\in \Lattice\} \,>\,0$.

For $\vec{v}\in\IntN^d$, denote the $d$-th coordinate of
$\vec{v}$ by $v_d$.  Define the half-lattice
\begin{equation}
\Lattice_+ (1) = \{ \vec{v}\in \Lattice \vert v_d \geq 1 \}.
\end{equation}
Choose the origin in $\Lattice_+(1)$ at $\vec{1} = \vec{0}+\vec{e}_d$
and let $P_p^+$ be the (usual homogeneous) percolation measure in 
the half-lattice $\Lattice^+$ (see for example reference \cite{BGN91}).  
Since $p{>}p_c(d)$,  with probability 1 there is an infinite cluster
$C_+$ in $\Lattice^+$ \cite{GM90}, 
which is unique by the corollary to theorem 1.1 in \cite{BGN91}.
Let $C$ be the cluster containing $\vec{0}$ in $\Lattice$.
By noting that the edge $\vec{0}{\sim}\vec{1}$ is open with 
probability $p$, and using the FKG inequality and the fact that $P_p^+(\vec{v}\in C_+)$ is an 
increasing function of $v_1$, we see that for any 
$\vec{v} \in \Lattice_+(1)$ we have
\begin{eqnarray}
P_{p,\sigma}^I(\hbox{$\vec{v}\in C$}) 
&\geq &p\, P_p^+(\hbox{$\vec{v}\in C(\vec{1})$})\cr
&\geq &p\, P_p^+(\hbox{$\vec{v}\in C_+$ and $\vec{1}\in C_+$}) \cr
&\geq &p\, P_p^+(\hbox{$\vec{v} \in C_+$}) \, P_p^+(\hbox{$\vec{1}\in C_+$}) \cr
&\geq & p\, \L P_p^+(\vec{1}\in C_+) \R^2   \;>\;0.
\end{eqnarray}
This is uniform for all $\vec{v} \in \Lattice_+(1)$.  A similar bound
follows if $-\vec{v} \in \Lattice_+(1)$.  If $\vec{v} \in \Lattice_0$, then
since the edge $\vec{v}{\sim}(\vec{v}{+}\vec{e}_d)$ is open with
probability $p$, we see that
\[
P_{p,\sigma}^I (\hbox{$\vec{v} \in C$})
\; \geq \; p\, P^I_{p,\sigma}(\hbox{$\vec{v}+\vec{e}_d \in C$}) 
\; \geq \; p^2\, \L P_p^+( \vec{1}\in C_+) \R^2
\]
since $\vec{v}+\vec{e}_d \in \Lattice_+(1)$.  Therefore
$p^2\L P_p^+( \vec{1}\in C_+) \R^2$ is a positive lower bound for 
$P^I_{p,\sigma}(\vec{v}\leftrightarrow\vec{0})$ that is uniform in $\vec{v}\in \Lattice$. 
\end{proof}

\section{Collapsing animals, and the function $\zeta^I(p,\sigma)$}
\label{section6}     

\subsection{Lattice animals, collapse and (homogeneous) percolation}
\label{sec6.1}

A \textit{lattice animal} is a connected and finite subgraph of
$\Lattice$.  All animals will be rooted at the origin, unless
otherwise indicated.

The \textit{size} of the animal is its number of vertices,
and the \textit{perimeter} of the animal is the collection of lattice 
edges which are incident with the animal but are not in the animal.
The \textit{perimeter size} is the number of edges in the perimeter.

Let $a_n(t)$ denote the number of distinct animals 
containing the origin, having $n$ edges, and having 
perimeter size $t$.  For example, in $\IntN^d$, 
$a_0(2d)=1$, $a_1(2d+2)=2d$, and so on. 

As before, denote the the cluster at the origin by $C$,
and let $|C|$ denote the number of vertices in $C$ and $\|C\|$
be the number of edges in $C$.  It is known that the limits
\begin{equation}
\zeta^H(p) = - \lim_{n\to\infty} \sfrac{1}{n} \log
P_p^H(|C|{=}n)\q\hbox{and}\;
\psi^H(p)= - \lim_{n\to\infty} \sfrac{1}{n} \log
P_p^H(\|C\|{=}n)
\end{equation}
exist \cite{G99}.  Moreover, since 
$\sfrac{1}{d} \|C\| \leq |C|\leq \|C\|+1$ for all clusters $C$, it follows
that $\zeta^H(p) = 0$ if and only if $\psi^H(p) = 0$.

The weight of the open cluster $C$ at the origin in homogeneous
percolation is $p^{\|C\|}q^t$ (where $q=1-p$).  The probability that $C$ 
has $n$ edges is 
\begin{equation}
P_p^H(\|C\|{=}n) \;=\; \sum_{t\geq 0} a_n(t)\,p^nq^t .
\label{eqn7P}    
\end{equation}
This shows that
\begin{equation}
\psi^H(p) \;=\; - \log p - \lim_{n\to\infty} \sfrac{1}{n} \log  \sum_{t\geq 0} a_n(t)\,q^t .
\label{eqn59}    
\end{equation}

A \textit{contact} of an animal is a lattice edge that is not in the animal but whose
endpoints are both in the animal.
Contacts are part of the perimeter of a cluster --- they are closed
edges with both endpoints in the open cluster. 

An edge is in a \textit{cycle} in the open cluster at the origin if the cluster
stays connected when the state of the edge is changed to \textit{closed}.  In 
the context of the lattice animal, an edge is in a cycle if deleting it does not
disconnect the animal.  The \textit{cyclomatic index} $c$ of a lattice animal
is the maximum number of edges which can be deleted without disconnecting the animal.

A model of lattice animals in the \textit{cycle-contact ensemble} is constructed
by counting lattice animals with respect to \textit{cyclomatic index} 
and \textit{contacts} \cite{MSW88}.  Hence, let $a_n(c,k)$ be the number of animals 
containing the origin
with $n$ edges, cyclomatic index $c$, and $k$ contacts.
The partition function of the model is 
\begin{equation}
Z_n^A(x,y) \;=\; \sum_{\sstack{c\geq 0}{k\geq 0}}
a_n(c,k)\, x^c y^k .
\label{eqn9z}    
\end{equation}
The parameters $x$
and $y$ are the \textit{cycle} and \textit{contact activities} (or generating
variables) in the model. The free energy of this 
model is known to exist \cite{FGSW99}, and is defined by
\begin{equation}
\C{F}^A (x,y) \;=\; \lim_{n\to\infty} \sfrac{1}{n} \log Z_n^A(x,y) .
\label{eqn10z}    
\end{equation}

For animals in $\IntN^d$, we have $2d\,v = 2n + t + k$ (where $v$ is the number
of vertices), while from Euler's 
relation we get $c=n-v+1$.  Eliminating $v$ from these two relations 
implies that the cyclomatic index and the number of contacts are related 
to the perimeter by
\begin{equation}
t \;=\; 2d + 2(d-1)n - k - 2d\, c
\end{equation}


Hence, write equation \Ref{eqn7P} as
\begin{equation}
P_p^H(\|C\|{=}n) \;=\; p^n \sum_{\sstack{c\geq 0}{k\geq 0}}
a_n(c,k)\, q^{2d + 2(d-1)n - k - 2d\, c}
\end{equation}
Comparing the above expression to equation \Ref{eqn9z} shows that
\begin{equation}
P_p^H(\|C\|{=}n) \;=\; q^{2d}\L pq^{2(d-1)}\R^n Z_n^A(q^{-2d},q^{-1}) .
\end{equation}
Taking logarithms of both sides, dividing by $n$ and letting $n\to\infty$ gives
\begin{equation}
\psi^H(p) \;=\; -2(d-1)\log q - \log p - \C{F}^A(q^{-2d},q^{-1}) . 
\end{equation}
Since $\psi^H(p) = 0$ if $p>p_c(d)$ and $\psi^H(p) > 0$ for
$p < p_c(d)$ (see section \ref{sec-homog}), this proves that 
\begin{equation}
\C{F}^A(q^{-2d},q^{-1})  \cases{
\;< \; -2(d-1)\log q - \log p, & if $p< p_c(d)$; \cr
\;=\; -2(d-1)\log q - \log p, & if $p>p_c(d)$. }
\end{equation}
In particular, $\C{F}(x,y)$ is non-analytic at $p=p_c(d)$
where $x = (1-p)^{-2d}$ and $y=(1-p)^{-1}$, in
which case the animals are weighted as critical percolation clusters
and the model undergoes a \textit{collapse phase transition} which may 
be interpreted as a model for gelation of a random medium.  
In this phase both $x$ and $y$ are large,
and the animals are rich in both cycles and contacts, resulting in compact
clusters.

\subsection{Proof of theorem \ref{theorem5-12}}
\label{section6-2}

Our strategy is to bound $P_{p,\sigma}^I(|C| {=} n)$
from below by $P_{p}^H(|C|{=}n{-}1)$, and then to use the
lower bound from homogeneous percolation (see equation
\Ref{eqn444z}).

Let $\Lattice_+$ be the positive half-lattice, consisting of
vertices $\{z\in \IntN^d:z_d\geq 1\}$ and all induced edges.
Let $\C{A}_+$ be the set of animals $D$ that are contained in 
$\Lattice_+$ and rooted at the vertex $e_d=(0,\ldots,0,1)$.   
Then each $D\in \C{A}_+$ is the translation of exactly $\|D\|$ 
animals which are rooted at the origin in $\Lattice$, and conversely 
every animal containing the origin is the translation of
at least one animal in $\C{A}_+$.  

For $n,m,t\geq 0$, let $a[n,m,t]$ (respectively, $a_+[n,m,t]$) be the number of
animals rooted at the origin (respectively, the number of animals 
in $\C{A}_+$ rooted at $e_d$) which have $n$ vertices, 
$m$ edges, and perimeter size $t$.  Then the 
preceding paragraph shows that $a_+[n,m,t]\,\geq \sfrac{1}{n} \,a[n,m,t]$.

For each $D\in \C{A}_+$, let $\hat{D}$ be the animal in $\Lattice_+$
obtained by 
adding the edge $0{\sim}e_d$ to $D$.   
If $D$ has $m$ edges, and perimeter size $t$,  
then $\hat{D}$ has $m+1$ edges (all in $\Edges\setminus\Edges_0$) and 
perimeter size $t+2(d-1)$ with exactly $2s$ perimeter edges in $\Edges_0$.

Thus we have (with $q=1-p$) 
\begin{eqnarray*}
  P^I_{p,\sigma}(|C|=n)  & \geq & \sum_{D\in \C{A}_+: |D|=n-1} P^I_{p,\sigma}(C=\hat{D}) \\
    & = & \sum_{m,t\geq 0} a_+[n-1,m,t]\, p^{m+1}q^{t+2(d-s-1)}\sigma^{2s}    \\
    & \geq & \frac{1}{n-1}\,p\,q^{2(d-s-1)}\sigma^{2s} \sum_{m,t\geq 0} a[n-1,m,t]\,p^m q^t   \\
    & = & \frac{1}{n-1}\,p\,q^{2(d-s-1)}\sigma^{2s} P_p^H(|C|=n-1) \,.
\end{eqnarray*}
Theorem \ref{theorem5-12} now follows from equation \Ref{eqn444z}.

\subsection{Lattice animals, adsorption and inhomogeneous percolation}

In this section our aim is to make a link between inhomogeneous
percolation and a model of lattice animals, similar in nature to the 
association made in section \ref{sec6.1} for homogeneous percolation.  

Our goal is to prove existence of the limits
\begin{equation}
\hspace{5mm}
\zeta^I(p,\sigma) \;=\; - \lim_{n\to\infty} \sfrac{1}{n}
P^I_{p,\sigma}(|C| {=} n) \;\hbox{and}\;
\psi^I(p,\sigma) \;=\; - \lim_{n\to\infty} \sfrac{1}{n}
P^I_{p,\sigma}(\|C\| {=} n) 
\label{eqn67}     
\end{equation}
and to relate these to singular points in the free energies
of lattice animals.

We first show existence of the limits in equation
\Ref{eqn67}. 

Let $\C{A}$ be the set of lattice animals in $\Lattice$ containing the origin.
Let $a_{n,m} (t,r)$ be the number of animals in $\C{A}$
having $n$ edges, of which $m$ are in
$\Edges_0$, and whose perimeter consists of $t$ edges in $\Edges\setminus\Edges_0$
and $r$ edges in $\Edges_0$.  

Define the partition function of these animals by
\begin{equation}
Z_n^I (x,y,z) \;=\; 
\sum_{\sstack{t\geq 0}{r\geq 0}} \sum_{m\geq 0}
a_{n,m} (t,r) \, x^t y^r z^m .
\label{eqn33z}   
\end{equation}
Then the probability that the cluster at the origin has
$n$ edges is given by
\begin{equation}
P^I_{p,\sigma} (\| C \| {=} n)
\;=\; p^n\,Z_n^I(q,\tau,\sigma/p) 
\label{eqn72z}   
\end{equation}
where $q=1-p$ and $\tau=1-\sigma$.
This shows that if the limiting free energy $\C{F}(x,y,z) 
\;=\; \lim_{n\to\infty} \sfrac{1}{n} \log Z_n^I(x,y,z)$
exists, then the limit $\psi^I(p,\sigma)$ in equation \Ref{eqn67} 
also exists.  Existence of $\zeta^I(p,\sigma)$ is done using 
a similar approach, but counting animals in a different
ensemble (number of vertices).

The basic construction for showing the existence of
$\C{F}(x,y,z)$ is illustrated in figure \ref{figure7}.
Consider two animals $\omegas1$ and $\omegas2$,  each intersecting
$\Lattice_0$ at vertices we call \textit{visits}.
An \textit{edge-visit} in these animals is an edge of the animal which is
also in $\Edges_0$.  
Observe that translations parallel to $\Lattice_0$ preserve visits and edge-visits.

\begin{figure}[t!]
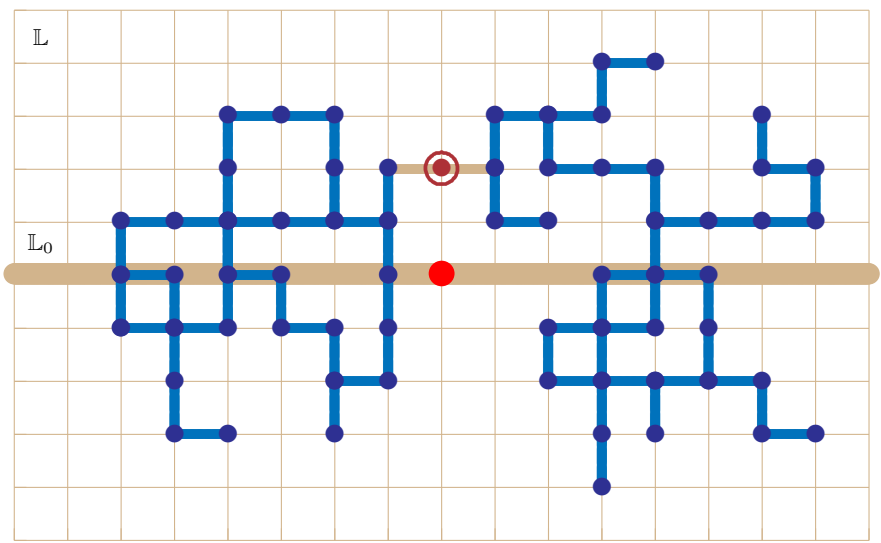

\centering\hfill
\input figure7.tex
\caption{Concatenation of two clusters in the inhomogeneous lattice.
Two animals are placed in a standard placing with a minimal distance of
two lattice steps separating them.  The animals are joined into a single
animal by adding two edges and a single new vertex (marked above)
along the lexicographic least path separating the two animals.}
\label{figure7}   
\end{figure}

The goal is to concatenate $\omegas1$ and $\omegas2$ into one
animal from which the original pair of animals can be uniquely recovered.
  
A \textit{placing} $(\hat{\omega}_1,\hat{\omega}_2)$ of two animals $\omegas1$
and $\omegas2$ is a pair of translations (parallel to $\Lattice_0$)  $\hat{\omega}_1$ of
$\omega_1$ and $\hat{\omega}_2$ of $\omegas2$ such that the minimum 
distance between $\hat{\omega}_1$ and $\hat{\omega}_2$ is at least $2$ steps.   

There are infinitely many placings  $(\hat{\omega}_1,\hat{\omega}_2)$, but 
there are only finitely many non-equivalent placings 
with a minimum distance of two
(where two placings are equivalent if they only differ by
an overall translation parallel to $\Lattice_0$).

Consider a placing $(\hat{\omega}_1,\hat{\omega}_2)$ with the following
properties:  (1) each visit in $\hat{\omega}_1$ is lexicographically less
than each visit in $\hat{\omega}_2$; (2) the shortest path in $\Lattice$
from a vertex in $\hat{\omega}_1$ to a vertex in $\hat{\omega}_2$ has length
two.  These two properties define a nonempty finite collection of
placings (up to equivalence), one of which is lexicographically least.  
This is the \textit{standard placing}.
%

Observe that the total perimeter of the animals in a standard
placing is the sum of the perimeters of the two animals.

In each standard placing there is at least one path of length
two joining the two animals.  In the set of such paths, there
is a path $P$ which is lexicographically least.  The animals $\hat{\omega}_1$
and $\hat{\omega}_2$ may be concatenated by joining them into
a single animal by adding two edges along $P$.  This increases
the number of edges by $2$ and decreases the total perimeter 
of the animals by $2$.  Observe that the center vertex of $P$
is a cut-vertex in the concatenated animal.

Consider the possible arrangements of the two added edges
along $P$: (a) The two added edges are disjoint with $\Lattice_0$.
(b) One edge in $P$ is in $\Lattice_0$, and 
(c) both edges are in $\Lattice_0$.

Next, account for the change in the perimeter of the animals upon
concatenation.  Suppose that $\omegas1$ is an animal 
with $\n1$ edges and with $m-\m1$ edges in $\Lattice_0$, 
and with perimeter size $t+r-(t_{1}+r_{1})$, including $r-r_{1}$ 
perimeter edges in $\Lattice_0$.  

Similarly, suppose that $\omegas2$ is an animal with $\n2$ edges 
and with $\m1$ edges in $\Lattice_0$, and with perimeter
size $t_{1}+r_{1}$, including $r_{1}$ perimeter edges in $\Lattice_0$.

Putting these animals in a standard placing and concatenating them
gives an animal $\omega$ with $\n1+\n2+2$ edges in total, 
and there are either $m$ edge-visits (case (a)), or $(m+1)$ 
edge-visits (case (b)), or $(m+2)$ edge-visits (case (c)).
These different outcomes are due to the fact that new edges
may be created in $\Lattice_0$ when the concatenation introduces
two new edges.

It is necessary that $\omegas1$ and $\omegas2$ can be recovered
from the concatenated animal. Since the concatenation is done by adding 
two edges incident with one another in a new cut-vertex, these edges 
can be located in $\omega$ by colouring the new vertex red.
This gives an animal with one red vertex of degree 2 (and the remaining
vertices are all black).  Note that the maximum number
of vertices in $\omega$ is $n_1+n_2+3$.

By deleting the two edges incident on the red vertex, it is possible
to recover the two translated animals $\hat{\omega}_1$ and $\hat{\omega}_2$ 
in their  standard placing.  Observe
that there are at most $2d{-}2$ new perimeter edges associated with
the red vertex, and that at most $2s$ of these may be in the
defect lattice $\Lattice_0$.

We now account for the changes in the number of perimeter
edges.  The concatenation deletes two perimeter edges, but
the new red vertex creates new perimeter edges.  Thus, 
$\omega$ has perimeter between $t+r-2$ and $t+r-2+2d-2$
of which between $r$ and $r+2s$ are in $\Lattice_0$.

The roots of the animals $\omegas{i}$ are discarded when they
are put in standard placing, and so the number of choices for
each $\hat{\omega}_{i}$ is at least $a_{n_1,m-m_1}(t-t_1,r-r_1) / (\n1+1)$
for $\hat{\omega}_1$ and $a_{n_2,m_1}(t_1,r_1)/(\n2+1)$ for $\hat{\omega}_2$.

The concatenated animal $\omega$ is similarly unrooted,
and there are at most $(n_1+n_2+3)$ positions for the red vertex.   
Accounting for the different possible numbers of 
edge-visits and perimeter sizes then shows that
\begin{eqnarray}
&
 \sum_{m_1,t_1,r_1} 
 \L  \frac{a_{n_1,m-m_1}(t-t_1,r-r_1)}{\n1+1} \R 
  \L \frac{a_{n_2,m_1} (t_1,r_1)}{\n2+1} \R& \nonumber \\
&
\q \;\leq\; (\n1+\n2+3) \sum_{i=0}^{2d-2}\sum_{j=0}^{2s}
\LH a_{\n1+\n2+2,m}(t-2+i,r+j) \right. & \nonumber \\
& \hspace{-2cm} \left.
+ a_{\n1+\n2+2,m+1}(t-1+i,r-1+j)
 + a_{\n1+\n2+2,m+2}(t+i,r-2-j) \RH , &
\label{eqn84}   
\end{eqnarray} 
where the summation over $i$ and $j$ accounts for new perimeter
edges incident on the red vertex.

Define $\phi(x,y) = \sum_{i=0}^{2d-2} \sum_{j=0}^{2s}
x^{-i}y^{-j}$.  Multiply equation \Ref{eqn84} by 
$x^ty^rz^m$ and sum over $m$, $t$ and $r$.  This gives
\begin{eqnarray}
Z_{\n1}(x,y,z)\,Z_{\n2}(x,y,z) 
&\;\leq\;&
(\n1{+}\n2{+}1)^2(\n1{+}\n2{+}3) \phi(x,y) \nonumber \\
& &\times \LH  x^2 {+} z^{-1}xy {+} z^{-2}y^2\RH\, Z_{\n1+\n2+2}(x,y,z) .
\end{eqnarray} 
Define $\lambda(x,y,z) = \phi(x,y) \L x^2 + z^{-1}xy\ + z^{-2}y^2\R$.
Then the above simplifies to 
\begin{eqnarray}
Z_{\n1}(x,y,z)\,Z_{\n2}(x,y,z) &\;\leq\; &
(\n1{+}\n2{+}1)^2(\n1{+}\n2{+}3) \nonumber \\
& & \q \times \lambda (x,y,z) \, Z_{\n1+\n2+2}(x,y,z) .
\end{eqnarray}
This shows that the function $ Z_{n-2} (x,y,z)/\lambda(x,y,z)$
satisfies a generalised supermultiplicative inequality on $\NatN$, and by
references \cite{H62,HF57} one obtains the following theorem.

\begin{theorem} For  $x,y,z \in(0,\infty)$ the limit
\[
\C{F}^I(x,y,z)
\;=\; \lim_{n\to\infty} \sfrac{1}{n} \log Z_n^I (x,y,z)
\]
exists. Moreover, $\C{F}^I (x,y,z)$ is log-convex in each of its arguments. 
\qed
\label{theorem12A}   
\end{theorem}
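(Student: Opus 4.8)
\emph{Proof proposal.} The combinatorial work is essentially done: the concatenation construction has already produced the generalised supermultiplicative inequality
\[
  Z_{n_1}^I(x,y,z)\,Z_{n_2}^I(x,y,z)\;\leq\;(n_1{+}n_2{+}1)^2(n_1{+}n_2{+}3)\,\lambda(x,y,z)\,Z_{n_1+n_2+2}^I(x,y,z),
\]
so the plan is to put this into a form covered by the Hammersley-type theorems of \cite{H62,HF57} and then read off log-convexity from the nonnegativity of the coefficients $a_{n,m}(t,r)$. First fix $(x,y,z)\in(0,\infty)^3$, note that $\lambda=\lambda(x,y,z)$ is a fixed finite positive number, and set $W_n:=Z_{n-2}^I(x,y,z)/\lambda$ for $n\geq 2$. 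Substituting $n_i\mapsto n_i+2$ in the displayed inequality turns it into
\[
  W_{n_1}W_{n_2}\;\leq\;(n_1{+}n_2{-}3)^2(n_1{+}n_2{-}1)\,W_{n_1+n_2}\qquad(n_1,n_2\geq 2),
\]
i.e., writing $w_n=\log W_n$, one gets $w_{n_1}+w_{n_2}\leq w_{n_1+n_2}+g(n_1{+}n_2)$ with $g(n)=O(\log n)=o(n)$. This is exactly the hypothesis of the generalised superadditivity (Fekete-type) lemma, so $\lim_{n\to\infty}\sfrac{1}{n}w_n$ exists. Since $\sfrac{1}{n}\log W_n$ and $\sfrac{1}{n}\log Z_n^I(x,y,z)$ differ only by $\sfrac{1}{n}\log\lambda$ plus an index shift by $2$, both of which are negligible after dividing by $n$, the limit $\C{F}^I(x,y,z)$ exists.

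Next I would check the limit is finite. For each $n$ the sum defining $Z_n^I$ is finite (there are finitely many animals in $\C{A}$ with $n$ edges, each with finite perimeter and edge-visit counts), so $-\infty<\log Z_n^I<\infty$; and $Z_n^I$ is bounded below by the weight of a single $n$-edge animal, hence $\C{F}^I>-\infty$. For the upper bound, the number of animals in $\C{A}$ with $n$ edges is at most $K_1^n$ for some $K_1=K_1(d)$, and each has $t,r,m\leq 2dn$, so its weight $x^ty^rz^m$ is at most $K_2^n$ with $K_2=\max\{1,x,y,z\}^{2d}$; thus $Z_n^I\leq(K_1K_2)^n$ and $\C{F}^I(x,y,z)\leq\log(K_1K_2)<\infty$.

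For the log-convexity claim, fix $y$ and $z$ and write $Z_n^I(x,y,z)=\sum_{t\geq 0}b_t\,x^t$, where $b_t=b_t(n,y,z)=\sum_{m\geq 0}\sum_{r\geq 0}a_{n,m}(t,r)\,y^rz^m\geq 0$. For $s_1,s_2\in\RealN$, Cauchy--Schwarz gives $\bigl(\sum_t b_t\,e^{t(s_1+s_2)/2}\bigr)^2\leq\bigl(\sum_t b_t\,e^{ts_1}\bigr)\bigl(\sum_t b_t\,e^{ts_2}\bigr)$, so $s\mapsto\log Z_n^I(e^s,y,z)$ is midpoint convex; being finite and continuous (a finite exponential sum), it is convex. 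Hence $\C{F}^I(e^s,y,z)=\lim_{n\to\infty}\sfrac{1}{n}\log Z_n^I(e^s,y,z)$ is a pointwise limit of convex functions of $s$, hence convex, i.e.\ $\C{F}^I$ is log-convex in $x$. Grouping the same sum by powers of $y$ (resp.\ $z$), with the remaining coefficients again nonnegative, gives log-convexity in $y$ (resp.\ $z$) by the identical argument.

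There is no real obstacle remaining, since the hard step (the concatenation yielding the supermultiplicative inequality) precedes the statement. The only points needing care are bookkeeping ones: that the polynomial prefactor $(n_1{+}n_2{+}1)^2(n_1{+}n_2{+}3)\lambda$ contributes only an $o(n)$ term on the logarithmic scale and therefore does not affect the limit; that the limit is genuinely finite (handled by the crude exponential count of animals above); and that convexity is preserved under the pointwise limit in $n$.
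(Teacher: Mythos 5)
Your argument is correct and follows the same route as the paper, which derives the supermultiplicative inequality just before the theorem statement and then simply cites \cite{H62,HF57} for existence of the limit and remarks that log-convexity follows from the positivity of the coefficients. You have essentially filled in what those citations and one-line remark stand for (the Fekete-type reduction with $o(n)$ correction, the finiteness bounds, and the Cauchy--Schwarz / midpoint-convexity step), and the bookkeeping in your index shift is right.
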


Log-convexity follows because $Z_n^I(x,y,z)$ is a 
polynomial in $\{x,y,z\}$ with positive coefficients.

Comparison to equation \Ref{eqn72z} gives the following relationship between
$\C{F}^I(x,y,z)$ and $\zeta^I(p,\sigma)$:
\begin{equation}
\C{F}^I (q,\tau,\sigma/p) \;=\; - \log p - \psi^I(p,\sigma)
\label{eqn84A}   
\end{equation}
which is valid for $p,\sigma\in(0,1)$ and proves existence of the
limit definition of $\psi^I(p,\sigma)$ in equation \Ref{eqn67}.

Existence of $\zeta^I(p,\sigma)$ can be similarly shown, as follows. 

Let $A_{v,n,m}(t,r)$ be the number of edge animals at the origin
as above, but with $v$ vertices, $n$ edges of which $m$ are in
$\Edges_0$, and with perimeter having $t$ edges in $\Edges\setminus \Edges_0$ and
$r$ edges in $\Edges_0$.  Define the partition function
\begin{equation}
Y_v(a,x,y,z) \;=\; 
\sum_{\sstack{t\geq 0}{r\geq 0}} \sum_{\sstack{n\geq 0}{m\geq 0}}
A_{v,n,m} (t,r) \,a^n x^t y^r z^m .
\label{eqn33z}   
\end{equation}
Then the probability that the animal at the origin has size $v$
is given by
\begin{equation}
P_{p,\sigma}^I(|C|=v) \;=\;
Y_v(p,q,\tau,\sigma/p) .
\label{eqn86A}   
\end{equation}

Repeating the construction in figure \ref{figure7} in this 
ensemble gives an outcome similar to the above, but now with
\begin{eqnarray}
& &
 \sum_{n_1,m_1,t_1,r_1} 
 \L  \frac{A_{v_1,n-n_1,m-m_1}(t-t_1,r-r_1)}{v_1} \R 
  \L \frac{A_{v_2,n_1,m_1} (t_1,r_1)}{v_2} \R \nonumber \\
& &
  \;\leq\; (v_1+v_2+1) \sum_{i=0}^{2d-2}\sum_{j=0}^{2s}
\LH A_{v_1+v_2+1,n+2,m}(t-2+i,r+j) \right.   \\
& & 
\left. + A_{v_1+v_2+1,n+2,m+1}(t{-}1{+}i,r{-}1{+}j) 
              + A_{v_1+v_2+1,n+2,m+2}(t{+}i,r{-}2{+}j) \RH . 
\nonumber
\end{eqnarray} 
Multiply this by $a^n x^t y^r z^m$ and summing the left hand
side over $\{n,t,r,m\}$ gives
\begin{eqnarray}
Y_{v_1}(a,x,y,z)\,
Y_{v_2}(a,x,y,z)
& \leq &v_1 v_2 (v_1+v_2+1) \nonumber \\
& &\hspace{-3.5cm} 
\LH a^{-2} \phi(x,y) \L x^2 + xy z^{-1} + y^2 z^{-2} \R\RH\,
Y_{v_1+v_2+1} (a,x,y,z) .
\end{eqnarray}
Similarly to theorem \ref{theorem12A}, $Y_v(a,x,y,z)$
satisfies a generalised supermultiplicative inequality 
on $\NatN$, and by references \cite{H62,HF57} the
following theorem is a result.

\begin{theorem} For  $a,x,y,z \in(0,\infty)$ the limit
\[
\C{G}(a,x,y,z)
\;=\; \lim_{v\to\infty} \sfrac{1}{v} \log Y_v (a,x,y,z)
\]
exists. Moreover, $\C{G} (a,x,y,z)$ is log-convex 
in each of its arguments. 
\qed
\label{theorem12B}   
\end{theorem}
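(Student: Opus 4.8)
The plan is to follow the same route as Theorem~\ref{theorem12A}: read off a generalised supermultiplicative inequality for the sequence $(Y_v)_{v\geq 1}$ at fixed positive activities, feed it to the generalised Fekete-type lemma of \cite{H62,HF57} to get the limit, and then deduce log-convexity from the fact that each $Y_v$ is a polynomial with nonnegative coefficients.

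First I would fix $a,x,y,z\in(0,\infty)$ and abbreviate $\mu=\mu(a,x,y,z)=a^{-2}\phi(x,y)\L x^2+xyz^{-1}+y^2z^{-2}\R$, a finite positive constant. The inequality displayed just before the theorem reads
\[
Y_{v_1}(a,x,y,z)\,Y_{v_2}(a,x,y,z)\;\leq\;v_1v_2(v_1{+}v_2{+}1)\,\mu\,Y_{v_1+v_2+1}(a,x,y,z).
\]
Writing $h_v=\log Y_v(a,x,y,z)$ and $\widetilde h_u=h_{u-1}$ (so that $\widetilde h$ is indexed by $u\geq 2$), the awkward shift $v_1,v_2\mapsto v_1{+}v_2{+}1$ turns into ordinary index addition $u_1{+}u_2$ under $u_i=v_i{+}1$, and taking logarithms gives
\[
\widetilde h_{u_1}+\widetilde h_{u_2}\;\leq\;\widetilde h_{u_1+u_2}+\log\mu+\log\L (u_1{-}1)(u_2{-}1)(u_1{+}u_2{-}1)\R.
\]
The correction term on the right is bounded by a function of $u_1{+}u_2$ alone that is $O(\log(u_1{+}u_2))$, so $(\widetilde h_u)$ falls exactly under the generalised superadditive theorem of \cite{H62,HF57}; hence $\lim_{u\to\infty}\widetilde h_u/u$ exists in $[-\infty,+\infty]$, which is precisely the statement that $\C{G}(a,x,y,z)=\lim_{v\to\infty}\frac1v\log Y_v(a,x,y,z)$ exists.

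Next I would record that this limit is finite. For an upper bound, the number of lattice animals at the origin with $v$ vertices grows at most exponentially, say like $\kappa^{v}$ with $\kappa=\kappa(d)$, and each such animal has $n,m,t,r$ all of order $v$; therefore $Y_v(a,x,y,z)\leq (\mathrm{const})^{v}$ and $\C{G}<\infty$. For a lower bound, evaluating the defining sum on a single animal with $v$ vertices (for instance the straight path through the origin along $\vec e_1$, which lies in $\Lattice_0$) contributes a term of the form $a^{v-1}z^{v-1}x^{t}y^{r}$ with $t,r=O(dv)$, so $Y_v(a,x,y,z)\geq c^{v}$ for some $c=c(a,x,y,z,d)>0$, giving $\C{G}>-\infty$.

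Finally, for log-convexity: by its defining formula $Y_v(a,x,y,z)$ is a polynomial in $a,x,y,z$ with nonnegative coefficients. Fixing three of the four variables, the remaining one-variable function is a sum of monomials $c\,t^{k}$; each monomial is log-linear in $\log t$, sums of log-convex functions are log-convex, so $\log Y_v$ is convex in $\log t$, and hence so is $\frac1v\log Y_v$. A pointwise limit of convex functions is convex, so $\C{G}$ is convex in $\log a$, in $\log x$, in $\log y$, and in $\log z$ separately, i.e. log-convex in each argument. The one place that needs genuine care is not conceptual but bookkeeping: matching the inequality above---with its $v_1v_2(v_1{+}v_2{+}1)$ prefactor and its shift by one in the index---to the precise hypotheses of the cited generalised Fekete lemma, and stating the finiteness bounds uniformly enough over the activity range; the structural ideas are already available from the proof of Theorem~\ref{theorem12A}.
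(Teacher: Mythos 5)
Your proposal is correct and follows essentially the same route as the paper: derive the generalised supermultiplicative inequality for $Y_v$, invoke the Fekete-type result of \cite{H62,HF57} to get the limit, and deduce log-convexity from the fact that $Y_v$ is a polynomial in the activities with nonnegative coefficients. The additional bookkeeping you supply (the index shift $u=v{+}1$ that turns the $v_1{+}v_2{+}1$ shift into ordinary additivity, and the explicit finiteness bounds for $\C{G}$) is a sensible amplification of what the paper leaves implicit.
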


Notice by equation \Ref{eqn86A} that $\zeta^I(p,\sigma) = 
- \C{G}(p,q,\tau,\sigma/p)$ so that the limit in equation
\Ref{eqn67} exists.

We claim that $\zeta^I(p,\sigma) = 0$ in $\C{R}_H$. To see this,
suppose $\zeta^I(p,\sigma) > 0$ at some $(p,\sigma)$ in $\C{R}_H$.
Then there exists an $\eps>0$
and an $N_\eps \in\NatN$ such that $P^I_{p,\sigma}(|C|=n) 
\leq e^{-\eps n}$ for all $n \geq N_\eps$.  This shows that
\[P^I_{p\sigma} (\infty{>}|C|{\geq}n) \leq 
\sum_{m\geq n} e^{-\eps m} = \frac{e^{-\eps n}}{1-e^{-\eps}} \]
for any $n\geq N_\eps$.  
But this contradicts theorem \ref{theorem5-12}.  Therefore $\zeta^I(p,\sigma) = 0$.

A similar argument using theorem \ref{thm999} shows that $\zeta^I(p,\sigma) = 0$ in $\C{R}_L$.

Since $\zeta^I(p,\sigma) = 0$ in $\C{R}_H\cup\C{R}_L$, 
it follows that $\psi^I(p,\sigma) = 0$ in $\C{R}_H\cup\C{R}_L$.

On the other hand, 
by theorem \ref{theoremZ6}, $\zeta^I(p,\sigma) \geq
\sfrac{1}{2\,\chi^H(p)\chi^I(p,\sigma)}>0$, provided that $p<p_c(d)$ and
$\sigma\in(0,\sigma^*(p))$ (this is in regime $\C{R}_0$ in figure
\ref{Diagram}). This shows that $\psi^I(p,\sigma) > 0$
in $\C{R}_0$.

In terms of the free energy $\C{F}^I(x,y,z)$
in equation \Ref{eqn84A}, this implies that
\begin{equation}
\C{F}^I(q,\tau,\sigma/p) \cases{
\;=\; - \log p, & \hbox{in $\C{R}_H$ (i.e.\ when $p>p_c(d)$)}; \cr
\;=\; - \log p, & \hbox{in $\C{R}_L$ (when $p<p_c(d)$ and $\sigma>\sigma^*(p)$)}; \cr
\;<\; - \log p, & 
\hbox{in $\C{R}_0$ (when $p<p_c(d)$ and $\sigma<\sigma^*(p)$}).}
\label{eqn72XX}   
\end{equation}
Thus, $\C{F}^I(q,\tau,\sigma/p)$ is non-analytic along
the line segment $p=p_c(d)$ and $\sigma\in(0,\sigma^{**})$ (where
$\sigma^{**}$ is the limit of $\sigma^*(p)$ as $p$ approaches $p_c(d)$ from the left),
as well as along the surface critical curve $\sigma^*(p)$
for $0 \leq 0 < p_c(d)$.

\section{Numerical results}
\label{sectionNum}     

We performed a numerical study of inhomogeous percolation 
using the Newman-Ziff algorithm \cite{NZ00} to sample clusters
in the model.  To describe the implementation, let $B(L)$ be the $d$-dimensional 
hypercube of side length $2L$ defined by
$B(L) = \LH -L,L\RH^d \cap \Lattice$. 
The boundary of $B(L)$ is  $\partial B(L)$, and it has
a vertical and a horizontal component, similar to equation \Ref{eqnA61}:
\begin{eqnarray}
\partial_{\hbox{vert}}(L) &=& \LC {v}\in B(L) \vert
\hbox{$|v_i| = L$ for at least one $i \leq s$} \RC \\
\partial_{\hbox{hor}}(L) &=& \LC {v}\in B(L) \vert
\hbox{$|v_i| = L$ for at least one $i > s$} \RC .
\end{eqnarray}
The vertical component $\partial_{\hbox{vert}} (L)$
is composed of $2s$ $(d{-}1)$-dimensional hypercubes defined
by
\[
A_i = \LC {v} \in \partial_{\hbox{vert}}(L)\vert
\hbox{$v_i = L$} \RC \:\hbox{and}\;
A_{-i} = \LC {v} \in \partial_{\hbox{vert}}(L)\vert
\hbox{$v_i = -L$} \RC 
\]
for $i=1,2,\ldots,s$.

Consider a realisation of open edges in $\Lattice$ at
densities $(p,\sigma)$. This realisation gives sets
of open edges in $B(L)$:  Denote the set of open edges
in $B(L)\setminus \Lattice_0$ by $\C{P}$ and the
set of open edges in the defect plane $B(L)\cap \Lattice_0$
by $\C{S}$.  Let 
$\indic_{\C{P},\C{S},L} \L A_1 \leftrightsquigarrow A_{-1} \R$
be the indicator function that there is an open path inside $B(L)$ 
between two opposite vertical faces $A_1$ and $A_{-1}$ in 
$\partial_{\hbox{vert}}(L)$.  

The average of 
$\indic_{\C{P},\C{S},L} \L A_1 \leftrightsquigarrow A_{-1} \R$
for all realisations of $\C{P}$ at density $p$ and
all $\C{S}$ with $|\C{S}|=s$ is denoted by
$Q_p(s)$.  That is, $Q_p(s)$ is the probability that there is
an open path in $B(L)$ between $A_1$ and $A_{-1}$ when
bulk edges are open with probability $p$ given that there are
exactly $s$ surface edges open in $B(L) \cap \Lattice_0$. 
Then $0\leq s \leq S$ where $S$ is the total
number of edges in $B(L) \cap \Lattice_0$.

Following Newman and Ziff \cite{NZ00}, let us construct
\begin{equation}
Q_L (p,\sigma) = \sum_{s=0}^S \Bi{S}{s}
\sigma^s (1-\sigma)^{S-s}\, Q_p(s) .
\label{Q_L_p_sigma}
\end{equation}

Clearly, $Q_L(p,\sigma)$ decreases to zero with $L$ 
in $\C{R}_0$ (see figure \ref{Diagram}).  On the other
hand, it should approach a positive probability 
with increasing $\sigma$ for fixed $p\leq p_c(d)$ 
when $\sigma>\sigma^*(p)$.  
That is, if $L\to\infty$, then
$\lim_{L\to\infty} Q_L(p,\sigma) = 0$ in $\C{R}_0$, and 
$\liminf_{L\to\infty} Q_L(p,\sigma) > 0$ in the surface regime
$\C{R}_L$.  Hence, one may estimate the critical curve
$\sigma^*(p)$ by estimating $Q_L(p,\sigma)$ for
finite values of $L$ and various $(p, \sigma)$.

In figure \ref{figure01-32} numerical estimates of $Q_L(\sigma)
\equiv Q_L(\sigma,p)$ as a function of $\sigma$ for $p=0.1$
for the model with $(d,s)=(3,2)$ is presented (with
$L\in\C{L} = \{5,10,15,20,25,30,35\}$).  $Q_L(\sigma)$ is small
for $\sigma$ small, and increases with increasing $\sigma$.
For different values of $L$ all the curves pass almost through
the same point at a critical value of $\sigma$.

The normal scaling assumption for a function like $Q_L(\sigma)$ is
\begin{equation}
Q_L (\sigma) \simeq F_p(L^\phi(\sigma{-}\sigma^*(p)))
\end{equation}
where $\phi$ is a crossover exponent and $F$ is a scaling function.
In the case that $s=2$ the surface percolation at $\sigma=\sigma^*(p)$
should be in the same universality class as homogeneous percolation in
two dimensions.  If $\sigma = \sigma^*(p)$ then this shows
that $Q_L(\sigma^*(p)) \simeq F_p(0)$ so that the value of $Q_L(\sigma)$
is independent of $L$ at the critical point.  This indicates that the
point where all the curves intersect in figure \ref{figure01-32} is 
an estimate of the location of the critical point.

\begin{figure}[t!]
\centering\hfill
\input figure01-32.tex
\caption{Plots of $Q_L(p,\sigma)$ for $p=0.1$
as a function of $\sigma$ for $d=3$ and $s=2$.  
Each curve is the average taken over $30000$ realisations 
of bulk percolation clusters with $p=0.1$ and then 
determining $Q_L(0.1,\sigma)$ as a function of $\sigma$.
The value of $L$ increases in the set $\{5,10,15,20,25,30,35\}$.
The curves intersect to high accuracy in a single point for larger
values of $L$, which gives an estimate of the location of
the critical point $\sigma^*(0.1)$.}
\label{figure01-32} 
\end{figure}

\begin{figure}[b!]
\centering\hfill
\includegraphics[width=100mm,height=60mm]{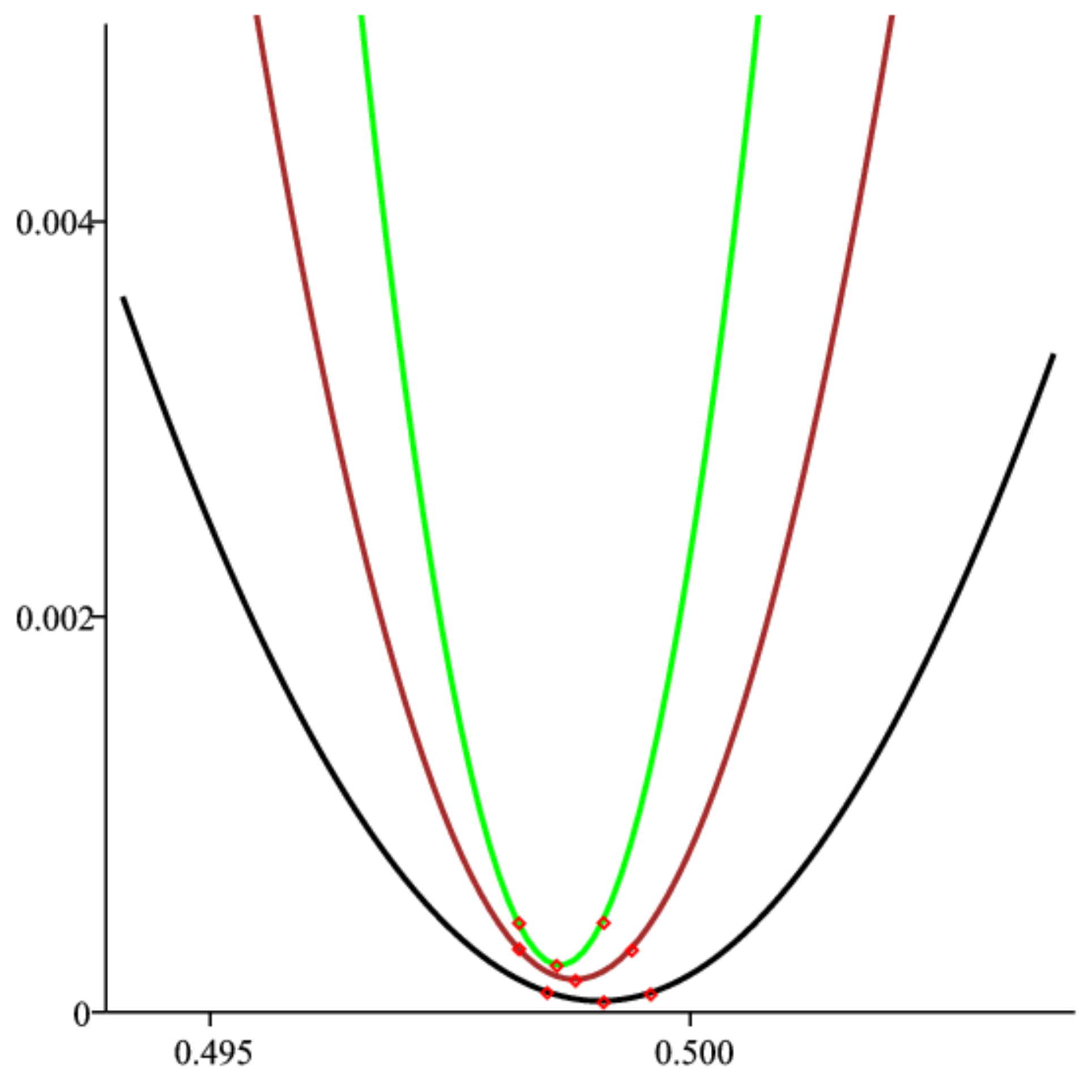}
\caption{A plot of $\C{E}^2(\sigma)$ against $\sigma$ for
$p=0.1$, $d=3$ and $s=2$.  The  top curve is for data
when $\C{L}=\{5,10,15,20,25,30\}$, the middle curve 
is obtained with $L=5$ dropped, and the bottom curve is
obtained when both $L=5$ and $L=10$ are dropped.
The minima in these curves are estimates of the location of the
narrow point in figure \ref{figure01-32}.  The width of 
$\C{E}^2(\sigma)$ at twice its minimum height, and also at
four times its minimum height, is indicated by the 
square symbols on each curve.}
\label{figureQLsigma01-32}
\end{figure}
To find a numerical estimate of the crossing point, define the least
square width of the set of curves at surface density $\sigma$ by
\begin{equation}
\C{E}^2(\sigma) = \sum_{L\in\C{L}}\sum_{K\in\C{L}} 
\L Q_L(\sigma) - Q_K(\sigma) \R^2.
\end{equation}
$\C{E}^2(\sigma)$ is a measure of the square vertical width of the set of 
curves, and minimizing it gives an estimate of the location (the value of 
$\sigma$) of \textit{narrowest vertical waist} in the set of intersecting curves.
That is, this gives an estimate for $\sigma^*(p)$.  An error bar can 
be estimated by determining the values of $\sigma$ where
 $\C{E}^2(\sigma)$ is twice its minimum.  For example, the data in figure 
\ref{figure01-32} gives $\sigma^*(0.1) = 0.49859 \pm 0.00040$.   
A plot of $\C{E}^2(\sigma)$ against $\sigma$ for $p=0.1$ is 
given in figure \ref{figureQLsigma01-32}.

{\renewcommand{\baselinestretch}{1.5}
\begin{table}[t!]
\begin{center}
{
\begin{tabular}{|l|l||l|l|}
\hline\hline
\multicolumn{2}{|c||}{$d=3$, $s=2$} &
\multicolumn{2}{|c|}{$d=4$, $s=2$} \\
\hline
\q$p$ &\qq $\sigma^*(p)$ &\q $p$ &\qq $\sigma^*(p)$   \\
\hline
$0.000$ &$0.5$  &$0$ &$0.5$  \\
$0.050$ &$0.50043 \pm 0.00064$ &$0.050$&$0.5007 \pm 0.0013$   \\
$0.100$ &$0.4988  \pm 0.0011$   &$0.0750$&$0.4992 \pm 0.0010$  \\
$0.150$ &$0.4929  \pm 0.0018$   &$0.1100$&$0.4956 \pm 0.0010$  \\
$0.170$ &$0.4865  \pm 0.0020$   &$0.1200$&$0.4929 \pm 0.0019$  \\
$0.200$ &$0.4746  \pm 0.0024$   &$0.1250$&$0.4914 \pm 0.0014$  \\
$0.215$ &$0.4626  \pm 0.0026$   &$0.1300$&$0.4896 \pm 0.0019$  \\
$0.220$ &$0.4575 \pm 0.0020$    &$0.1350$&$0.4873 \pm 0.0026$  \\
$0.225$ &$0.4509 \pm 0.0010$    &$0.1400$&$0.4842 \pm 0.0036$  \\
$0.230$ &$0.4405 \pm 0.0021$    &$0.1450$&$0.4796 \pm 0.0063$  \\
$0.235$ &$0.4308 \pm 0.0019$    &$0.1500$&$0.4741 \pm 0.0052$  \\
$0.240$ &$0.4138 \pm 0.0046$    &$0.1525$&$0.4703 \pm 0.0030$  \\
$0.245$ &$0.3797 \pm 0.0080$    &$0.1550$&$0.4651 \pm 0.0040$ \\
$p_3^*$ &$0.2941 \pm 0.0091$   &$0.1575$&$0.4551 \pm 0.0025$ \\
$           $  &$                                    $    &$0.1585$&$0.4463 \pm 0.0032$ \\
$           $  &$                                    $    &$0.1595$&$0.4290 \pm 0.0045$ \\
$           $  &$                                    $    &$0.1600$&$0.4079 \pm 0.0062$ \\
$           $  &$                                    $    &$0.16013$&$0.3977\pm 0.0056$ \\
\hline\hline
\end{tabular}
}
{
\caption{Estimates of $\sigma^*(p)$ for $(d,s)=(3,2)$ and $(d,s)=(4,2)$}
}
\end{center}
\label{table32}  
\end{table}
}

\begin{figure}[t!]
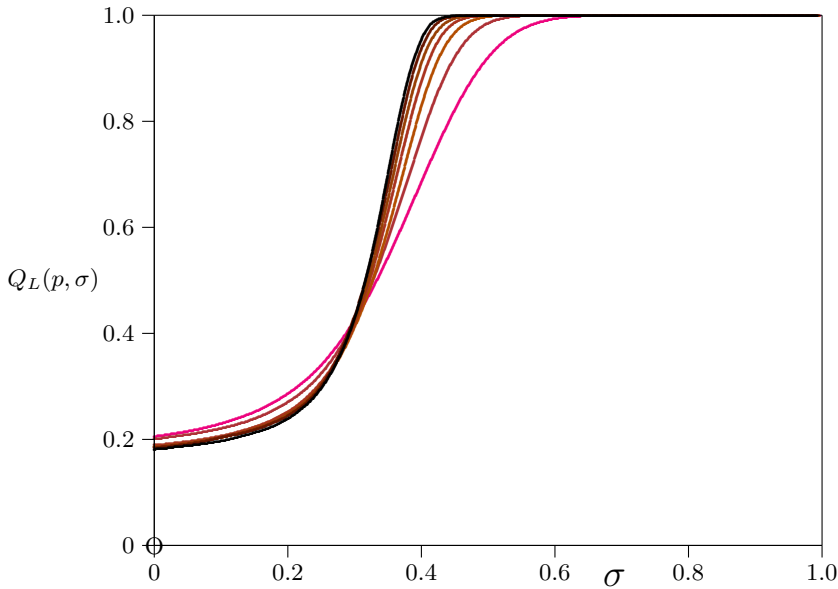

\centering\hfill
\input figure32.tex
\caption{Critical behaviour in the model for $p=p_c(3)=0.24881182(10)$
as a function of $\sigma$ for $d=3$ and $s=2$.  The curves 
are numerical estimates of the probability $Q_L(p,\sigma)$, for 
$L\in \{5,10,15,20,25,30,35\}$.  Each curve was computed
by taking the average of $30000$ realisations of bulk 
percolation clusters at $p=p_c(3)$ and then computing
the average $Q_L(p,\sigma)$ as a function of $\sigma$.
The intersections between the curves is a numerical
estimate of the location of the critical point $\sigma^*(p_c(3))$.}
\label{figure32} 
\end{figure}

\begin{figure}[t!]
\centering\hfill
\input figuresigma32.tex
\caption{Numerical estimates of the location of the
critical curve $\sigma^*(p)$ as a function of $p$ 
for $(d,s)=(3,2)$
are indicated by the data points ($\bullet$).  The solid
curve is an interpolation curve drawn through the data
points.  }
\label{figuresigma32}
\end{figure}

If the data at $L=5$ are dropped, then a similar analysis show
that $\sigma^*(0.1) = 0.49879\pm 0.00059$.  Similarly, dropping both
$L=5$ and $L=10$ from the analysis gives 
$\sigma^*(0.1)=0.499081\pm0.00050$.
Comparing these results show that there is no improvement in the
statistical estimate by dropping data at small values of $L$, and so we
take as our best estimate the result when dropping $L=5$ from the
analysis, namely $\sigma^*(0.1) = 0.49879 \pm 0.00059$. 

The curves in figure \ref{figureQLsigma01-32} show a systematic 
drift towards the right with removing data at the smallest values
of $L$. We estimate a systematic error in the data by taking twice the
absolute difference between the estimate over all the data and
the estimate with the data at $L=5$ removed.  This gives
$\sigma^*(0.1) = 0.49879 \pm 0.00059 \pm 0.00042$
where the last error bar is an estimated systematic error.

By adding the two error bars our best estimate is obtained,
$\sigma^*(0.1) = 0.4988 \pm 0.0011$

A similar approach at $p=0$ gives the results
$\sigma^*(0) = 0.49986 \pm 0.00026$ over all the data and 
$\sigma^*(0) = 0.50003 \pm 0.00034$ if the data at $L=5$ is
ignored. This gives our best estimate
$\sigma^*(0) = 0.50003 \pm 0.00034 \pm 0.00034$ so that
by combining the error bars,
$\sigma^*(0)=0.50003 \pm 0.00068$
(consistent with the exact value for bond percolation in the
square lattice \cite{H60,K80}; see proposition \ref{prop-sigstar}(b)).

Similar analysis can be done at other values of $p$ and the
results are tabulated in table \ref{table32}.  The stated
error bar is the sum of the statistical and systematic error.
In figure \ref{figuresigma32} the results are plotted in the
$(p, \sigma)$-plane.  The critical curve varies slowly with $p$ for
small $p$, but decreases quickly for $p$ approaching
$p_c(3)$.

An interesting situation arises when $p=p_c(3)$. Simulations 
for $d=3$ and $s=2$ can be done with $p=0.24881182 
= p_3^* \approx p_c(3)$, very close to the critical point 
(the uncertainty is only in the last digit) for percolation in 
the cubic lattice ($d=3$) \cite{WZZGD13}, 
see reference \cite{LZ98}.   In
figure \ref{figure32} estimates of $Q_L(\sigma,p_c(d))$
are plotted against $\sigma$ for $L$ taking values in
$\{5,10,15,20,25,30,35\}$ for $d=3$ and $s=2$.

Minimizing $\C{E}^2(\sigma)$
over all the data gives $\sigma^*(p_3^*) = 0.2949 \pm 0.0053$ and 
if the data point at $L=5$ is dropped, then 
$\sigma^*(p_3^*) = 0.2941 \pm 0.0075$.  This gives the best
estimate $\sigma^*(p_3^*) = 0.2941 \pm 0.0075 \pm 0.0016$.
Combining the error bars give
$\sigma^*(p_3^*) = 0.2941 \pm 0.0091$.  

Notice that the numerical result for $\sigma^*(p_3^*)$ rules out the 
critical bulk percolation density at $p_c(3) = 0.24881182(2)$ in its
error bars.  Since we do not know that $\sigma^*(p)$ is 
left-continuous at $p=p_c(3)$ this result cannot be interpreted 
as evidence that $\lim_{p\to p_c(3)^-} \sigma^*(p)> \sigma^*(p_c(3))$
-- this is so in particular also because of the steepness of $\sigma^*(p)$
as $p$ approaches $p_c(3)$ from below, as seen in figure 
\ref{figuresigma32}.

Numerical simulations of the model with $d=4$ and $s=2$ 
were also done for $L\in\{5,8,10,12,15,20,25,30\}$ and a select
set of values of the bulk density $p$ approaching $p_c(4)$.
In the case that $p=0.05$ a plot of $Q_L(\sigma)$ against
$\sigma$ is similar to figure \ref{figure01-32}.  Minimizing 
$\C{E}^2(\sigma)$ gives an estimate of the critical point by 
locating the narrowest waist in the set of crossing curves.
Over all the data this gives $\sigma^*(0.05) = 0.50043\pm 0.00070$
and if the data point at $L=5$ is dropped, $\sigma^*(0.05)
=0.50070\pm 0.00067$.  Computing a systematic error as before
by doubling the absolute difference between the estimates gives
a best estimate of $\sigma^*(0.05) = 0.50070\pm 0.00067
\pm 0.00054$ and combining the error bars gives the 
$\sigma^*(0.05) = 0.5007 \pm 0.0013$.

\begin{figure}[t!]
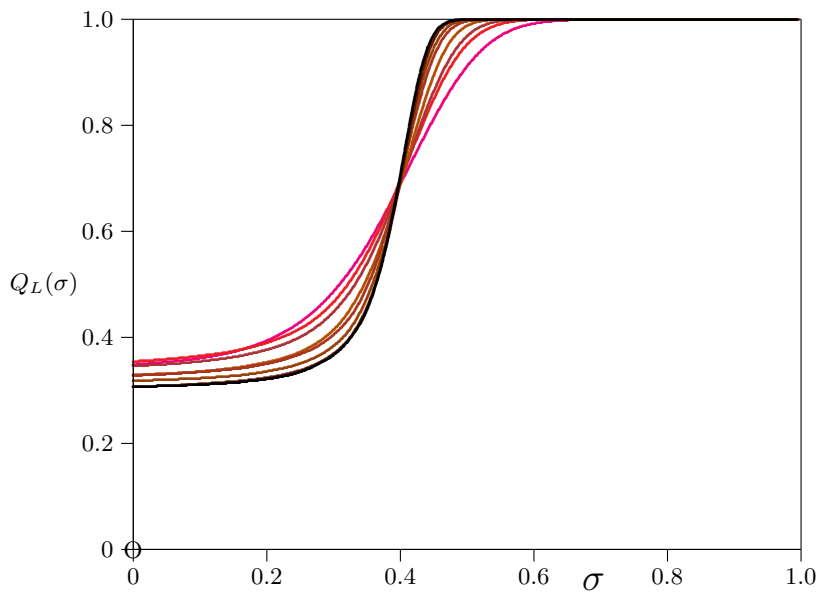

\centering\hfill
\input figure42.tex
\caption{Critical behaviour in the model for $p=p_c(4)=0.160130$
as a function of $\sigma$ for $d=4$ and $s=2$.  The curves 
are numerical estimates of the probability $Q_L(p,\sigma)$, for 
$L\in \{5,8,10,15,20,25,30,35\}$.  Each curve was computed
by taking the average of $20000$ realisations of bulk 
percolation clusters at critical density $p_c(4)$
then determining $Q_L(p,\sigma)$ as a function of $\sigma$.
The intersections of the curves for larger $L$ is a numerical
signal of critical behaviour in this model.}
\label{figure42} 
\end{figure}

\begin{figure}[b!]
\centering\hfill
\input figuresigma42.tex
\caption{Numerical estimates of the location of the
critical curve $\sigma^*(p)$ as a function of $p$ 
for $(d,s)=(4,2)$
are indicated by the data points ($\bullet$).  The solid
curve is an interpolation curve drawn through the data
points. }
\label{figuresigma42} 
\end{figure}

The estimates in table \ref{table32} for other values of
$p\in (0,p_c(4))$ were similarly estimated.  In each
case $Q_L(\sigma)$ was computed over $30000$
realisations of bulk clusters.

We have also performed simulations at $p_4^* = 0.16013$
which is the best estimate for the critical point $p_c(4)$
\cite{PZS01}. Minimizing $\C{E}^2(\sigma)$
over all the data gives $\sigma^*(p_4^*) = 0.3962 \pm 0.0031$ and 
if the data point at $L=5$ is dropped, then 
$\sigma^*(p_4^*) = 0.3977 \pm 0.0025$.  This gives the best
estimate $\sigma^*(p_3^*) = 0.3977 \pm 0.0025 \pm 0.0031$.
Combining the error bars give
$\sigma^*(p_4^*) = 0.2941 \pm 0.0056$.  

The critical curve $\sigma^*(p)$ against $p$ for $(d,s)=
(4,2)$ is plotted in figure \ref{figuresigma42}.

Similar to the case for $(d,s)=(3,2)$ the numerical data 
for $(d,s)=(4,2)$ suggest that $\sigma^*(p_4^*)
> p_c(4) \approx 0.160130\pm 0.000003$ \cite{PZS01}.
The estimate at $p_4^*$ is far larger than $p_c(4)$,
but as above this cannot be interpreted as evidence that
$\lim_{p\to p_c(4)^-} \sigma^*(p)> \sigma^*(p_c(4))$.

\section{Conclusions}
\label{sectionConc}     

In this paper we have generalised homogeneous percolation in $\Lattice$ to
a model of inhomogeneous percolation in a $d$-dimensional $\Lattice$
with an $s$-dimensional defect plane.  We showed that there is a surface
transition in this model, as proposed by references
\cite{CCRS81,DB79,DBE81,DBE00,DBL93}.  There is a critical curve $\sigma^*(p)$
for $p\in[0,1]$, with the properties that $\sigma^*(p) > p_c(d)> 0$ for $p< p_c(d)$ while
$\sigma^*(p)=0$ if $p>p_c(d)$, and that $\sigma^*$ is a strictly decreasing function 
of $p$ for $p<p_c(d)$ (see propositions \ref{prop-sigstar} and \ref{prop.cubic}).  
It follows that $\sigma^*$ is discontinuous at $p=p_c(d)$.  We expect that $\sigma^*$
is continuous for $p<p_c(d)$, but we have not yet proven this.

We have also examined the nature of the surface transition in this model.
We investigated the three phases:  the subcritical phase $\C{R}_0$ in which all clusters are
finite, the surface supercritical phase $\C{R}_L$ in which the infinite cluster stays near the
defect surface, and the bulk supercritical phase $\C{R}_H$ in which the infinite cluster 
permeates the whole lattice.
We generalised the differential inequalities of homogeneous percolation
\cite{AB87} to the model here and showed (theorem \ref{theorem5}) 
that the susceptibility $\chi^{I}(p,\sigma)$ is infinite  if and only if 
$\theta^I(p+\delta,\sigma+\delta)>0$
for all small $\delta>0$ (which happens whenever 
$(p,\sigma)$ is not in the interior of the subcritical phase $\C{R}_0$).

In section \ref{section4} we considered the exponential decay of the 
cluster size distribution in the subcritical phase.  We show that the
cluster size distribution decays exponentially 
(see theorem \ref{theoremZ6}) in the subcritical phase (i.e., when $p<p_c(d)$ 
and $\sigma<\sigma^*(p)$).   In contrast, theorems \ref{theorem5-12} and
\ref{thm999} prove subexponential decay of the cluster size distribution in 
the supercritical phases.  Our lower bound for $P^I_{p,\sigma}(|C|=n)$
in $\C{R}_H$ is $\exp(-c n^{(d-1)/d})$, the same as for supercritical
homogeneous percolation.  However, in $\C{R}_L$, where the infinite
cluster stays close to the defect plane and looks $s$-dimensional,
our lower bound (neglecting a logarithmic term) is $\exp(-c n^{(s-1)/s})$.
We expect that these lower bounds are essentially optimal in both
supercritical phases, although we have not attempted to prove the 
corresponding (more challenging) upper bounds.

We examined $\zeta^I(p,\sigma)$ and $\psi^I(p,\sigma)$, the exponential decay
rates of the cluster size distribution (where ``size'' is measured by vertices for
$\zeta^I$ and by edges for $\psi^I$).  We related these functions to
the free energy of a model of collapsing lattice animals
interacting with a defect plane.  We showed that the existence of the free energy 
in the animal model implies the existence of $\zeta^I$ and $\psi^I$, and we showed
that the percolation transition had implications about non analyticity of the free 
energy.

Finally we performed a numerical study of inhomogeneous percolation
using the Newman-Ziff algorithm.   We plotted the box crossing 
probability $Q_L(p,\sigma)$ as a function of $\sigma$ for various values
of $p\in[0,p_c(d)]$.  Table \ref{table32} shows these results for 
$(d,s)$=(3,2) and for $(d,s)$=(4,2), and includes an error bar associated 
with each estimated $\sigma^*(p)$ value.

For both $d=3$ and $d=4$ we find qualitatively similar phase boundaries.
In both cases the curves start at $\sigma_c(0)=\frac{1}{2}$ and 
decreases with increasing $p$.  On approach to $p_c(d)$ the
critical curve becomes sensitive to even small changes in $p$, and
$\sigma^*(p)$ is discontinuous at $p=p_c(d)$.   There are numerous
open question about the function $\sigma^*(p)$, regarding its continuity
and rate of decrease with increasing $p<p_c(d)$. 

\vspace{1cm}
\section*{Acknowledgements}
EJJvR and NM acknowledge support in the form of NSERC Discovery Grants from
the Government of Canada.  We also thank Geoffrey Grimmett for some helpful
email correspondence.

\bigskip

\appendix

\section{Differential Inequalities for Inhomogeneous Percolation}
\label{appendix1}

In this appendix our aim is to prove the following differential inequalities
\begin{eqnarray}
\vec{q} \cdot \nabla \theta^I (p,\sigma,\gamma)
\;\leq\; 2d\, \chi^H(p)\,\theta^I(p,\sigma,\gamma) 
(1-\gamma) \frac{\partial \theta^I}{\partial \gamma} , & & 
\label{eqn37ZZ}    
\\
%
\theta^I(p,\sigma,\gamma) 
\;\leq\; \gamma\, \frac{\partial\theta^I}{\partial \gamma}
+  \L\theta^I(p,\sigma,\gamma)\R^2
+ \chi^H(p)\, \theta^I(p,\sigma,\gamma)
\L \vec{p}\cdot \nabla \theta^I(p,\sigma,\gamma) \R  & &
\label{eqn38ZZ}    
\end{eqnarray}
where $\nabla = (\sfrac{\partial}{\partial p},
\sfrac{\partial}{\partial \sigma})$ and $p\leq \sigma$.  These are equations
\Ref{eqn37} and \Ref{eqn38}.

These inequalities are due to Aizenman and Barsky \cite{AB87} 
for homogeneous bond percolation (the proofs can be found in
reference \cite{G99} as well\setplotsymbol ({\footnotesize $\bullet$})), and we 
adapt their proofs to the inhomogeneous model.

Let $B(N)$ be the box of side-length $2N$ centered at the origin
of $\Lattice$.  Denote by $V(N)$ the vertices in $B(N)$. 
We shall prove the above differential inequalities in $B(N)$, 
and then take $N\to\infty$.

Impose periodic boundary conditions on $B(N)$ by identifying its 
opposite faces. Denote this finite lattice by $\Lattice(N)$. Denote the intersection 
of $\Lattice(N)$ and $\Lattice_0$ by $\Lattice_0(N)$.  Denote the set 
of edges in $\Lattice_0(N)$ by $\Edges_0(N)$ and the set of edges in 
$\Lattice(N)$ by $\Edges(N)$.  

As before, edges in $\Edges(N)\setminus \Edges_0(N)$ are open with the bulk 
probability $p$, and edges in $\Edges_0(N)$ are open with surface 
probability $\sigma$.

The open cluster in $\Lattice(N)$ at the vertex $x$ is $C_N(x)$, 
and the open cluster at the origin is $C_N(0) \equiv C_N$.
For $y\in V(N)$, define the susceptibility $\chi_N^I (p,\sigma;y) 
= E^I_{p,\sigma}  |C_N(y)| $. 

Introduce the ghost vertex $g$ and edges $\Edges_g (N) = 
\{g{\sim}x\vert \hbox{ $x\in V(N)$}\}$. 
Edges in $\Edges_g(N)$ are open with probability $\gamma\in (0,1)$.
Define $G_N$ to be the collection of vertices in $\Lattice(N)$
adjacent to $g$ through open edges in $\Edges_g(N)$.

For inhomogeneous percolation on $\Edges(N) \cup \Edges_g(N)$ with parameters
$(p,\sigma,\gamma)$, let $P^I_{p,\sigma,\gamma}$ and $E^I_{p,\sigma,\gamma}$ be the
corresponding probability measure and expectation. For $y\in \Lattice(N)$, we define the
associated quantities 
\begin{eqnarray}
\theta^I_N(p,\sigma,\gamma;y) 
     &\;=\;& P^I_{p,\sigma,\gamma} (C_N(y)\cap G_N \not= \emptyset)\\
\chi^I_N(p,\sigma,\gamma;y) 
     &\;=\;& E^I_{p,\sigma,\gamma} \L |C_N (y) | 
                           \indic\{ C_N(y) \cap G_N = \emptyset \}\R .
\label{eqnAA4}     
\end{eqnarray}
That is, $\theta^I_N(p,\sigma,\gamma;y)$ is the probability that there
is an open path from $y$ to $g$.  Also write
\[    \theta^I_N(p,\sigma,\gamma) \,:=\, \theta^I_N(p,\sigma,\gamma;0)
      \hspace{5mm}\mbox{and}\hspace{5mm}
      \chi^I_N(p,\sigma,\gamma) \,:=\, \chi^I_N(p,\sigma,\gamma;0)\,.
\]
Then $\theta_N^I(p,\sigma,\gamma) 
 = \theta_N^I(p,\sigma,\gamma;y)$
for all $y \in \Lattice_0(N)$, and similarly for $\chi^I_N$.  
We shall frequently simplify notation by
leaving out the arguments when there is no risk of ambiguity, 
e.g.\ $\theta_N^I \equiv \theta_N^I(p,\sigma,\gamma)$.
We have
\begin{eqnarray*}
\theta^I_N (p,\sigma,\gamma) &\;=\;& 1 - \sum_{n=1}^\infty (1-\gamma)^n\,
P^I_{p,\sigma}(|C_N| = n)     \hspace{5mm}\hbox{and}
\\
\chi^I_N (p,\sigma,\gamma) &\;=\;& \sum_{n=1}^\infty n\,(1-\gamma)^n 
P^I_{p,\sigma} (|C_N|=n) .
\end{eqnarray*}
We immediately obtain the following analogue of Equation \Ref{eqn35}:
\begin{equation} \chi^I_N (p,\sigma,\gamma) = (1-\gamma) 
\frac{\partial \theta^I_N}{\partial \gamma} .
\label{eq.lemmaA2}   
\end{equation}

The functions $\theta^I (p,\sigma,\gamma)$ and $\chi^I (p,\sigma,\gamma)$ 
are defined in the usual way for the infinite lattice $\Lattice$ with the ghost vertex
$g$ and edges $\Edges_g$, and with $\Lattice_0$ as defined before:
\begin{eqnarray*}
\theta^I(p,\sigma,\gamma) &\;=\;&
\theta^I(p,\sigma,\gamma;0)\;=\; P^I_{p,\sigma,\gamma} (|C| = \infty) \\
\chi^I(p,\sigma,\gamma) &\;=\;&
\chi^I(p,\sigma,\gamma;0) \;=\; E^I_{p,\sigma,\gamma}(|C|\indic\{C\cap G=\emptyset\})
\end{eqnarray*}
where $C$ is the cluster at the origin.  Note that for $\gamma>0$, 
$C \cap G$ is not empty with probability one when $|C|=\infty$.

The proof of the following lemma is similar to the proof
for homogeneous percolation in appendix I of reference \cite{G99}.

\begin{lemma}
For all $\gamma \in (0,1)$ and $p,\sigma\in(0,1)$,
\[ \lim_{N\to\infty} \theta^I_N (p,\sigma, \gamma) 
                                   \;=\; \theta^I(p,\sigma,\gamma),\]
and similarly,
\[ \lim_{N\to\infty} \frac{\partial \theta^I_N}{\partial p}
\;=\; \frac{\partial \theta^I}{\partial p},\q
\lim_{N\to\infty} \frac{\partial \theta^I_N}{\partial \sigma}
\;=\; \frac{\partial \theta^I}{\partial \sigma},\,\;\hbox{and}\,\;
\lim_{N\to\infty} \frac{\partial \theta^I_N}{\partial \gamma}
\;=\; \frac{\partial \theta^I}{\partial \gamma} .\]
\qed
\label{lemmaA1}   
\end{lemma}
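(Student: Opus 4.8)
The plan is to work directly from the series representations recorded just above, namely $\theta^I_N = 1-\sum_{n\ge 1}(1-\gamma)^n P^I_{p,\sigma}(|C_N|{=}n)$ and $\theta^I = 1-\sum_{n\ge 1}(1-\gamma)^n P^I_{p,\sigma}(|C|{=}n)$, and to justify interchanging $\lim_{N\to\infty}$ with $\sum_n$ in each case. The essential observation is a finite-range property: for fixed $n$, the event $\{|C_N|{=}n\}$ depends only on the states of edges within graph distance $n+1$ of the origin, and once $N\ge n+2$ the ball of radius $n+1$ about the origin in the torus $\Lattice(N)$ is isomorphic --- as a labelled graph carrying the bulk/defect edge labels inherited from $\Lattice_0$ --- to the corresponding ball in $\Lattice$. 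Since the bulk and defect densities are $p$ and $\sigma$ in both models, this gives $P^I_{p,\sigma}(|C_N|{=}n) = P^I_{p,\sigma}(|C|{=}n)$, as polynomials in $(p,\sigma)$, for every $N\ge n+2$; in particular, for each fixed $n$ the sequence $N\mapsto P^I_{p,\sigma}(|C_N|{=}n)$ and each of its $p$- and $\sigma$-derivatives is \emph{eventually constant} in $N$.

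Granting this, the first limit follows by dominated convergence for series: the summand $(1-\gamma)^n P^I_{p,\sigma}(|C_N|{=}n)$ is bounded by the summable sequence $(1-\gamma)^n$ (here $\gamma\in(0,1)$) and converges termwise. For the $\gamma$-derivative I would use \Ref{eq.lemmaA2} (equivalently, differentiate the power series in $1-\gamma$, whose radius of convergence is at least $1$) to write $\partial\theta^I_N/\partial\gamma = \sum_{n}n(1-\gamma)^{n-1}P^I_{p,\sigma}(|C_N|{=}n)$ and likewise for $\theta^I$, then apply dominated convergence with the summable dominating sequence $n(1-\gamma)^{n-1}$. For $\partial/\partial p$ (and likewise for $\partial/\partial\sigma$): since $\Lattice(N)$ is finite, $\theta^I_N$ is a polynomial in $(p,\sigma)$, so $\partial\theta^I_N/\partial p = -\sum_n (1-\gamma)^n\,\partial_p P^I_{p,\sigma}(|C_N|{=}n)$ with no convergence question; on the infinite lattice, termwise differentiation is legitimate because $P^I_{p,\sigma}(|C|{=}n)$ is a sum over $n$-vertex clusters at the origin of monomials $p^a q^b \sigma^{a'}\tau^{b'}$ in which $a+b$ --- the number of bulk edges meeting the cluster or its perimeter --- is $O(n)$, whence $|\partial_p P^I_{p,\sigma}(|C|{=}n)|\le \frac{c\,n}{p(1-p)}\,P^I_{p,\sigma}(|C|{=}n)$. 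Multiplying by $(1-\gamma)^n$ gives a locally uniformly summable series, so $\partial_p\theta^I = -\sum_n(1-\gamma)^n\,\partial_p P^I_{p,\sigma}(|C|{=}n)$, and the same estimate supplies a dominating sequence valid for all $N\ge n+2$; dominated convergence then yields $\partial\theta^I_N/\partial p\to\partial\theta^I/\partial p$.

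Most of this is routine. The one delicate point I expect to be the main obstacle is the uniform-in-$N$ control of the $p$- and $\sigma$-derivatives, i.e.\ the degree bound $a+b=O(n)$ that makes both the term-by-term differentiation of $\theta^I$ and the dominating estimate go through; this is precisely where the hypotheses $p,\sigma\in(0,1)$ (bounded away from $0$ and $1$) and $\gamma\in(0,1)$ (bounded away from $0$) are used. As indicated in the text, the scheme follows the homogeneous case in Appendix~I of \cite{G99}; the only genuinely new ingredient is the bookkeeping of which edges of the local neighbourhood of the origin lie in $\Lattice_0$ when identifying that neighbourhood in $\Lattice(N)$ with the one in $\Lattice$.
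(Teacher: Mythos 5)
Your argument is correct and is precisely the standard proof that the paper elides by pointing to Grimmett: the key observations (that $\{|C_N|{=}n\}$ is a finite-range event so $P^I_{p,\sigma}(|C_N|{=}n)$ is eventually constant in $N$, that the factor $(1-\gamma)^n$ with $\gamma>0$ supplies geometric decay for dominated convergence, and that the $O(n)$ degree bound in $p$ and $q$ yields a locally uniform dominating sequence justifying termwise differentiation on both the finite and infinite lattice) are exactly what is needed, and the only genuinely new detail in passing from the homogeneous case is the bookkeeping of $\Lattice_0$ in the local isomorphism between $\Lattice(N)$ and $\Lattice$, which you handle correctly. The one small refinement worth making explicit is that the dominating bound $\frac{cn}{p(1-p)}(1-\gamma)^n$ in fact holds for every $N$, not just $N\ge n+2$ (the degree count $a+b=O(n)$ uses only $2d$-regularity, which the torus also enjoys), so there is no need to treat small $N$ separately.
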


By equations  \Ref{eqn35} and \Ref{eq.lemmaA2} and lemma \ref{lemmaA1}, we have
\begin{equation}
\lim_{N\to\infty} \chi^I_N (p,\sigma,\gamma) \;=\; \chi^I(p,\sigma,\gamma) .
\label{eqnAchi}
\end{equation}

\subsection{Uniform bounds for vertex-dependent functions}

Let $P^H_{p,\gamma}$ and $E^H_{p,\gamma}$ be the probability measure and 
expectation for homogeneous percolation with a ghost field.
Let $\chi_N^H(p,\gamma) 
= E^H_{p,\gamma} (|C_N|\indic\{ C_N\cap G_N = \emptyset\})$ be the
corresponding susceptibility in $B(N)$.  
Also let $\chi_N^H(p) = \chi^H_N(p,0)$.

Proving the differential inequalities requires the following useful lemma. 

\begin{lemma}
Assume $p\leq \sigma$ and $y\in B(N)$.  Then 
\begin{itemize}
\item[(a)] $\theta_N^I (p,\sigma,\gamma;y)
 \;\leq\; \chi_N^H(p) \, \theta_N^I (p,\sigma,\gamma)$ for every 
$\gamma\in (0,1)$;
\item[(b)] $\chi_N^I(p,\sigma;y) \;\leq\; \chi_N^H(p)
 \,\chi_N^I(p,\sigma;0)$   (where $\chi_N^I(p,\sigma;y):=\chi_N^I(p,\sigma,0;y)$).
\end{itemize}
\label{lemmaA6}    
\end{lemma}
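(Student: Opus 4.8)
The plan is to establish both inequalities in the finite torus $\Lattice(N)$ and then pass to the limit via Lemma \ref{lemmaA1} and equation \Ref{eqnAchi}. The two ingredients are the van den Berg--Kesten (BK) inequality for the product measure $P^I_{p,\sigma,\gamma}$ and the invariance of that measure under translations of $\Lattice$ by vectors of $\Lattice_0$, which permute $V(N)$, $\Edges_0(N)$ and $\Edges_g(N)$ among themselves; by the latter, $\theta_N^I(p,\sigma,\gamma;w)=\theta_N^I(p,\sigma,\gamma)$ and $\chi_N^I(p,\sigma;w)=\chi_N^I(p,\sigma;0)$ for every $w\in\Lattice_0(N)$. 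Write $\tau_N^H(y,w):=P_p^H(y\leftrightarrow w\ \text{in}\ \Lattice(N))$; by homogeneous translation invariance $\sum_{w\in V(N)}\tau_N^H(y,w)=\chi_N^H(p)$, and $\tau_N^H(y,y)=1$. The only structural fact I will use about a connection out of $y$ is its \emph{first entry into the defect plane}: if $\pi$ is a self-avoiding open path starting at $y$ and $w$ is the first vertex of $\pi$ lying in $\Lattice_0(N)$, then the portion of $\pi$ from $y$ to $w$ uses only edges of $\Edges(N)\setminus\Edges_0(N)$ (``bulk edges'', of density $p$) and, because $\pi$ is self-avoiding, is edge-disjoint from the remainder of $\pi$; so the two halves witness a disjoint occurrence ``$\circ$''. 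Throughout I may assume $y\notin\Lattice_0(N)$, the other case being trivial since $\chi_N^H(p)\ge1$.

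For part (b) (take $\gamma=0$), write $\tau_N^I(y,z)=P_{p,\sigma}^I(y\leftrightarrow z)$. Splitting a self-avoiding open $y$--$z$ path according to whether it ever meets $\Lattice_0(N)$, and taking $w$ to be its first vertex in $\Lattice_0(N)$ when it does, gives the inclusion
\[
\{y\leftrightarrow z\}\ \subseteq\ E^{\mathrm b}_{y,z}\ \cup\ \bigcup_{w\in\Lattice_0(N)}\{y\leftrightarrow w\ \text{through bulk edges}\}\circ\{w\leftrightarrow z\},
\]
where $E^{\mathrm b}_{y,z}$ is the event that $y$ and $z$ are joined by an open path lying in the bulk and avoiding $\Lattice_0(N)$. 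Summing over $z\in V(N)$: the first family contributes $E_{p,\sigma}^I|\widehat C_N(y)|$, where $\widehat C_N(y)$ is the cluster of $y$ in the graph with vertex set $V(N)\setminus\Lattice_0(N)$ and the open bulk edges; this random variable is a function of the bulk edges alone, so its expectation equals its homogeneous value and is at most $\chi_N^H(p)$. For the second family, BK followed by summation over $z$ yields $\sum_{w\in\Lattice_0(N)}\tau_N^H(y,w)\,\chi_N^I(p,\sigma;0)$ (using that bulk-only connectivity is dominated by $\tau_N^H$, and translation invariance along $\Lattice_0(N)$ for $\chi_N^I$). Since $y\notin\Lattice_0(N)$, $\sum_{w\in\Lattice_0(N)}\tau_N^H(y,w)\le\chi_N^H(p)-\tau_N^H(y,y)=\chi_N^H(p)-1$, so
\[
\chi_N^I(p,\sigma;y)\ \le\ \chi_N^H(p)+\bigl(\chi_N^H(p)-1\bigr)\chi_N^I(p,\sigma;0).
\]
Here the hypothesis $p\le\sigma$ is used: since $\sigma\ge p$ the inhomogeneous measure stochastically dominates homogeneous percolation at density $p$, hence $\chi_N^I(p,\sigma;0)\ge\chi_N^H(p)$, and the last display rearranges to $\chi_N^I(p,\sigma;y)\le\chi_N^H(p)\chi_N^I(p,\sigma;0)$. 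Letting $N\to\infty$ via equation \Ref{eqnAchi} proves (b).

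For part (a) I run the same scheme in the presence of the ghost vertex. A self-avoiding open path from $y$ to $g$ either meets $\Lattice_0(N)$, in which case the first-entry decomposition applies with $z$ replaced by $g$, or it does not, in which case it reaches $g$ through a ghost edge $w\!\sim\!g$ from some bulk vertex $w$; thus
\[
\{y\leftrightarrow g\}\ \subseteq\ \bigcup_{w\in\Lattice_0(N)}\{y\leftrightarrow w\ \text{through bulk}\}\circ\{w\leftrightarrow g\}\ \cup\ \bigcup_{w\in V(N)\setminus\Lattice_0(N)}\{y\leftrightarrow w\ \text{through bulk}\}\circ\{w\!\sim\!g\ \text{open}\}.
\]
By BK, translation invariance along $\Lattice_0(N)$, and $\tau_N^H$ dominating bulk-only connectivity, the first family contributes at most $\theta_N^I(p,\sigma,\gamma)\sum_{w\in\Lattice_0(N)}\tau_N^H(y,w)$; the second contributes at most $\gamma\sum_{w\notin\Lattice_0(N)}\tau_N^H(y,w)$, and since $\gamma=P_{p,\sigma,\gamma}^I(0\!\sim\!g\ \text{open})\le P_{p,\sigma,\gamma}^I(0\leftrightarrow g)=\theta_N^I(p,\sigma,\gamma)$, this is at most $\theta_N^I(p,\sigma,\gamma)\sum_{w\notin\Lattice_0(N)}\tau_N^H(y,w)$. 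Adding the two bounds, the $\tau_N^H$-sums combine to $\sum_{w\in V(N)}\tau_N^H(y,w)=\chi_N^H(p)$, giving $\theta_N^I(p,\sigma,\gamma;y)\le\chi_N^H(p)\,\theta_N^I(p,\sigma,\gamma)$; Lemma \ref{lemmaA1} then gives (a) in the limit $N\to\infty$.

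The summations above are routine. The step that needs care is the combinatorial justification of the two event inclusions --- precisely, that the segment up to the first visit of $\Lattice_0(N)$ consists only of bulk edges and is edge-disjoint from the rest, which is where self-avoidance of the path is used and what legitimises the appeal to BK --- together with the bookkeeping of the ``bulk, off $\Lattice_0(N)$'' term and of $\tau_N^H(y,y)=1$, which is what makes the final constant equal to $1$ rather than $2$ and is exactly where $p\le\sigma$ is genuinely needed. I expect this accounting, rather than any individual estimate, to be the main thing to get right.
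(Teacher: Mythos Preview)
Your argument is correct and follows essentially the same route as the paper: decompose a connection out of $y$ at its first entry into $\Lattice_0(N)$, apply the BK inequality, and use translation invariance along $\Lattice_0(N)$ together with $\chi_N^I(p,\sigma;0)\ge\chi_N^H(p)$. Two small remarks: the lemma is stated for the finite torus, so your closing ``$N\to\infty$'' steps are superfluous (you have already proved what is asked before taking the limit); and in (a) your handling of the ``never meets $\Lattice_0(N)$'' case --- splitting off the last ghost edge and using $\gamma\le\theta_N^I$ --- is a mild variant of the paper, which instead bounds that event by $P^H_{p,\gamma}(\tilde A_N)\le\theta_N^I(p,p,\gamma)\le\theta_N^I(p,\sigma,\gamma)$, thereby invoking $p\le\sigma$ a second time, whereas your version shows the hypothesis is actually needed only in (b).
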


\begin{proof}
Since $\chi_N^H(p) \geq 1$, it is only necessary to consider the
case that $y\not\in\Lattice_0(N)$.

\smallskip
\noindent{(a)}
Suppose that $C_N(y) \cap G_N \not= \emptyset$.  Then there is an 
open path from $y$ to a point of $G_N$.  This path either uses no 
edge of $\Edges_0(N)$, or there exists an open self-avoiding
path $\pi$ from $y$ to a vertex of $G_N$ which passes through an edge
of $\Edges_0(N)$.  In the latter case, let $z$ be the earliest point
of $\pi$ that is an endpoint of a bond in $\pi\cap \Lattice_0(N)$.  Then $z\in \Lattice_0(N)$,
and the part of $\pi$ from $y$ to
$z$ is disjoint from the part of the path from $z$ to a vertex
of $G_N$.

We formalize the above as follows.  For given fixed $y$ and for each 
$z\in\Lattice_0(N)$, define the events
\begin{itemize}
\item $\W{A}_N$ is the event that there is an open path from $y$ to
$G_N$ in $\Edges(N)\setminus \Edges_0(N)$;
\item $\W{D}_N(z)$ is the event that there is an open path from 
$y$ to $z$ in $\Edges(N)\setminus\Edges_0(N)$;
\item $D^*_N(z)$ is the event that there is an open path from
$z$ to $G_N$ in $B(N)$.
\end{itemize}
Observe that 
\begin{equation}
    \label{eq.ppthth}
     P^I_{p,\sigma,\gamma} (\W{A}_N) \;=\; P_{p,\gamma}^H(\W{A}_N) \; \leq\;
    \theta_N^I(p,p,\gamma) \;\leq\; \theta_N^I(p,\sigma,\gamma) 
    \hspace{5mm}\hbox{(since $p\leq \sigma$)} \,.
\end{equation}

Using standard percolation notation, 
$\W{D}_N(z) \circ D^*_N(z)$ is the event that $\W{D}_N(z)$ and
$D^*_N(z)$ \textit{occur disjointly}---that is, there exist two disjoint
sets of open edges such that the first set guarantees occurrence of
$\W{D}_N(z)$ and the second set guarantees occurrence of
$D^*_N(z)$.  

We then observe that  
\begin{equation}
  \label{eq.AUDD}
   \LC  C_N(y) \cap G_N \not= \emptyset \RC \subset
  \W{A}_N \cup \bigcup_{z\in\Lattice_0(N)} \L \W{D}_N(z) \circ D^*_N(z) \R
\end{equation}
since occurrence of of the event $\LC  C_N(y) \cap G_N \not= \emptyset \RC$
implies that either $\W{A}_N$ occurs or $\L \W{D}_N(z) \circ D^*_N(z) \R$
occurs for some $z\in \Lattice_0(N)$. 

From Equations \Ref{eq.AUDD} and \Ref{eq.ppthth} and the BK Inequality \cite{G99}, we see that
\begin{eqnarray*} 
& & \hspace{-1cm}
P^I_{p,\sigma,\gamma}\L  C_N(y) \cap G_N \not= \emptyset \R \\
&\;\leq\;& P^I_{p,\sigma,\gamma} \L \W{A}_N \R
     \;+ \sum_{z\in \Lattice_0(N)} P^I_{p,\sigma,\gamma}
         \L \W{D}_N(z) \circ D^*_N(z)  \R \\
&\;\leq\;& \theta^I_N (p,\sigma,\gamma) 
     \;+ \sum_{z\in\Lattice_0(N)}
        P^I_{p,\sigma,\gamma} ( \W{D}_N(z) ) \,
           P^I_{p,\sigma,\gamma} ( D^*_N(z) ) \\
&\;\leq\;&  \theta^I_N (p,\sigma,\gamma) 
     \;+ \sum_{z\in\Lattice_0(N)}
        P^H_{p} ( z\in C_N(y) ) \,
           P^I_{p,\sigma,\gamma} ( C_N(z) \cap G_N \not=\emptyset ) \\
&\;=\;&  \L 1 \;+ \sum_{z\in\Lattice_0(N)}
        P^H_{p} ( z\in C_N(y) )  \R \theta^I_N (p,\sigma,\gamma)  \\
&\;\leq\;&  \chi_N^H(p)\, \theta^I_N (p,\sigma,\gamma)  .
\end{eqnarray*}

\noindent{(b)}  Fix $y\in \Lattice(N)\setminus \Lattice_0(N)$.  
Let $\W{D}_N(z)$ be defined as in part (a).   For $w\in B(N)$, let 
$\LC y \leftrightarrow_N w \RC$ denote the event
that $y$ and $w$ are connected in $B(N)$ by a path of open edges.

Similarly to the proof of part (a), we see that for each $w\in B(N)$ 
\begin{eqnarray}
P^I_{p,\sigma}\L y \leftrightarrow_N w \R
&\;\leq\;&  P^I_{p,\sigma} \L \W{D}_N (w) \R
    \; + \sum_{z\in \Lattice_0(N)} P^I_{p,\sigma}
         \L \W{D}_N(z) \circ \LC z\leftrightarrow_N w\RC  \R 
    \label{eq.partb} \\
&\;\leq\;&  P^I_{p,\sigma} \L \W{D}_N (w) \R  
     \; + \sum_{z\in \Lattice_0(N)} 
      P^I_{p,\sigma} \L \W{D}_N(z) \R \,
        P^I_{p,\sigma} \L \LC z\leftrightarrow_N w\RC  \R 
 \nonumber \\
&\;\leq\;&  P^H_{p} \! \L \LC y \leftrightarrow_N w \RC \R
     \, + \!\!\! \sum_{z\in \Lattice_0(N)} \!
      P^H_{p} \L \LC y \leftrightarrow_N z \RC \R \,\!
        P^I_{p,\sigma} \L \LC z\leftrightarrow_N w\RC  \R .
\nonumber
\end{eqnarray}


Since $p\leq \sigma$, we have $\chi_N^H(p) \leq \chi_N^I(p,\sigma;0) =
\chi^I_N(p,\sigma;z)$ for every $z\in \Lattice_0(N)$.  Using this and
summing equation \Ref{eq.partb}  over $w$, we obtain
\begin{eqnarray*}
& & \hspace{-18mm}
\chi_N^I(p,\sigma;y) \;=\; \sum_{w\in B(N)} 
                     P^I_{p,\sigma}\L y \leftrightarrow_N w \R \\ 
& &  \hspace{-2cm}
   \;\leq\;  \sum_{w\in B(N)}    P^H_{p} \! \L \LC y \leftrightarrow_N w \RC \R
     + \sum_{w\in B(N)} \sum_{z\in \Lattice_0(N)} 
       P^H_{p} \L \LC y \leftrightarrow_N z \RC \R \,
        P^I_{p,\sigma} \L \LC z\leftrightarrow_N w\RC \R \\
& &  \hspace{-2cm}
   \;=\;\chi_N^H(p) \;+ \sum_{z\in\Lattice_0(N)}
        P^H_{p} \L \LC y \leftrightarrow_N z \RC \R \,
          \chi_N^I(p,\sigma;z) \\
& & \hspace{-2cm}
  \leq \L 1+  \sum_{z\in\Lattice_0(N)}
        P^H_{p} \L \LC y \leftrightarrow_N z \RC \R \R 
          \chi_N^I(p,\sigma;0)   \;\leq\; \chi_N^H (p)\,\chi^I_N(p,\sigma;0) .
\end{eqnarray*}
This completes the proof.
\end{proof}

\subsection{The first differential inequality}

The first differential inequality is defined in terms of 
$\theta_N^I \equiv \theta^I_N(p,\sigma,\gamma)$ as follows.

\begin{lemma}
For $p,\sigma,\gamma\in(0,1)$, let
$\vec{p} = (p,\sigma)$ and $\nabla \equiv \L \sfrac{\partial}{\partial p},
\sfrac{\partial}{\partial \sigma}\R$. Define $\vec{q} = \vec{1}-\vec{p}
= (1-p,1-\sigma)$.  If $p \leq \sigma$, then  
we have 
\[\vec{q} \cdot \nabla \theta_N^I
\;\leq\; 2d\, \chi^H_N(p)\,
(1-\gamma) \,\theta^I_N\,\frac{\partial \theta_N^I}{\partial \gamma} .
\]
\label{lemmaA7}   
\end{lemma}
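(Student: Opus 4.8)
The plan is to adapt the Aizenman--Barsky argument of \cite{G99} to the inhomogeneous box model, proving the inequality in $\Lattice(N)$ (it then passes to the infinite lattice via Lemma~\ref{lemmaA1}). Since $\Lattice(N)$ and $\Edges_g(N)$ are finite, $\theta^I_N(p,\sigma,\gamma)$ is a polynomial in $(p,\sigma,\gamma)$, so all partial derivatives exist. Let $A$ be the increasing event $\{C_N(0)\cap G_N\neq\emptyset\}$. By Russo's formula, and because every bulk edge is open with probability $p$, every surface edge with probability $\sigma$, and every ghost edge with probability $\gamma$,
\[
\frac{\partial\theta^I_N}{\partial p}=\sum_{e\in\Edges(N)\setminus\Edges_0(N)}P^I_{p,\sigma,\gamma}(e\hbox{ pivotal for }A),\qquad
\frac{\partial\theta^I_N}{\partial\sigma}=\sum_{e\in\Edges_0(N)}P^I_{p,\sigma,\gamma}(e\hbox{ pivotal for }A),
\]
and likewise $\partial\theta^I_N/\partial\gamma$ is the sum of $P^I_{p,\sigma,\gamma}(e\hbox{ pivotal for }A)$ over $e\in\Edges_g(N)$. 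Writing $p(e)$ for $p$ or $\sigma$ according as the lattice edge $e$ is bulk or surface, and using that $\{e\hbox{ pivotal}\}$ does not depend on the state of $e$, this yields
\[
\vec{q}\cdot\nabla\theta^I_N=\sum_{e\in\Edges(N)}(1-p(e))\,P^I_{p,\sigma,\gamma}(e\hbox{ pivotal for }A)=\sum_{e\in\Edges(N)}P^I_{p,\sigma,\gamma}(e\hbox{ closed and pivotal for }A).
\]

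Next I would analyse the event that a lattice edge $e=\langle x,y\rangle$ is closed and pivotal for $A$. On this event $A$ fails, so $C_N(0)\cap G_N=\emptyset$; moreover opening $e$ makes $A$ occur, which forces exactly one endpoint of $e$ (say $x$) to lie in $C_N(0)$, with $y\notin C_N(0)$ and $C_N(y)\cap G_N\neq\emptyset$. Splitting according to which endpoint lies in $C_N(0)$ and then conditioning on $\{C_N(0)=S\}$ for a connected lattice set $S$ with $0,x\in S$ and $y\notin S$, one observes: $\{C_N(0)=S\}$ is measurable with respect to the lattice edges incident to $S$ and forces $e$ (a boundary edge of $S$) to be closed; conditionally, the ghost edges at the vertices of $S$ are closed with probability $(1-\gamma)^{|S|}$; and $\{C_N(y)\cap G_N\neq\emptyset\}$ involves only edges with no endpoint in $S$ (since $C_N(y)$ cannot meet $S$), so it is independent of $\{C_N(0)=S\}$ and of those ghost edges, and has probability at most $\theta^I_N(p,\sigma,\gamma;y)$ by monotonicity. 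Summing over $e$ while accounting for both endpoints, i.e.\ over ordered adjacent pairs $(x,y)$, and over $S$,
\[
\vec{q}\cdot\nabla\theta^I_N\;\leq\;\sum_{S\ni 0}(1-\gamma)^{|S|}\,P^I_{p,\sigma,\gamma}(C_N(0)=S)\sum_{x\in S}\ \sum_{y\sim x,\ y\notin S}\theta^I_N(p,\sigma,\gamma;y).
\]

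Finally, this is where the hypothesis $p\leq\sigma$ enters: by Lemma~\ref{lemmaA6}(a), $\theta^I_N(p,\sigma,\gamma;y)\leq\chi^H_N(p)\,\theta^I_N(p,\sigma,\gamma)$ uniformly in $y$. Since each $x$ has at most $2d$ neighbours, the inner double sum is at most $2d\,|S|\,\chi^H_N(p)\,\theta^I_N(p,\sigma,\gamma)$, hence
\[
\vec{q}\cdot\nabla\theta^I_N\;\leq\;2d\,\chi^H_N(p)\,\theta^I_N(p,\sigma,\gamma)\sum_{S\ni 0}|S|\,(1-\gamma)^{|S|}\,P^I_{p,\sigma,\gamma}(C_N(0)=S)\;=\;2d\,\chi^H_N(p)\,\theta^I_N\,\chi^I_N(p,\sigma,\gamma),
\]
and $\chi^I_N(p,\sigma,\gamma)=(1-\gamma)\,\partial\theta^I_N/\partial\gamma$ by \Ref{eq.lemmaA2}, which is the claimed inequality. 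I expect the main obstacle to be the bookkeeping in the middle paragraph: verifying that, conditioned on $\{C_N(0)=S\}$, the factor $(1-\gamma)^{|S|}$ and the ghost-connection event for $y$ genuinely factorize, and that the constraints imposed by "$e$ closed and pivotal" are exactly those listed while the two endpoint cases are exhaustive and disjoint. This is routine but must be carried out carefully; the inequality $p\leq\sigma$ is used only through Lemma~\ref{lemmaA6}(a).
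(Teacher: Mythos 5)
Your proposal is correct and follows essentially the same route as the paper: Russo's formula for $\vec{q}\cdot\nabla\theta^I_N$, a decomposition over the pivotal edge and the cluster $C_N(0)=S$, the uniform bound $\theta^I_N(\cdot;y)\leq\chi^H_N(p)\,\theta^I_N$ from Lemma~\ref{lemmaA6}(a) (the only place $p\leq\sigma$ is used), and the identification of $\sum_S|S|(1-\gamma)^{|S|}P(C_N=S)$ with $\chi^I_N=(1-\gamma)\,\partial\theta^I_N/\partial\gamma$. The only presentational difference is that you apply Russo's formula directly in the ghost-augmented model and make the factor $(1-\gamma)^{|S|}$ explicit, whereas the paper first fixes $G_N=\Gamma$, applies Russo's formula to lattice edges only, averages over $\Gamma$, and then carries the event $\{C_N\cap G_N=\emptyset\}$ through as an indicator; both give $\chi^I_N$ in the end.
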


\begin{proof}
The proof is similar to the proof for homogeneous percolation
(see for example reference \cite{G99}) and proceeds by applying
Russo's formula to the event $\{ C_N \cap G_N \not= \emptyset\}$, 
conditioned on $G_N$. 

Let $\Gamma$ be a realisation of $G_N$, 
i.e.\ a subset of vertices of $B(N)$ .
The event $A_N (\Gamma) = \{ C_N \cap \Gamma \not=\emptyset\}$
is increasing. Hence by Russo's formula \cite{G99},
\begin{eqnarray}
\sfrac{\partial }{\partial p} P_{p,\sigma}^I (A_N(\Gamma))
&\;=\;& \sum_{e\in\Edges(N)\setminus \Edges_0(N)}
P_{p,\sigma}^I (\hbox{$e$ is pivotal for $A_N(\Gamma)$}),
\label{eqn38A}   
\\
\sfrac{\partial }{\partial \sigma} P_{p,\sigma}^I (A_N(\Gamma))
&\;=\;& \sum_{e\in\Edges_0(N)}
P_{p,\sigma}^I (\hbox{$e$ is pivotal for $A_N(\Gamma)$}).
\label{eqn38B}   
\end{eqnarray}

First consider equation (\ref{eqn38A}).  
An edge $e=x{\sim}y$ is pivotal for  $A_N (\Gamma)$
if and only if the following all occur in $\Edges(N) \setminus \{e\}$:
(1) there is no open path from the origin
to $\Gamma$, (2) exactly one of $x$ and $y$ is joined to the
origin by an open path, and (3) the other vertex is joined
to a vertex of $\Gamma$ by an open path.  Hence, 
\begin{eqnarray*}
& &(1-p)\sfrac{\partial }{\partial p} P_{p,\sigma}^I (A_N(\Gamma)) \\
&\;=\;& \sum_{e\in\Edges(N)\setminus \Edges_0(N)}
P_{p,\sigma}^I (\hbox{$e$ is closed})
P_{p,\sigma}^I (\hbox{$e$ is pivotal for $A_N(\Gamma)$}) \\
&\;=\;& \sum_{x{\sim}y\in\Edges(N)\setminus \Edges_0(N)} P_{p,\sigma}^I
(\hbox{$x\in C_N$, $C_N \cap \Gamma = \emptyset$, 
$C_N(y)\cap \Gamma \not=\emptyset$ }).
\end{eqnarray*}
where the last summation is over all ordered pairs $(x,y)$ of vertices
such that the (undirected) edge $x{\sim}y$ is
in $\Edges(N) \setminus \Edges_0(N)$.  Put $q=1-p$
and average the left hand side of the above over $\Gamma$:
\begin{eqnarray*}
\hspace{-2.4cm}
& & q \sum_{\Gamma} P_{p,\sigma,\gamma}^I (G_N = \Gamma)\,
\sfrac{\partial }{\partial p} P_{p,\sigma}^I (A_N(\Gamma)) \\
&\;=\;& q\sfrac{\partial}{\partial p}
\LH \sum_{\Gamma} P_{p,\sigma,\gamma}^I(C_N \cap\Gamma \not=\emptyset)\,
P_{p,\sigma,\gamma}^I(G_N=\Gamma) \RH \\
&\;=\;& q\sfrac{\partial}{\partial p}
 P_{p,\sigma,\gamma}^I(C_N \cap G_N \not=\emptyset) \\
&\;=\;& q\sfrac{\partial}{\partial p} \theta_N^I(p,\sigma,\gamma) .
\end{eqnarray*}
Here it is important that the sum over $\Gamma$ has a finite number of terms.

This shows that
\begin{eqnarray}
q\sfrac{\partial}{\partial p} \theta_N^I & & \nonumber
\;=\; \sum_{x{\sim}y\in\Edges(N)\setminus \Edges_0(N)} 
P_{p,\sigma,\gamma}^I
(\hbox{$x\in C_N$, $C_N \cap G_N  = \emptyset$, 
$C_N(y)\cap G_N \not=\emptyset$ }). 
\label{eqn3.65}    
\end{eqnarray}
Observe that $C_N$ and $C_N(y)$ must be disjoint on the
right hand side.

Exactly the same set of arguments applied to equation
\Ref{eqn38B} gives (with $\tau = 1-\sigma$)
\begin{eqnarray}
\tau\sfrac{\partial}{\partial \sigma} \theta_N^I
\;=\; \sum_{x{\sim}y\in \Edges_0(N)} P_{p,\sigma,\gamma}^I
(\hbox{$x\in C_N$, $C_N \cap G_N  = \emptyset$, 
$C_N(y)\cap G_N \not=\emptyset$ }). 
\label{eqn3.66}    
\end{eqnarray}
Adding the last two equations together then produces
\begin{eqnarray}
\vec{q} \cdot \nabla \theta_N^I
\;=\; \sum_{x{\sim}y\in\Edges(N)} P_{p,\sigma,\gamma}^I
(\hbox{$x\in C_N$, $C_N \cap G_N  = \emptyset$, 
$C_N(y)\cap G_N \not=\emptyset$ }). 
\label{eqn3.67}    
\end{eqnarray}

The right hand side of equation \Ref{eqn3.67}  must be bounded next.  This is done by
conditioning  on the cluster at the origin.  The 
last equation becomes
\begin{eqnarray}
& & 
 \vec{q} \cdot \nabla \theta_N^I 
\label{eqn39A}   
\\
& &
\;=\;  \sum_{x{\sim}y}\LH \sum_{\Xi}  
P_{p,\sigma}^I(\hbox{$C_N = \Xi$})\,
P_{p,\sigma,\gamma}^I(\hbox{$C_N \cap G_N  = \emptyset$, 
$C_N(y)\cap G_N \not=\emptyset$ } \vert \hbox{$C_N = \Xi$}) \RH 
\nonumber
\end{eqnarray}
where the outer sum is over ordered pairs $(x,y)$ such that $x{\sim}y\in
\Edges(N)$, and the inner sum is over all connected graphs $\Xi$ 
containing $\{0,x\}$ and not containing $y$.

Conditioned on $C_N=\Xi$, the events $C_N \cap G_N  = \emptyset$ 
and $C_N(y)\cap G_N \not=\emptyset$ are independent (the first 
depends only on vertices of $\Xi$, and the second depends only 
on vertices and edges outside $\Xi$).
Hence
\begin{eqnarray*}
& & \hspace{-1.5cm}
 P_{p,\sigma,\gamma}^I (\hbox{$C_N \cap G_N  = \emptyset$, 
$C_N(y)\cap G_N \not=\emptyset$ } \vert \hbox{$C_N = \Xi$}) \\
& &\hspace{-1cm}
\;=\; P_{p,\sigma,\gamma}^I 
       (\hbox{$C_N \cap G_N = \emptyset$}\vert \hbox{$C_N = \Xi$}) \,
P_{p,\sigma,\gamma}^I 
(\hbox{$C_N(y) \cap G_N \not=\emptyset$} \vert \hbox{$C_N = \Xi$}) .
\end{eqnarray*}
The condition $C_N = \Xi$ in the last factor restricts the set of possible open
paths from $y$ to a vertex in $G_N$ (since $y\not\in C_N$).  Hence
\[   
 P_{p,\sigma,\gamma}^I 
(\hbox{$C_N(y) \cap G_N \not=\emptyset$} \vert \hbox{$C_N = \Xi$}) 
\;\leq\; P_{p,\sigma,\gamma}^I (\hbox{$C_N(y) \cap G_N \not=\emptyset$})
\;=\; \theta_N^I(p,\sigma,\gamma; y) . \]
This shows that
\begin{eqnarray}
& & 
 \vec{q} \cdot \nabla \theta_N^I
\nonumber 
\\
& &
\;\leq\; \sum_{x{\sim}y}\LH \sum_{\Xi}
P_{p,\sigma}^I(\hbox{$C_N = \Xi$})\,
P_{p,\sigma,\gamma}^I 
       (\hbox{$C_N \cap G_N = \emptyset$}\vert \hbox{$C_N = \Xi$})\,
\theta_N^I(p,\sigma,\gamma; y)  \RH \nonumber \\
& &
\;=\; \sum_{x{\sim}y}\LH 
P_{p,\sigma,\gamma}^I (\hbox{$x\in C_N$, $y\not\in C_N$, 
                        $C_N \cap G_N = \emptyset$}) \,
\theta_N^I(p,\sigma,\gamma; y) \RH
\nonumber
\\
& &
\;\leq\; \chi^H_N(p)\,\theta_N^I(p,\sigma,\gamma) 
\sum_{x{\sim} y}
P_{p,\sigma,\gamma}^I (\hbox{$x\in C_N$, $y\not\in C_N$, 
                        $C_N \cap G_N = \emptyset$})
        \label{eqn39AA}  
\end{eqnarray}
where the final inequality comes from lemma \ref{lemmaA6}(a).
It remains to bound the last summation.
\begin{eqnarray*}
& & \hspace{-2cm}
\sum_{x{\sim} y}
P_{p,\sigma,\gamma}^I (\hbox{$x\in C_N$, $y\not\in C_N$, 
                        $C_N \cap G_N = \emptyset$}) \\
&\;\leq\; & 2d \sum_{x} P_{p,\sigma,\gamma}^I (\hbox{$x\in C_N$,
                        $C_N \cap G_N = \emptyset$})  \\
&\;=\;& 2d\, E_{p,\sigma,\gamma}^I \L \LV C_N \RV \, 
                 \indic\{C_N \cap G_N = \emptyset\} \R \\
&\;=\;& 2d\, \chi_N^I(p,\sigma,\gamma) \;\;\;= \;\;\;
2d\,(1-\gamma) \frac{\partial \theta_N^I}{\partial \gamma}
\end{eqnarray*}
by equations  
\Ref{eqnAA4} and \Ref{eq.lemmaA2}.
Putting this all together then gives
the desired inequality. 
\end{proof}

\subsection{The second differential inequality}

The second differential inequality is the following (again writing
$\theta_N^I$ for $\theta_N^I(p,\sigma,\gamma)$).

\begin{lemma}
For $p,\sigma,\gamma \in(0,1)$ let
$\vec{p} = (p,\sigma)$ and $\nabla \equiv \L \sfrac{\partial}{\partial p},
\sfrac{\partial}{\partial \sigma}\R$. If $p \leq \sigma$, then 
\[
\theta_N^I
\;\leq\; \gamma\, \frac{\partial\theta_N^I}{\partial \gamma}
+  \L\theta_N^I\R^2
+ \chi_N^H(p)\, \theta_N^I
\L \vec{p}\cdot \nabla \theta_N^I \R .
\]
\label{lemmaA8}   
\end{lemma}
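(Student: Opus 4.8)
The plan is to adapt, to the inhomogeneous torus $\Lattice(N)$, the proof of the second Aizenman--Barsky differential inequality for homogeneous bond percolation given in \cite{AB87} (see also \cite{G99}), making the same structural modification that was used in the proof of Lemma \ref{lemmaA7}: wherever the homogeneous argument uses translation invariance to identify a ``connection to $g$ from a vertex $y$'' with $\theta_N$, we instead bound it by $\chi_N^H(p)\,\theta_N^I$ via Lemma \ref{lemmaA6}(a) --- this is where the hypothesis $p\le\sigma$ and the factor $\chi_N^H(p)$ enter. Everything takes place in the finite lattice $\Lattice(N)$ with the ghost vertex $g$ adjoined so that Russo's formula is available; no limit $N\to\infty$ is needed here (that is carried out later, for Theorem \ref{theoremA9}).

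First I would condition on the open cluster $C_N\equiv C_N(0)$ of the origin in $\Edges(N)$ (ghost edges excluded). Given $C_N$, the edges of $\Edges_g(N)$ emanating from its vertices are i.i.d.\ Bernoulli$(\gamma)$; writing $S$ for the number of them that are open, one has $\theta_N^I=P^I_{p,\sigma,\gamma}(S\ge 1)$, and from the series for $\theta_N^I$ preceding equation \Ref{eq.lemmaA2} a direct computation gives $\gamma\,\partial_\gamma\theta_N^I=P^I_{p,\sigma,\gamma}(S=1)$. Hence $\theta_N^I-\gamma\,\partial_\gamma\theta_N^I=P^I_{p,\sigma,\gamma}(S\ge 2)$, which is exactly the probability of the event $E_2$ that $0\leftrightarrow g$ and no ghost edge is pivotal for $\{0\leftrightarrow g\}$. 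It therefore suffices to show $P^I_{p,\sigma,\gamma}(E_2)\le(\theta_N^I)^2+\chi_N^H(p)\,\theta_N^I\,(\vec p\cdot\nabla\theta_N^I)$.

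To bound $P(E_2)$ I would decompose according to the pivotal edges of $\{0\leftrightarrow g\}$ that lie in $\Edges(N)$ (these are linearly ordered along any open path from $0$ to $g$), exactly as in the homogeneous proof. On the part of $E_2$ with no pivotal $\Edges(N)$-edge there is no pivotal edge at all, so the edge version of Menger's theorem produces two edge-disjoint open paths from $0$ to $g$; hence $\{0\leftrightarrow g\}\circ\{0\leftrightarrow g\}$ occurs, and by the BK inequality this part contributes at most $(\theta_N^I)^2$. On the complementary part one isolates a distinguished pivotal edge $e=\langle x,y\rangle\in\Edges(N)$, notes that $e$ is then open, and writes the configuration as a disjoint concatenation of a connection from $0$ to $x$, the edge $e$, and the remaining (doubled) connection from $y$ to $g$; applying the BK inequality, then Lemma \ref{lemmaA6}(a) to the ``$y\leftrightarrow g$'' factor, then summing over $e$ with weight $p$ (for $e\in\Edges(N)\setminus\Edges_0(N)$) or $\sigma$ (for $e\in\Edges_0(N)$) and comparing with Russo's formula for $\partial_p\theta_N^I$ and $\partial_\sigma\theta_N^I$ (cf.\ equations \Ref{eqn38A}--\Ref{eqn38B}), one arrives at the bound $\chi_N^H(p)\,\theta_N^I\,(\vec p\cdot\nabla\theta_N^I)$ for this part. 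Adding the two contributions yields the stated inequality on $\Lattice(N)$.

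The step I expect to be the main obstacle is precisely this last piece of bookkeeping: the distinguished pivotal edge and the concatenation must be chosen so that Lemma \ref{lemmaA6}(a) is invoked exactly once (producing a single factor $\chi_N^H(p)$ and converting the ``stray'' connection-to-$g$ into $\theta_N^I$), so that the residual sum over pivotal edges matches the Russo expansion of $p\,\partial_p\theta_N^I+\sigma\,\partial_\sigma\theta_N^I$ rather than a larger quantity, and so that the ``doubled connection from the origin'' alternative is treated without paying any $\chi^H$ factor so that the coefficient of $(\theta_N^I)^2$ stays equal to $1$; one must also check that after conditioning on $C_N$ the events fed into the BK inequality genuinely occur disjointly. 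As in Lemma \ref{lemmaA7}, the essential point is that every bulk reconnection that starts or ends on the defect plane can be routed through $\Lattice_0(N)$ at a controlled vertex, which is what makes Lemma \ref{lemmaA6}(a) applicable.
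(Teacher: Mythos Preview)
Your proposal is correct and follows essentially the same route as the paper's proof: the decomposition $\theta_N^I = P(S=1)+P(S\ge 2)$, the identification $P(S=1)=\gamma\,\partial_\gamma\theta_N^I$, the split of $\{S\ge 2\}$ according to whether $A_0\circ A_0$ occurs (equivalently, whether there is a pivotal lattice edge), the BK bound $(\theta_N^I)^2$ on the first piece, and on the second piece the conditioning on $C_N=\Xi$, conditional BK on $A_y\circ A_y$, and the single application of Lemma~\ref{lemmaA6}(a) to convert $\theta_N^I(p,\sigma,\gamma;y)$ into $\chi_N^H(p)\,\theta_N^I$, with the remaining sum identified via \Ref{eqn3.65}--\Ref{eqn3.66} as $\vec p\cdot\nabla\theta_N^I$. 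The only cosmetic difference is that the paper phrases the dichotomy as ``$A_0\circ A_0$ versus not'' rather than ``no pivotal $\Edges(N)$-edge versus some,'' but on $\{S\ge 2\}$ these are equivalent by Menger, exactly as you note.
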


\begin{proof}
Observe that
\begin{eqnarray}
\theta_N^I(p,\sigma,\gamma) 
&\;=\;& P_{p,\sigma,\gamma}^I(C_N \cap G_N \not=\emptyset) \nonumber \\
&\;=\;& P_{p,\sigma,\gamma}^I(\LV C_N \cap G_N \RV = 1)
  + P_{p,\sigma,\gamma}^I(\LV C_N \cap G_N \RV \geq 2) .
\label{eqnZ1}    
\end{eqnarray}
The first term in equation \Ref{eqnZ1} can calculated:
\begin{eqnarray}
P_{p,\sigma,\gamma}^I(\LV C_N \cap G_N \RV = 1)
&\;=\;& \sum_{n=1}^\infty n\gamma(1-\gamma)^{n-1} 
  \,P_{p,\sigma,\gamma}^I(\LV C_N \RV=n) \nonumber  \\
&\;=\;& \frac{\gamma\,\chi_N^I(p,\sigma,\gamma)}{1-\gamma} \nonumber \\
&\;=\;& \gamma\, \frac{\partial\theta_N^I}{\partial \gamma}
\label{eqnZ2}    
\end{eqnarray}
by equation \Ref{eq.lemmaA2}.

It remains to bound the second term in \Ref{eqnZ1}.
Define the event
\[ A_x \;=\; \{ \hbox{$x\in G_N$ or $x$ is joined to $G_N$ by an open path} \}.\]
%
%
Then $A_x \circ A_x$ is the event that there exist two distinct
vertices $v_1$ and $v_2$ in $G_N$ and two edge-disjoint open paths
joining these vertices to $x$.
If $x\in G_N$, then one these paths may be the singleton $x$.

It follows that
\begin{eqnarray}
P_{p,\sigma,\gamma}^I (\LV C_N \cap G_N \RV \geq 2)
&\;=\; P_{p,\sigma,\gamma}^I (A_0\circ A_0) & 
  \label{eqnZ1A} \\
&+ P_{p,\sigma,\gamma}^I (\hbox{$\LV C_N \cap G_N \RV \;\geq\; 2 $, and
$A_0\circ A_0$ does not occur})&   .    \nonumber
\end{eqnarray}
By the BK inequality,
\begin{equation}
P_{p,\sigma,\gamma}^I (A_0\circ A_0) \;\leq\;
  \L P_{p,\sigma,\gamma}^I (A_0) \R^2 \;=\; \L\theta_N^I(p,\sigma,\gamma)\R^2.
\label{eqnZ3}    
\end{equation}
The remaining term is the probability that $\LV C_N \cap G_N \RV \geq 2$
but there do not exist two edge-disjoint paths from the origin to distinct
vertices in $G_N$.  If this occurs, then there exists an edge $x{\sim}y$
in $\Edges(N)$ with the following properties:
\begin{itemize}
\item $x{\sim}y$ is open;
\item If $x{\sim}y$ is deleted in $\Lattice(N)$, then three events occur:
\begin{enumerate}
\item there is no open path from the origin to a vertex of $G_N$;
\item $x$ is joined to the origin by an open path;
\item the event $A_y\circ A_y$ occurs.
\end{enumerate}
\end{itemize}

The probability that a particular edge $x{\sim}y$ has these properties 
is
\[ pq^{-1}\, P_{p,\sigma,\gamma}^I
(\hbox{$x{\sim}y$ is closed, $x\in C_N$, $C_N\cap G_N =\emptyset$, $A_y\circ A_y$}) \]
if $x{\sim}y \in \Edges(N)\setminus\Edges_0(N)$;  if $x{\sim}y \in \Edges_0(N)$, then we
get the above expression with $\sigma(1-\sigma)^{-1}$ instead of $pq^{-1}$.
Therefore we obtain the bound
\[    P_{p,\sigma,\gamma}^I (\hbox{$\LV C_N \cap G_N \RV \;\geq\; 2 $, and
$A_0\circ A_0$ does not occur})   \;\leq \; S_1\,+\,S_2, \]
where 
\begin{equation}
\hspace{-1.5cm}
S_1 \,=\, pq^{-1}\, \sum_{x{\sim}y\in \Edges(N)\setminus\Edges_0(N)}
   P_{p,\sigma,\gamma}^I
(\hbox{$x\in C_N$, $C_N\cap G_N =\emptyset$, $A_y\circ A_y$}) 
\label{eqnAA12}     
\end{equation}
%
\begin{equation}
\hspace{-2.2cm}  \mbox{and}\hspace{5mm}  S_2 \,=\,
\sigma \tau^{-1}\, \sum_{x{\sim}y\in \Edges_0(N)}P_{p,\sigma,\gamma}^I
(\hbox{$x\in C_N$, $C_N\cap G_N =\emptyset$, $A_y\circ A_y$}),
\label{eqnAA13}     
\end{equation}
writing $\tau = 1-\sigma$.

Consider a summand from equation \Ref{eqnAA12} and \Ref{eqnAA13} conditioned
on $C_N=\Xi$, with $x\in\Xi$ and $y\not\in\Xi$.  Using conditional
independence of the events $A_y$ and 
$\{ C_N \cap G_N = \emptyset\}$, and the BK inequality, we obtain
\begin{eqnarray*}
& & \hspace{-2cm} P_{p,\sigma,\gamma}^I
(\hbox{$x\in C_N$, $C_N\cap G_N =\emptyset$, $A_y\circ A_y$} \vert C_N=\Xi) \\
& & \hspace{-1cm} \;=\; P_{p,\sigma,\gamma}^I
(\hbox{$C_N\cap G_N =\emptyset \vert C_N=\Xi$}) \, P_{p,\sigma,\gamma}^I
(\hbox{$A_y\circ A_y$} \vert C_N=\Xi) \\
& & \hspace{-1cm} \;\leq\; P_{p,\sigma,\gamma}^I
(\hbox{$C_N\cap G_N =\emptyset \vert C_N=\Xi$}) \, 
\L P_{p,\sigma,\gamma}^I (\hbox{$A_y \vert C_N=\Xi$}) \R^2 \\
& & \hspace{-1cm} \;\leq\; P_{p,\sigma,\gamma}^I
(\hbox{$C_N\cap G_N =\emptyset \vert C_N=\Xi$}) \, 
P_{p,\sigma,\gamma}^I (\hbox{$A_y \vert C_N=\Xi$})\,
P_{p,\sigma,\gamma}^I (\hbox{$A_y$})  \\
& & \hspace{-1cm} \;=\; P_{p,\sigma,\gamma}^I
(\hbox{$C_N\cap G_N =\emptyset$, $A_y$} \vert C_N=\Xi)\,
\theta_N^I(p,\sigma,\gamma;y) .
\end{eqnarray*}
Substitute this into equation \Ref{eqnAA12} and average over $\Xi$.  This
gives the upper bound
\[   
 S_1 \;\leq \; 
pq^{-1}\, \sum_{x{\sim}y \in \Edges(N)\setminus\Edges_0(N)}P_{p,\sigma,\gamma}^I
(\hbox{$x\in C_N$, $C_N\cap G_N =\emptyset$, $A_y$} )\,
\theta_N^I(p,\sigma,\gamma;y) \,.
\]
Next, by equation \Ref{eqn3.65} and lemma \ref{lemmaA6}(a), we obtain
(with $\theta_N^I(p,\sigma,\gamma)\equiv  \theta_N^I$)
\[ S_1 \;\leq \; pq^{-1}\, \L q\, \frac{\partial\theta_N^I }{\partial p}\R \,  
 \chi_N^H(p) \, \theta_N^I. \]
The analogous bound for equation \Ref{eqnAA13} is
\[    S_2 \;\leq \; \sigma \tau^{-1}\, 
\L \tau \, \frac{\partial \theta_N^I }{\partial \sigma}
\R  \, \chi_N^H(p) \, \theta_N^I. \]
Hence,
\begin{eqnarray*}
&P_{p,\sigma,\gamma}^I &(\LV  C_N \cap  G_N \RV \geq 2, \hbox{ and
  $A_0\circ A_0$ does not occur}) \\
&\;\leq\;&   \L p \,\frac{\partial \theta_N^I }{\partial p}
+ \sigma \,\frac{\partial \theta_N^I }{\partial \sigma}
\R \, \chi_N^H(p) \, \theta_N^I \,.  
\end{eqnarray*}
Putting this together with equations
\Ref{eqnZ1}, \Ref{eqnZ1A}, \Ref{eqnZ2} and  \Ref{eqnZ3} completes the proof of the 
desired inequality.
\end{proof}

\subsection{The final differential inequalities}

To complete the proof of the two differential inequalities \Ref{eqn37} and
\Ref{eqn38}, we take the $N\to\infty$
limit in lemmas \ref{lemmaA7} and \ref{lemmaA8}.  Using
lemma \ref{lemmaA1} and equation \Ref{eqnAchi}, the result is the following theorem.

\begin{theorem}
For $p,\sigma,\gamma\in(0,1)$, write 
$\vec{p} = (p,\sigma)$, $\nabla \equiv \L \sfrac{\partial}{\partial p},
\sfrac{\partial}{\partial \sigma}\R$, $\vec{q} = \vec{1}-\vec{p}
= (1{-}p,1{-}\sigma)$, and $\theta^I \equiv \theta^I(p,\sigma,\gamma)$.  
If $p \leq \sigma$, then 
\begin{eqnarray*}
\vec{q} \cdot \nabla \theta^I  \;\;
& \leq & \;\; 2d\, (1-\gamma) \,\chi^H(p)\, \theta^I \,
\frac{\partial \theta^I}{\partial \gamma}  \hspace{10mm}\hbox{and}  \\
   \hspace{3mm} \theta^I  
& \leq & \;\; \gamma\, \frac{\partial\theta^I}{\partial \gamma}
+  \L\theta^I\R^2
+ \chi^H(p)\, \theta^I
\L \vec{p}\cdot \nabla \theta^I \R \,.
\end{eqnarray*}
\label{theoremA9}    
\end{theorem}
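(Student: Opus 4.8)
The plan is to obtain the two inequalities simply by letting $N\to\infty$ in the finite-volume versions already established in Lemmas~\ref{lemmaA7} and~\ref{lemmaA8}. For every $N$ and all $p,\sigma,\gamma\in(0,1)$ with $p\le\sigma$, those lemmas give
\[
\vec q\cdot\nabla\theta_N^I \;\le\; 2d\,(1-\gamma)\,\chi_N^H(p)\,\theta_N^I\,\frac{\partial\theta_N^I}{\partial\gamma}
\qquad\hbox{and}\qquad
\theta_N^I \;\le\; \gamma\,\frac{\partial\theta_N^I}{\partial\gamma} + \bigl(\theta_N^I\bigr)^2 + \chi_N^H(p)\,\theta_N^I\,\bigl(\vec p\cdot\nabla\theta_N^I\bigr),
\]
so it suffices to check that each quantity appearing here converges to its infinite-volume counterpart and then to use the elementary fact that a non-strict inequality between finitely many convergent real sequences is inherited by their limits.

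All the required convergences are in hand. Lemma~\ref{lemmaA1} supplies $\theta_N^I\to\theta^I$ together with the convergence of each of $\partial_p\theta_N^I$, $\partial_\sigma\theta_N^I$ and $\partial_\gamma\theta_N^I$ to the corresponding derivative of $\theta^I$; hence $\vec q\cdot\nabla\theta_N^I\to\vec q\cdot\nabla\theta^I$ and $\vec p\cdot\nabla\theta_N^I\to\vec p\cdot\nabla\theta^I$. Equation~\Ref{eqnAchi}, together with \Ref{eqn35} and \Ref{eq.lemmaA2}, is consistent with $\partial_\gamma\theta_N^I\to\partial_\gamma\theta^I$. The one extra input is $\chi_N^H(p)\to\chi^H(p)$ in $[0,\infty]$, which is the homogeneous analogue of \Ref{eqnAchi}: it follows either from the homogeneous counterpart of Lemma~\ref{lemmaA1} (see Grimmett~\cite{G99}, Appendix~I) or directly from monotonicity of $\chi_N^H(p)$ in $N$. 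When $p<p_c(d)$ one has $\chi^H(p)<\infty$, so each product on the right-hand sides above is a product of finitely many convergent real sequences and converges to the product of the limits; letting $N\to\infty$ in the two displayed inequalities then yields exactly the two asserted inequalities.

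There is no serious obstacle left, since the real work of this appendix lies in the finite-volume estimates of Lemmas~\ref{lemmaA6}--\ref{lemmaA8}; the only point deserving a word of care is the degenerate case $\chi^H(p)=\infty$ (that is, $p\ge p_c(d)$), where the right-hand sides are not finite. There both inequalities hold trivially: for $\gamma\in(0,1)$ one has $\theta^I(p,\sigma,\gamma)>0$ and $\partial\theta^I/\partial\gamma>0$, and also $\vec p\cdot\nabla\theta^I>0$ (a single edge of $\Edges_0$ meeting the origin is pivotal for the event $\{0\leftrightarrow g\}$ with positive probability, so $\partial_\sigma\theta^I>0$), whence both right-hand sides equal $+\infty$. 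Combining the two regimes gives the theorem for all $p\le\sigma$ in $(0,1)$, and the proof reduces to the bookkeeping described above.
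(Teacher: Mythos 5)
Your proof is correct and follows the paper's approach exactly: the paper likewise obtains Theorem~\ref{theoremA9} by taking $N\to\infty$ in Lemmas~\ref{lemmaA7} and~\ref{lemmaA8}, invoking Lemma~\ref{lemmaA1} and equation~\Ref{eqnAchi} for the convergence of the finite-volume quantities. Your additional care about the convergence $\chi^H_N(p)\to\chi^H(p)$ and the trivial case $\chi^H(p)=\infty$ only fills in details that the paper leaves implicit.
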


\bibliographystyle{plain}
\bibliography{percobib}

\end{document}